\newcommand\blfootnote[1]{%
  \begingroup
  \renewcommand\thefootnote{}\footnote{#1}%
  \addtocounter{footnote}{-1}%
  \endgroup
}
\definecolor{green}{rgb}{0.0, 0.5, 0.0}
\crefname{lemma}{lemma}{lemmata}
\crefname{claim}{claim}{claims}
\crefname{theorem}{theorem}{theorems}
\crefname{proposition}{proposition}{propositions}
\crefname{corollary}{corollary}{corollaries}
\crefname{claim}{claim}{claims}
\crefname{remark}{remark}{remarks}
\crefname{definition}{definition}{definitions}
\crefname{problem}{problem}{problems}
\crefname{fact}{fact}{facts}
\crefname{question}{question}{questions}
\crefname{condition}{condition}{conditions}
\crefname{algorithm}{algorithm}{algorithms}
\crefname{assumption}{assumption}{assumptions}
\crefname{notation}{notation}{notation}
\crefname{cond}{Condition}{Conditions}
   \par\noindent{\bfseries\upshape Proof Sketch\ }%
\newtheorem{theorem}{Theorem}[section]
\newtheorem{lemma}[theorem]{Lemma}
\newtheorem{claim}[theorem]{Claim}
\newtheorem{definition}[theorem]{Definition}
\newtheorem{fact}[theorem]{Fact}
\theoremstyle{definition}
\newcommand{\eps}{\varepsilon}
\newcommand{\dtv}{D_\mathrm{TV}}
\newcommand{\Ind}{\mathds{1}}
\newcommand{\1}{\Ind}
\renewcommand{\Pr}{\operatorname*{\mathbb{P}}}
\newcommand{\E}{\operatorname*{\mathbb{E}}}
\newcommand{\from}{\leftarrow}
\newcommand{\poly}{\operatorname*{\mathrm{poly}}}
\newcommand{\DTV}{D_\mathrm{TV}}
\newcommand{\DKL}{D_\mathrm{KL}}
\renewcommand{\d}{\mathrm{d}}
\renewcommand{\tilde}{\widetilde}
\renewcommand{\hat}{\widehat}
\def\R{\mathbb R}
\def\Z{\mathbb Z}
\newcommand{\cA}{\mathcal{A}}
\newcommand{\cE}{\mathcal{E}}
\newcommand{\cN}{\mathcal{N}}
\newcommand{\cX}{\mathcal{X}}
\newcommand\snorm[2]{\left\| #2 \right\|_{#1}}
\def\d{\mathrm{d}}
\def\colorful{0}
\newcommand{\inote}[1]{\footnote{{\bf [Ilias: {#1}\bf ] }}}
\newcommand{\dnote}[1]{\footnote{{\bf [Daniel: {#1}\bf ] }}}
\newcommand{\anote}[1]{\footnote{{\bf [Ankit: {#1}\bf ]}}}
\newcommand{\tnote}[1]{\footnote{{\bf [Thanasis: {#1}\bf ] }}}
\newcommand{\jd}[1]{\footnote{{\bf [Jelena: {#1}\bf ] }}}
\newcommand{\inote}[1]{}
\newcommand{\dnote}[1]{}
\newcommand{\anote}[1]{}
\newcommand{\tnote}[1]{}
\newcommand{\jd}[1]{}
\title{Linear Regression under Missing or Corrupted Coordinates}
\author{
Ilias Diakonikolas\thanks{Supported by NSF Medium Award CCF-2107079, ONR award number N00014-25-1-2268, 
and an H.I. Romnes Faculty Fellowship.}\\
University of Wisconsin-Madison\\
{\tt ilias@cs.wisc.edu}\\
\and
Jelena Diakonikolas\thanks{Supported in part by the Air Force Office of Scientific Research under award number FA9550-24-1-0076, by the U.S. Office of Naval Research under contract number N00014-22-1-2348, and by the NSF CAREER Award CCF-2440563. Any opinions, findings and conclusions or recommendations expressed in this material are those of the author(s) and do not necessarily reflect the views of the U.S. Department of Defense.}\\
University of Wisconsin-Madison\\
{\tt jelena@cs.wisc.edu}\\
\and
Daniel M. Kane\thanks{Supported by NSF Medium Award CCF-2107547 and NSF Award CCF-1553288 (CAREER).}\\
University of California, San Diego\\
{\tt dakane@cs.ucsd.edu}
\and
Jasper C.H. Lee\thanks{Supported in part by NSF Medium Award CCF-2107079, NSF AiTF Award CCF-2006206, and a Croucher Fellowship for Postdoctoral Research.}\\
University of California, Davis\\
{\tt jasperlee@ucdavis.edu}\\
\and
Thanasis Pittas\thanks{Supported by NSF Medium Award CCF-2107079 and NSF Award DMS-2023239 (TRIPODS).}\\
University of Wisconsin-Madison\\
{\tt pittas@wisc.edu}\\
}
\date{\today}
\begin{document}

\maketitle

\begin{abstract}%
We study multivariate linear regression under Gaussian covariates in two settings, where data may be erased or corrupted by an adversary under a coordinate-wise budget.
In the incomplete data setting, an adversary may inspect the dataset and delete entries in up to an $\eta$-fraction of samples per coordinate—a strong form of the Missing Not At Random model.
In the corrupted data setting, the adversary instead replaces values arbitrarily, and the corruption locations are unknown to the learner.
Despite substantial work on missing data, linear regression under such adversarial missingness remains poorly understood, even information-theoretically.
Unlike the clean setting, where estimation error vanishes with more samples, here the optimal error remains a positive function of the problem parameters.
Our main contribution is to characterize this error up to constant factors across essentially the entire parameter range. Specifically, we establish novel information-theoretic lower bounds on the achievable error that match the error of (computationally efficient) algorithms.
A key implication is that, perhaps surprisingly, the optimal error in the missing data setting matches the error in the corruption setting—so knowing the corruption locations offers no general advantage.
\end{abstract}
\blfootnote{Authors are in alphabetical order.}

 \setcounter{page}{0}

\thispagestyle{empty}

\newpage

\section{Introduction}\label{sec:intro}

We study regression tasks in settings that deviate from the ideal i.i.d.~assumption, motivated by the prevalence of imperfect or incomplete data in modern machine learning. Missing values can arise from sensor failures, survey nonresponse, or data aggregation processes such as crowdsourcing \cite{vuurens2011much} and peer grading \cite{PHC+2013,KWL+2013}. Moreover, datasets may include corrupted entries, which may stem from data poisoning attacks \cite{BarNJT10-security,BigNL12-poisoning,SteKL17-certified,TraLM18-backdoor,HayKSO21}, out-of-distribution examples \cite{YZLL2024}, or biological anomalies \cite{RosPWCKZF02-Gen,PasLJD10-ancestry,LiATSCRCBFCM08-science}.
These two types of impurities (incompleteness and corruption) have given rise to two largely distinct lines of research:  \emph{learning with missing data}  and  \emph{robust statistics}.\looseness=-1

\paragraph{Incomplete data} Rubin's seminal work \cite{Rub1976a} classifies missing data mechanisms based on their dependence on observed values: if missingness is independent of the data, it is Missing Completely At Random (MCAR); if it depends only on observed values, it is Missing At Random (MAR); otherwise, it is Missing Not At Random (MNAR).
A common approach to handling missing data is \emph{imputation} of missing entries \cite{JCP+2024,LJSV2021,ABDS2023}, though not all methods rely on it. Early work in the area focused on parameter estimation \cite{Lit1992,Lit1993,LW2012,RT2010}, whereas more recent research has shifted toward prediction \cite{MPJ+2020,LJM+2020,JCP+2024}, a more challenging task since missing features during test time can prevent model evaluation. \looseness=-1

\paragraph{Corrupted data} The robust statistics, initiated in the 1960s \cite{Tuk60, Hub64}, focuses on parameter estimation from fully observed data with a small fraction of arbitrary corruptions. A recent resurgence in this area \cite{LRV2016,DKK+19} has extended classical one-dimensional results to high dimensions while preserving sample and computational efficiency (see \cite{DK2023}).
While this literature can also handle missing data by na\"ively imputing erased locations with arbitrary values, this simple strategy ignores the fact that the erasure locations are known and can potentially be leveraged by a more sophisticated algorithm. Additional related work on missing data and robust statistics is given in \Cref{app:related_work}. \looseness=-1

In this work, we study the relationship between these two types of data impurities in the context of parameter estimation in Gaussian linear regression.
\begin{definition}[Linear regression]\label{def:linear_model}
A labeled example $(X, y)$ follows the linear regression model with regressor $\beta$ and additive noise level $\sigma$ if $X \sim \cN(0, I)$ and $y = \beta^\top X + \xi$, where $\xi \sim \cN(0, \sigma^2)$.
\end{definition}
\vspace{-3pt}

We model data impurities via a coordinate-wise adversary: for each coordinate\footnote{The corruption budget is applied per coordinate rather than globally, to prevent cases where $\beta$ has a single non-zero entry that receives all corruptions.}, at most an $\eta$-fraction of samples may be modified by erasure or replacement.
Unlike much of the robust statistics literature—which treats each sample as fully clean or corrupted—partial corruptions are common, especially when coordinates correspond to different sensors or measurement methods \cite{TCS+2001}.
\Cref{def:model} is a strong MNAR variant, as missing-entry locations are chosen adversarially based on the dataset.\looseness=-1

\begin{restatable}[Coordinate-wise data incompleteness]{definition}{CORRUPTIONS}\label{def:model}
We define the coordinate-wise incomplete version $\tilde S$ of a set $S=\{ (X^{(i)}, y^{(i)}) \}_{i \in [n]}$ of $n$ labeled examples, where $X^{(i)} \in \R^d$ and $y^{(i)} \in \R$, as follows:
An adversary inspects $S$ and, for each coordinate $j \in [d]$, is allowed to erase the $j$-th coordinate of up to $\eta n$ of the $X^{(i)}$'s, replacing them with the special symbol $\perp$ to yield $\{\tilde{X}^{(i)}\}$. Similarly, it can replace up to an $\eta$-fraction of the $y^{(i)}$'s with $\perp$ to yield $\{\tilde{y}^{(i)}\}$. Then, $\tilde S = \{ (\tilde X^{(i)}, \tilde y^{(i)}) \}_{i \in [n]}$.\looseness=-1
\end{restatable}

\begin{restatable}[Coordinate-wise data corruption]{definition}{CORRUPTIONSREPLACED}\label{def:model2}
As in \Cref{def:model}, but instead of erasing coordinates (replacing them with $\perp$), the adversary may substitute arbitrary values.
Crucially, the locations of the modified values are not known to the algorithm.
\end{restatable}

The most relevant prior work on \Cref{def:model,def:model2} is \cite{LPRT2021}, which focuses on mean estimation. While linear regression has been studied under various missing data settings \cite{LW2012,CAM2020,CCD2023}, existing work typically assumes much milder missingness than the fully adversarial pattern in \Cref{def:model}. To the best of our knowledge, even for the most ``textbook`` and standard setting of Gaussian covariates used in \Cref{def:model}, linear regression under the aforementioned corruption models remains unexplored, and our understanding is limited even from an information-theoretic standpoint. Specifically, we ask: 
\begin{center}
\emph{What is the optimal estimation error achievable with unlimited samples and computational power?}
\end{center}
Due to the adversarial nature of \Cref{def:model}, the error need not vanish as the number of samples grows. Instead, the best achievable error is a positive function of $\eta, \sigma, d, \beta$ that we seek to characterize. We provide matching upper and lower bounds, with the upper bounds attained by algorithms that are both sample and computationally efficient.\looseness=-1

Another question arises from comparing \Cref{def:model,def:model2}. When coordinates are missing rather than replaced, their locations are visible and may be leveraged for improved error, thus:\looseness=-1
\vspace{-3pt}
\begin{center}
\emph{Is the optimal estimation error under missing data smaller than the error under replaced data?}
\end{center}
\vspace{-3pt}
Prior algorithms for related problems with missing data \cite{LPRT2021,HR2009}, like mean estimation under \Cref{def:model} and low-rank structure \cite{LPRT2021}, rely heavily on knowing corruption locations and only handle special cases under replacement corruption.
Surprisingly, for linear regression, the answer to the above question is negative.\looseness=-1

\subsection{Our Results}\label{sec:results}

We now present the upper and lower bounds that address the two stated questions. The upper bounds follow  from existing computationally-efficient estimators, while the lower bounds are the main contributions of this work.\looseness=-1

We start with three estimators, all applicable to the stronger corruption model of \Cref{def:model2} (i.e., replaced data) and efficient in runtime and sample complexity. The first, $\cA_1$, is simply using an estimator from existing robust statistics literature, designed for the setting where $\eps$-samples are (entirely) corrupted with $\eps=\eta d$. The second, $\cA_2$, is based on the fact that under \Cref{def:linear_model}, $\E[yX] = \beta$, thus one can estimate each coordinate of $\beta$ using existing 1D robust Gaussian mean estimators tolerant to $2\eta$ corruption (like trimmed mean or median). The third, $\cA_3$, is the trivial algorithm returning always the zero vector. See \Cref{app:intro} for details on these algorithms.
As a note, $\cA_1,\cA_2,\cA_3$ are known to generalize naturally to other covariate distributions beyond Gaussians, but for concreteness and for clear comparison with our lower bounds, we focus on the most basic setting stated below.

\begin{restatable}[Upper bounds on estimation error]{fact}{KNOWNALGORITHMS}\label{fact:known-algorithms}
    There are three polynomial-time algorithms $\cA_1,\cA_2,\cA_3$ that take as input $n = O(\poly(d/\eta))$ samples from the $d$-dimensional linear regression model with $\sigma$-additive noise (\Cref{def:linear_model}), with $\eta$-fraction of coordinate-wise corruptions according to \Cref{def:model2}, and output $\hat{\beta} \in \R^d$ that satisfies the following guarantees with probability at least $0.99$:
    \begin{itemize}[leftmargin=20pt, itemsep=0pt, topsep=5pt]
        \item Guarantee for $\cA_1$: $\|\hat{\beta} - \beta\| = O(\eta  d \sigma)$ whenever $\eta d < 0.49$.
        \item Guarantee for $\cA_2$: $\|\hat{\beta} - \beta\| = O(\eta \sqrt{d} \sqrt{ \|\beta\|^2 + \sigma^2})$, whenever $\eta < 0.24$.
        \item Guarantee for $\cA_3$: $\|\hat{\beta} - \beta\| \leq \|\beta\|$.
    \end{itemize}
\end{restatable}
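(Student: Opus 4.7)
The plan is to verify each of the three bullet points independently, since they correspond to three different estimators.

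For algorithm $\cA_1$, the key observation is that although the adversary in \Cref{def:model2} has a per-coordinate corruption budget, a simple union bound over the $d$ coordinates shows that at most an $\eta d$ fraction of the $n$ samples have \emph{any} coordinate corrupted. Thus, if we treat any sample with even one dirty coordinate as ``bad,'' the dataset fits the standard total-variation corruption model with contamination level $\eps = \eta d$. I would then invoke any off-the-shelf robust linear regression algorithm for Gaussian covariates (e.g.\ from \cite{DK2023} or Prasad--Suggala--Balakrishnan--Ravikumar-style estimators), each of which achieves $\|\hat\beta-\beta\|=O(\eps\sigma)=O(\eta d\sigma)$ whenever $\eps<1/2$, which holds by assumption.

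For algorithm $\cA_2$, the plan is to use the moment identity $\E[yX]=\beta$, which follows from $y=\beta^\top X+\xi$, independence of $\xi$ and $X$, and $\E[XX^\top]=I$. Then each coordinate $\beta_j=\E[yX_j]$ can be estimated separately by a $1$-D robust mean estimator (e.g.\ the trimmed mean or the median-of-means estimator) applied to the products $\{y^{(i)} X_j^{(i)}\}_i$. Since corruption in \emph{either} the $j$-th covariate \emph{or} the label destroys a product, the effective corruption rate per coordinate is at most $2\eta$, and the requirement $\eta<0.24$ ensures $2\eta<1/2$ as needed by standard 1-D robust mean estimators. Because $X\sim\cN(0,I)$ and $y$ is Gaussian conditional on $X$, each product $yX_j$ has variance $O(\|\beta\|^2+\sigma^2)$ and sub-exponential tails; a standard 1-D robust mean estimator therefore attains per-coordinate error $O(\eta\sqrt{\|\beta\|^2+\sigma^2})$. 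Aggregating in $\ell_2$ across the $d$ coordinates gives the claimed bound $O(\eta\sqrt{d}\sqrt{\|\beta\|^2+\sigma^2})$.

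For algorithm $\cA_3$, the guarantee is immediate: returning $\hat\beta=0$ yields $\|\hat\beta-\beta\|=\|\beta\|$ deterministically.

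The only non-trivial step is the variance/concentration bound underlying $\cA_2$: one needs that for each $j$, the product $yX_j$ has bounded second moment $O(\|\beta\|^2+\sigma^2)$ and tails strong enough that the chosen 1-D robust mean estimator achieves error scaling like $\eta\cdot\sqrt{\text{Var}}$ rather than $\sqrt{\eta}\cdot\sqrt{\text{Var}}$. This is where I would be careful, invoking sub-exponential concentration (since $y$ and $X_j$ are jointly Gaussian, $yX_j$ is a product of correlated Gaussians, whose centered version is sub-exponential), and appealing to a 1-D robust mean estimator whose error rate is linear in $\eta$ under such tails (rather than merely bounded variance).
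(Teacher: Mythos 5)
Your proposal follows the same structure as the paper's proof: $\cA_1$ reduces coordinate-wise corruption to whole-sample corruption at rate $\eps\approx\eta d$ and invokes off-the-shelf robust linear regression; $\cA_2$ uses $\E[yX_j]=\beta_j$, a $1$-D robust mean estimator on the products $y^{(i)}X^{(i)}_j$ at corruption rate $\leq 2\eta$, and aggregates in $\ell_2$; $\cA_3$ is the trivial zero estimator. This is all correct in outline.

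There is one genuine gap in the $\cA_2$ step, and you actually flagged the right spot without fully resolving it. You argue that because $yX_j$ is sub-exponential, a $1$-D robust mean estimator attains error \emph{linear} in $\eta$. That is not true for generic sub-exponential random variables: the trimmed mean (and kin) guarantee for a $\psi_1$-bounded distribution at corruption rate $\eps$ is $O\bigl(\eps\log(1/\eps)\cdot\|\cdot\|_{\psi_1}\bigr)$, i.e.\ it carries a $\log(1/\eta)$ factor (this is precisely \Cref{fact:trimmed_mean_sub_exp} in the paper). So your argument as written only yields $O\bigl(\eta\log(1/\eta)\sqrt{d}\sqrt{\|\beta\|^2+\sigma^2}\bigr)$, which is weaker than the stated bound. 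To remove the logarithm one has to exploit more structure than sub-exponential tails: $yX_j$ is specifically a product of two (correlated) Gaussians, and the paper appeals to a refined analysis for exactly this case (Theorem 9 of~\cite{diakonikolas2018robustly}) to obtain the clean $O(\eta)$ dependence. You should either incorporate that refinement or weaken the claimed bound by a $\log(1/\eta)$ factor. A further, more cosmetic, issue in $\cA_1$: the effective whole-sample corruption rate should be $\eta(d+1)$ rather than $\eta d$, since the label coordinate may also be corrupted; this only affects constants.
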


Depending on $\eta$, $\sigma$, and $\|\beta\|$, a different algorithm among $\cA_1,\cA_2,\cA_3$ achieves the lowest error. \Cref{tab:eta-main,table:medium-eta,table:small-eta} in \Cref{app:intro} present this by partitioning the parameter space into regimes based on which algorithm's error is best (\Cref{sec:partition} has the explicit construction of the partition).  In \Cref{sec:unified}, we  show that $\cA_1$, $\cA_2$, and $\cA_3$ can be unified into a single algorithm that adaptively switches between the three strategies to consistently get the best error (up to constant factors).

We now present the lower bounds (the main results of this paper) in a combined statement in \Cref{thm:combined}. Since some parameter regimes in \Cref{thm:combined} overlap, we show only the strongest lower bound in \Cref{table:combined}. A more expanded version is given in \Cref{tab:eta-main,table:medium-eta,table:small-eta} which might be more convenient to look at, but due to space constraints are moved to \Cref{app:intro}.
\Cref{thm:combined} provides information-theoretic lower bounds, applying to any estimator, independent of runtime and for any sufficiently large sample size.\looseness=-1\footnote{Although there is a dependence on $n$ in the theorem statement, it stems from technicalities (our construction deletes an $\eta$-fraction in expectation rather than exactly (as in \Cref{def:model}) and this influence vanishes as $n$ grows, so the bounds remain valid even with infinite samples. Moreover, unless $n \gg d$, it is folklore that estimating $\beta$ to constant error is impossible even with clean data.}

\begin{theorem}[Lower bounds on estimation error; combined statement]\label{thm:combined}
    The following holds for every $d \in \Z_+$, $\eta \in [0,1],\sigma\in \R_+$, $b \in \R_+$, $c \in (0,1)$, and any algorithm $\cA$ that takes as input  $\eta,\sigma,b$
    as well as $n$ labeled examples $\{ (x^{(i)},y^{(i)}) \}_{i=1}^n$ with $x^{(i)} \in \R^d$ and $y^{(i)} \in \R$ and outputs a vector in $\R^d$: There exists a $\beta \in \R^d$ with $\|\beta\|=b$ such that running $\cA$ to solve the associated linear regression problem (\Cref{def:linear_model}) in the coordinate-wise corruption model with missing data (\Cref{def:model}), %
    the output $\hat{\beta}$ satisfies the following with probability at least $\frac{1}{2}-\frac{d+1}{2}e^{-\Omega(c^2 \eta n)}$ over the randomness of the clean samples:
    \begin{enumerate}[label=(\alph*)]
        \item (Small-$\beta$ regime; cf. \Cref{thm:small-beta}) If $\|\beta\| \leq \sigma$, $\cA$ has error $\Omega(\min(\|\beta\|, \eta \sqrt{d} \sigma))$.\label{it:thm-small-beta}
        \item (Large-$\eta$ regime; cf. \Cref{thm:big-eta}) If $ \tfrac{7}{\sqrt{d}} \leq \eta \leq 1$, $\cA$ has error  $\Omega(\|\beta\|)$.\label{it:thm-big-eta}
        \item (Medium-$\eta$ regime; cf. \Cref{thm:interm-eta}) If $\frac{(2+c)}{d} \leq \eta \leq \frac{7}{\sqrt{d}}$, $\cA$ has error  $\Omega( c \eta \sqrt{d}\|\beta\|)$.\label{it:thm-interm-eta}
        \item (Small-$\eta$ regime; cf. \Cref{thm:small-eta}) If $0 \leq \eta \leq \frac{1}{d}$, $\cA$ has error  $\Omega( \min ( \eta d \sigma, \eta \sqrt{d} \|\beta\| ) )$.\label{it:thm-small-eta}
    \end{enumerate}
\end{theorem}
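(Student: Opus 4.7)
My plan is to establish each of the four lower bounds via Le Cam's two-point method, tailored to the corresponding parameter regime. For each case I would exhibit a pair of regressors $\beta^{(1)},\beta^{(2)}\in\R^d$ with $\|\beta^{(1)}\|=\|\beta^{(2)}\|=b$ and $\|\beta^{(1)}-\beta^{(2)}\|=\Delta$ equal to the claimed lower bound up to constants, together with data-dependent randomized corruption strategies $\cA_1,\cA_2$ respecting the per-coordinate $\eta$ budget of \Cref{def:model}, such that the laws of the corrupted datasets $\cA_1(S^{(1)})$ and $\cA_2(S^{(2)})$ coincide with probability $\geq 1-(d+1)e^{-\Omega(c^2\eta n)}$ (the failure event being that the budget is exceeded on some coordinate of $X$ or on the labels). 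Conditioning on success gives \emph{exact} indistinguishability, so Le Cam's inequality then implies that any estimator incurs error at least $\Delta/2$ with probability at least $\tfrac12\bigl(1-(d+1)e^{-\Omega(c^2\eta n)}\bigr)$, matching the claim. The fact that the coupling is exact rather than merely close in total variation is what makes the lower bound valid uniformly in the sample size $n$.

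For each regime I follow a common template: pick a perturbation direction $v$, then design the adversary to erase, on each affected sample, the handful of entries of $(X,y)$ that would otherwise carry the distinguishing signal $\Delta\, v^\top X$. In the small-$\beta$ regime (item~\ref{it:thm-small-beta}), I take $\beta^{(2)}-\beta^{(1)}$ to be a vector of norm $\Theta(\eta\sqrt{d}\,\sigma)$ spread uniformly across coordinates; since each coordinate-wise bias of $yX_j$ is only $\Theta(\eta\sigma)$, it can be cancelled by erasing a $\Theta(\eta)$-tail of the joint $(X_j,y)$ distribution. In the large-$\eta$ regime (item~\ref{it:thm-big-eta}), I set $\beta^{(2)}=-\beta^{(1)}$; the assumption $\eta\sqrt{d}\geq 4$ gives the adversary enough per-sample budget to mask the sign of $\beta^\top X$ via a targeted erasure of sign-revealing coordinates, so that the surviving $(\tilde X,\tilde y)$ becomes symmetric under $\beta\mapsto-\beta$. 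The medium-$\eta$ regime (item~\ref{it:thm-interm-eta}) is the interpolation: I rotate $\beta^{(1)}$ by an angle $\Theta(c\eta\sqrt{d})$ inside a random low-dimensional subspace, spreading the perturbation across $\Theta(d)$ coordinates while keeping every per-coordinate erasure rate at $O(\eta)$. In the small-$\eta$ regime (item~\ref{it:thm-small-eta}), I set $\beta^{(2)}=\beta^{(1)}+\Delta e_j$ for a single coordinate direction, with $\Delta$ chosen as $\Theta(\eta d\sigma)$ or $\Theta(\eta\sqrt{d}\,\|\beta\|)$ depending on which branch of the minimum we are targeting; the adversary then erases $X_j$ (budget $\eta$) together with a carefully chosen $\Theta(\eta)$-fraction of the $y$'s to null the induced mean shift.

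The step I expect to be the main obstacle is the coupling itself: producing a randomized corruption rule whose corrupted observation law under $\beta^{(2)}$ is exactly equal to the one under $\beta^{(1)}$, while consuming only an $\eta$-fraction of erasures per coordinate and per label. A naive attempt, such as coupling $X=X'$ and using the same noise realization, forces the adversary to erase $y$ on every single sample because the residual $(\beta^{(2)}-\beta^{(1)})^\top X$ is a continuous random variable; the correct construction must instead couple the two underlying clean samples via a reflection or reshuffle on the perturbed coordinate(s), chosen so that $y=y'$ identically and the remaining mismatch is confined to a handful of coordinates of $X$ that can be erased within budget. For parts (a) and (d) this reduces to a one-dimensional coupling on the affected coordinate, carried out by an explicit sign-reflection $X'_j=-X_j$ after symmetrically setting $\beta^{(1)}_j=-\Delta/2,\beta^{(2)}_j=\Delta/2$; for part (c) it becomes a genuinely $d$-dimensional coupling, which I would implement via a likelihood-ratio rejection scheme pairing each $\beta^{(2)}$-sample with a $\beta^{(1)}$-sample whose rotated covariates differ on $O(\eta d)$ coordinates. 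Once the coupling is in place, a Chernoff union bound over the $d+1$ per-coordinate budget constraints yields the failure probability $(d+1)e^{-\Omega(c^2\eta n)}$, and Le Cam's inequality closes out the proof.
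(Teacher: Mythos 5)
Your skeleton — two-point method, exact coupling of the two clean laws with per-coordinate disagreements bounded by $\eta$, Chernoff union bound over the $d+1$ budgets, then Le Cam — is precisely the paper's framework (\Cref{fact:coupling}). But the proof lives and dies in the coupling constructions, and the ones you sketch do not close the gap.

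The sharpest problem is the sign-reflection coupling you propose for parts (a) and (d). If $\beta^{(1)}$ and $\beta^{(2)}$ agree except on coordinate $j$ with $\beta^{(1)}_j = -\Delta/2$, $\beta^{(2)}_j = \Delta/2$, then setting $X'_j = -X_j$ and $X'_k = X_k$ for $k \neq j$ does make $y = y'$ identically, as you intend; but it also makes $\Pr[X_j \neq X'_j] = 1$ because $X_j$ is a continuous random variable. The adversary would then have to erase coordinate $j$ on \emph{every} sample, blowing the $\eta$ budget. What one actually needs is a coupling where the disagreement on coordinate $j$ occurs only with probability $O(\eta)$; the paper achieves this by comparing the \emph{conditional} laws $X \mid y$ under the two hypotheses, noting their means are close while their covariances agree, and invoking a maximal coupling (Claim~3.5, Lemma~3.4) whose disagreement probability is the total variation distance, bounded via KL and Pinsker.

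Even if you replace the sign reflection with a maximal coupling on a single coordinate, the resulting bound for part (d) is the wrong order of magnitude. Perturbing one coordinate by $\Delta$ forces the per-coordinate disagreement to be $\Theta(\Delta/\sigma)$, which is $\leq \eta$ only when $\Delta = O(\eta\sigma)$ — a factor of $d$ (or $\sqrt{d}$) weaker than the claimed $\Omega(\min(\eta d\sigma, \eta\sqrt{d}\|\beta\|))$. The paper instead spreads the perturbation across $d/2$ coordinates and designs a two-stage coupling that keeps the second half of $X$ identical, couples the second-half label contributions maximally, and then (on the rare event they differ) uses a conditional coupling on the first half via \Cref{lem:impr-coupling}; this is what lets the \emph{total} number of disagreements stay $O(E/\sigma + E\sqrt{d}/B)$ rather than scaling naively with the number of perturbed coordinates. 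Your descriptions of parts (b) and (c) ("targeted erasure of sign-revealing coordinates," "likelihood-ratio rejection scheme on a random subspace") are too vague to check, but they also miss the key device that the paper introduces in \Cref{lem:impr-coupling}: when $\sigma$ is small the conditional covariance is singular, so the hybrid-coupling TV bounds explode; the paper sidesteps this by coupling only the first $d-1$ coordinates and sacrificing the last coordinate to absorb the sum constraint, paying a $+1$ additive disagreement. This is the crux of both (b) and (c), and nothing in your proposal substitutes for it.
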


\begin{figure}[ht]
  \centering

  \begin{minipage}{\linewidth}
    \centering
\resizebox{\textwidth}{!}{%
\begin{tabular}{>{\centering\arraybackslash}m{0.35\textwidth}
                >{\centering\arraybackslash}m{0.28\textwidth}
                >{\centering\arraybackslash}m{0.37\textwidth}
                }
\toprule
\textbf{Parameter regime} & \textbf{\shortstack{Best lower bound\\ from \Cref{thm:combined}}} & \textbf{\shortstack{Best upper bound\\from \Cref{fact:known-algorithms}}} \\
\midrule

\begin{tabular}{@{}l@{\hspace{1em}}l@{}}$ \eta \in (0, \frac{0.49}{d})$, $\;\;\;\|\beta\| \in [\sqrt{d}\sigma, + \infty)$\end{tabular}
 & \raisebox{-0.4\height}{%
  \begin{tabular}{@{}l@{\hspace{2.5em}}l@{}}
    (part \ref{it:thm-small-eta}) & $\Omega(\eta d \sigma )$
  \end{tabular}
}
&  \raisebox{-0.4\height}{%
\begin{tabular}{@{}l@{\hspace{2.7em}}l@{}}
    (Alg.~$\cA_1$) & $O({\eta d \sigma})$
  \end{tabular}}\\
\midrule

\begin{tabular}[c]{@{}l@{\hspace{1em}}l@{}}$\eta\in [\frac{0.49}{d}, \frac{7}{\sqrt{d}})$, $\|\beta\| \in [ \sigma, + \infty)$\\ $\eta \in(0, \frac{0.49}{d}),$ $\;\;\,\|\beta\| \in [\sigma , \sqrt{d}\sigma)$\end{tabular}
 & \raisebox{-0.4\height}{%
  \begin{tabular}{@{}l@{\hspace{0.3em}}l@{}}
    (part \ref{it:thm-interm-eta})* & \multirow{2}{*}{$\Omega( \eta \sqrt{d}  \|\beta\| )$} \\
    (part \ref{it:thm-small-eta}) &
  \end{tabular}
}
& 
  \begin{tabular}{@{}l@{\hspace{0.6em}}l@{}}
    \multirow{2}{*}{(Alg.~$\cA_2$)} & \multirow{2}{*}{$O(\eta \sqrt{d}  \|\beta\| ))$} 
  \end{tabular}
\\
\midrule

\begin{tabular}[c]{@{}l@{\hspace{1em}}l@{}}$\eta\in[\frac{0.49}{d}, \frac{7}{\sqrt{d}})$, $\|\beta\| \in [ \eta \sqrt{d}\sigma, \sigma)$\\ $\eta \in(0, \frac{0.49}{d}),$ $\;\;\,\|\beta\| \in [\eta \sqrt{d}\sigma , \sigma)$\end{tabular}
 & \raisebox{-0.4\height}{%
  \begin{tabular}{@{}l@{\hspace{1.8em}}l@{}}
    (part \ref{it:thm-small-beta}) & \multirow{2}{*}{$\Omega(\eta \sqrt{d} \sigma)$} \\
    (part \ref{it:thm-small-beta}) &
  \end{tabular}
}
& 
  \begin{tabular}{@{}l@{\hspace{1.2em}}l@{}}
    \multirow{2}{*}{(Alg.~$\cA_2$)} & \multirow{2}{*}{$\;\;\;O( \eta \sqrt{d} \sigma )$} 
  \end{tabular}
\\
\midrule

\begin{tabular}[c]{@{}l@{\hspace{1em}}l@{}}$\eta \in [\frac{7}{\sqrt{d}}, 1],$  $\hspace{1.3em}\|\beta\| \in[ 0, + \infty)$ \\ $\eta \in (0, \frac{7}{\sqrt{d}}),\hspace{1.1em}  \|\beta\| \in [0, \eta \sqrt{d}\sigma)$ \end{tabular}
& \raisebox{-0.4\height}{%
  \begin{tabular}{@{}l@{\hspace{2.5em}}l@{}}
    (part \ref{it:thm-big-eta}) & \multirow{2}{*}{$\Omega(\|\beta\|)$} \\
    (part \ref{it:thm-small-beta}) &
  \end{tabular}
}
& \begin{tabular}{@{}l@{\hspace{2.7em}}l@{}}
    \multirow{2}{*}{(Alg.~$\cA_3$)} & \multirow{2}{*}{$O( \|\beta\|)$} 
  \end{tabular}
\\
\bottomrule
\end{tabular}%
}
\caption{Comparison of error bounds across the different possible parameter regimes. *\Cref{thm:combined}\ref{it:thm-interm-eta} requires $ \eta \in [\frac{2+c}{d},\tfrac{7}{\sqrt{d}}]$ where $c > 0$ can be any arbitrarily small absolute constant.\looseness=-1}
\label{table:combined}
\end{minipage}
\end{figure}

A few comments are in order regarding the theorem statement. First, the failure probability in the theorem statement is close to a half (instead of one) because the lower bound is obtained from a reduction to a hypothesis testing problem that asks for distinguishing between two parameter vectors $\beta, \beta'$ (see \Cref{sec:warm-up} for more details), thus a trivial algorithm that outputs each of $\beta, \beta',$ with probability $1/2$ succeeds with the same probability. Second, parts \ref{it:thm-big-eta} and \ref{it:thm-interm-eta} hold under arbitrary additive noise, not necessarily Gaussian. This is because the conclusion as stated in the theorem holds in the noiseless case ($\sigma=0$), thus by a straightforward reduction this implies hardness under any noise distribution: given an estimator for the noisy model, one could add the same noise to noiseless hard instance and then apply the estimator. Finally, in terms of the covariate distribution, \Cref{thm:combined} shows that the problem is hard even for the most restricted and basic setting of Gaussian covariates; the hardness trivially extends to any family of distributions that contains Gaussians.

We now discuss the conceptual takeaways and how \Cref{table:combined} addresses our main research questions.

\paragraph{Missing vs.~Replaced Data} The upper bounds hold under the stronger model of data replacement (\Cref{def:model2}), while the lower bounds assume the weaker model of missing data (\Cref{def:model}). Since the bounds match (up to constants) for essentially the entire range of $\eta$, $\sigma$, and $\|\beta\|$, both  models are equally hard in terms of optimal estimation error.
While this may be unsurprising for the robust statistics framework where samples are either entirely corrupted or uncorrupted, our more delicate lower bound constructions show that the fact remains true even under coordinate-wise erasure/corruption.\looseness=-1

\paragraph{Corruption Threshold for Non Trivial Estimation} In the classical robust statistics model with $\eps$-fraction of (entirely) corrupted samples, estimation becomes impossible at $\eps = 1/2$, since half the samples could come from two different models and there is no way to tell which are the inliers. \Cref{thm:combined}\ref{it:thm-big-eta} shows that the analogous transition point for \Cref{def:model,def:model2} is $\eta = \Theta(1/\sqrt{d})$. This is interesting, as there is no equally obvious explanation for this threshold in this model.

\paragraph{$\|\beta\|$-Dependence on Error is Necessary} Ignoring the regime $\eta \geq 1/\sqrt{d}$, where meaningful estimation is impossible, our results show that $\|\beta\|$-dependence in the error is unavoidable in general %
---in stark contrast to the classical robust model (with each sample either fully clean or corrupted), where optimal error is independent of $\|\beta\|$. Here is the intuition: in that setting, a variant of $\cA_2$ gives error $O(\eps(\|\beta\| + \sigma))$ (with $\eps$ corruption rate). But by shifting labels via $y \gets y - \hat{\beta}^\top X$, one reduces the regression target to $\beta - \hat{\beta}$, which has smaller norm, and can recursively apply the algorithm to eliminate the $\|\beta\|$-dependence. This technique fails under \Cref{def:model2}: even one corrupted coordinate in $X$ corrupts $\hat{\beta}^\top X$, and the transformation $y \gets y - \hat{\beta}^\top X$ can amplify the label corruption rate from $\eta$ to $\eta d$ in a single step. Our results show that no workaround exists.

\paragraph{Role of Label Corruptions} As can be inferred from the proofs, all of our lower bounds hold even if $y$ is never masked.
Thus, the difficulty in estimating $\beta$ comes only from the missing $X$-coordinates.

\paragraph{Additional Discussion}  \Cref{thm:combined}\ref{it:thm-big-eta} may extend to other activation functions. Since \Cref{thm:combined}\ref{it:thm-big-eta} and \ref{it:thm-interm-eta} hold for $\sigma=0$, they imply that for any estimator, the mean squared prediction error can be bounded below for a broad class of GLMs, such as a class of GLMs with non-decreasing Lipschitz activations, including ReLUs in particular.
See \Cref{app:discussion} for the relevant discussion.\looseness=-1

\subsection{Overview of Techniques}\label{sec:overview}
\vspace{-3pt}

The general approach for our lower bounds is the following. We first reduce estimation to a hypothesis testing problem. If no algorithm can distinguish between the regressor being $\beta$ vs.~$\beta'$, then no estimator can achieve error less than $\|\beta - \beta'\|/2$. To show testing hardness, it suffices to construct a coupling between the distributions of the labeled examples in \Cref{def:linear_model} with regressors $\beta$ and $\beta'$ and with small coordinate-wise disagreements, that is, a pair of jointly distributed labeled examples $(X,y), (X',y')$ such that $\Pr[X_i \neq X_i'] \leq \eta$ for all $i\in [d]$ and $\Pr[y \neq y'] \leq \eta$. This is because an adversary can then delete differing coordinates, making the datasets indistinguishable (cf.~\Cref{fact:coupling}). This leaves us with the (highly) non-trivial task of constructing that coupling, outlined below.\looseness=-1

\paragraph{Basic construction}  The high-level idea of the coupling construction boils down to two main steps: (1)  induce the same $y$-marginal distribution for some sufficiently distinct $\beta$ and $\beta'$, and (2) for each value of $y$, construct a masking scheme to make $X \, | \, y$ and $X' \, | \, y$ indistinguishable.
Recall that, since $X$ (and $X'$) and the linear regression noise are all Gaussian, the conditional distributions $X \, | \, y$ and $X' \, | \, y'$ are also Gaussian; denote them by $\cN(\mu, \Sigma)$ and $\cN(\mu',\Sigma')$.  
In all considered regimes of $\eta$, our choices of $\beta$ and $\beta'$ are such that the covariances $\Sigma$ and $\Sigma'$ are the same.
Then, to obtain a masking scheme that makes the two conditional distributions the same \emph{while fully-leveraging the coordinate-wise masking capabilities of the adversary}, in \Cref{lem:hybrid}, we construct a \emph{sequence} of Gaussians $N^{(0)}, N^{(1)}, N^{(2)}, \ldots, N^{(d)}$ with the same covariance matrix, which  interpolates between $X \, | \, y$ and $X' \, | \, y$. Namely,  $N^{(0)} = X \, | \, y$ and $N^{(d)} = X' \, | \, y$, and each consecutive pair $N^{(i)}, N^{(i+1)},$ $i \in \{0, \dots, d-1\},$ has means that differ over a single coordinate. 
We then construct a masking scheme for the pair $N^{(i)}$ and $N^{(i+1)}$ using a \emph{maximal coupling}\footnote{There is always a coupling (called \emph{maximal}) between $P,Q$ with $\Pr[X \neq X'] = \DTV(P,Q)$ (\Cref{fact:maximal}).} that masks only the $i^\text{th}$ coordinate, with probability $\DTV(N^{(i)}, N^{(i+1)})$ (cf.~\Cref{cl:one_step}), and that makes the pair of Gaussians indistinguishable.

\paragraph{Case $\|\beta\| \leq \sigma$}  The above basic construction is sufficient for $\|\beta\| \leq \sigma$ (\Cref{thm:combined}\ref{it:thm-small-beta}).
In this case, we choose $\beta = (r, (b/\sqrt{d}) \1_{d-1})$ and $\beta' = -\beta$, where $b \leq \eta \sqrt{d}\sigma$,  $\1_\ell$ denotes the dimension-$\ell$ vector of all ones, and $r$ is chosen to set the norms of $\beta, \beta'$ to the target value.

\paragraph{Handling the case $\sigma/\|\beta\| \to 0$ and $\eta \gg 1/\sqrt{d}$}\footnote{Henceforth, the notation $a \gg b$ indicates that where exists some sufficiently large constant $C > 0$ such that $a \geq C b$, and $\ll$ is defined similarly. For the discussion in this section, $C \in [1, 4].$} The basic construction described above is only meaningful (both in terms of the error and the proof itself) when $0 < \|\beta \| \leq \sigma$. As $\sigma/\|\beta\| \to 0$, the conditional distributions $X\,|\,y$ and $X'\,|\,y$  approach Gaussians with singular covariances, making the TV-distances between the conditionals equal to one. Thus, the distributions cannot be made indistinguishable by deleting only an $\eta < 1$ fraction of coordinates. 

Our strategy in this case is to effectively reduce the data dimension by one and treat one of the coordinates as noise.
For this discussion, fix $\|\beta\| = \sqrt{d}$, let $\beta=\1_d$ and $\beta' = -\beta$; the same ideas transfer to other values of $\|\beta\|$.
The core task again is to couple $X \, |\, y$ with $X' \,|\, y'$.
Now, given the particular form of $\beta$ and $\beta'$ and that $\beta = -\beta'$, the core task can be re-phrased as coupling 
$X \sim \cN(0,I)$ conditioned on $\sum_i X_i = t$ with $X'$ conditioned on $\sum_i X_i' = t'$ where $t = -t'$.
Since there is no noise ($\sigma = 0$), the key idea is to treat the last coordinate as ``noise'': we couple the first $d-1$ coordinates using the basic coupling method, which gives $\Pr[X_i \neq X_i'] \leq |t - t'|/d$ for $i \in [d-1]$.
We then set the final coordinate to ensure the total sums are $t$ and $t'$.
This makes the last coordinate differ with probability 1 between the two scenarios, but randomizing the coordinate order leads to coupled $X, X'$ where each coordinate differs with probability $(|t - t'| + 1)/d$.
This idea and construction is formally stated in~\Cref{lem:impr-coupling} (which gives a construction for general $t$ and $t'$ instead of only $t = -t'$).
Finally, since in expectation $|t - t'| = 2|t| = 2|y|$ is $O(\sqrt{d})$, this yields a bound of $\Omega(\|\beta\|)$ as long as $\eta \gg 1/\sqrt{d}$, leading to \Cref{thm:combined}\ref{it:thm-big-eta}.

\paragraph{Handling $1/d \ll \eta \ll 1/\sqrt{d}$} 
When $1/d \ll \eta \ll 1/\sqrt{d}$, the approach described so far fails because  $\E[|t - t'|]$ is too large, making the probability of coordinate-disagreement larger than the assumed $\eta$ upper bound. 
Instead, we construct a new hard instance with $\beta$ and $\beta'$ proportional to $ (\eps \1_{d/2}, \1_{d/2})$ and $(-\eps \1_{d/2}, \1_{d/2})$ respectively. The coupling proceeds as follows: sample the labels $y = y'$ from the distribution $\cN(0,(1+\eps^2)/d))$, which is the distribution of the labels.
Now consider coupling $t := \sum_{i=1}^{d/2} X_i$ and $t' := \sum_{i=1}^{d/2} X_i'$, which are the sums of the first half of the coordinates under each hypothesis.
From calculations, the marginals $t$ and $t'$ are both univariate Gaussians with the same variance, with difference in mean equal to $\Theta(\eps y)$.
We thus sample $t$ and $t'$ in a coupled way where $t - t' = \Theta(\eps y)$ always (this is possible as the distribution of $t$ is the same as the distribution of $t'$ shifted by $\Theta(\eps y)$).
Now it remains to couple $X \,|\, y$ and $X'\,|\,y'$, which we do by \emph{first} coupling the first half of the coordinates using the coupling between $t$ and $t'$, then coupling the second half of the coordinates based on the values of $y,t$ and $y',t'$.
More concretely, we use~\Cref{lem:impr-coupling} to couple $(X_1,\ldots,X_{d/2}) \,|\, t$ and $(X'_1,\ldots,X'_{d/2}) \,|\, t'$.
Then, we use~\Cref{lem:impr-coupling} again to couple $(X_{d/2+1},\ldots,X_d)$ and $(X'_{d/2+1},\ldots,X'_d)$ so their sums are $y - \eps t$ and $y + \eps t'$, respectively. Going through the calculations, this strategy yields an expected number of disagreeing coordinates bounded by $\Theta(1 + \eps\sqrt{d})$, which is at most $\eta d$ as long as $\eps \ll \eta\sqrt{d}$ (recall that we assume $\eta \gg 1/d$ in this paragraph).
This translates to an estimation error lower bound of $\Omega(\eta \sqrt{d}  \|\beta\|)$, proving \Cref{thm:combined}\ref{it:thm-interm-eta}.\looseness=-1

\paragraph{The case of small $\eta$}  
The remaining regime is $\eta \ll 1/d$ (\Cref{thm:combined}\ref{it:thm-small-eta}).
The (already complicated) construction in the last paragraph does not handle that case, given that the expected number of disagreeing coordinates there is $\Omega(1) \ge 1/d \gg \eta$.
For this final regime, we use a similar hard hypothesis test instance, showing the impossibility of distinguishing between $\beta = (B\1_{d/2},E \1_{d/2})/\sqrt{d/2}$ and $\beta' = (B\1_{d/2},-E\1_{d/2})/\sqrt{d/2}$ for  appropriately chosen parameters $B$ and $E$.
The main technical innovation here is a deconstruction of label values into the contribution from the first and second half of the coordinates and a more refined way to sample them in the coupling construction.
This is sketched in \Cref{sec:thm-proofs} and detailed in \Cref{app:thm-proofs}.

\subsection{Preliminaries}\label{sec:prelims}

We include only the essential preliminaries here; the full version appears in \Cref{app:prelims}.
For $(X,Y) \sim P$, we write $X$ for the marginal of $X$ and $X|(Y=y)$ for the conditional. We denote expectations as $\E_{X \sim D}[X]$. For $X,Y$ with pdfs $P_X$ and $P_Y$, we write $\dtv(X,Y)$ or $\dtv(P_X,P_Y)$ for the total variation distance, defined as $\tfrac{1}{2}\int |P_X(u) - P_Y(u) |\d u$. We write $\DKL(X \;||\; Y)$ or $\DKL(P_X \;||\; P_Y)$ for the Kullback–Leibler divergence, defined as $\DKL(P_X \;||\; P_Y) = \int_x P_X(z) \log (P_X(z)/P_Y(z))\d z$.
We use $\R_+$ for non-negative reals, $\Z_+$ for non-negative integers, and $[n] = {1,\dots,n}$. For a vector $x$, $\|x\|$ denotes its Euclidean norm. Let $I_d$ be the $d \times d$ identity matrix, and $\1_d$ the all-ones vector in $\R^d$. We use $\top$ to denote matrix transposes.
We write $a \lesssim b$ (or $O(b)$) to mean $a \leq Cb$ for an absolute constant $C > 0$, independent of the parameters involved (similarly for $\Omega(\cdot)$). We write $\tilde O$ and $\tilde \Omega$ to hide polylog factors.\looseness=-1

\begin{restatable}[see, e.g., Section 8.1.3.~in \cite{PP2008}]{fact}{FACTCONDITIONAL}\label{fact:conditional}
    If $X \sim \cN(\mu, I)$ is a multivariate Gaussian vector in $\R^d$, $\xi \sim \cN(0,\sigma^2)$ is a univariate Gaussian, and $u \in \R^d$ is a fixed (deterministic) vector, then the distribution of $X$ conditioned on $u^\top X + \xi = r$ is
        $\cN\left( r u/(\|u\|^2+ \sigma^2), \; I - u u^\top /(\|u\|^2 + \sigma^2)  \right)$.
\end{restatable}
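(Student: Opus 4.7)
The plan is to reduce this to the standard formula for conditioning one block of a jointly Gaussian vector on the other. First I would note that the stated conclusion omits any $\mu$-dependent shift in the mean, so the claim appears to be written for the centered case $\mu = 0$ (and the general $\mu$ case follows by translating $X \mapsto X - \mu$); I would either flag this as a typo or handle the shift explicitly. Either way, the proof is a direct computation.

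Concretely, I would first define $Z := u^\top X + \xi$ and observe that $(X, Z) \in \R^{d+1}$ is jointly Gaussian because $X$ and $\xi$ are independent Gaussians and $Z$ is an affine function of $(X, \xi)$. I would then compute the four blocks of the joint covariance (taking $\mu = 0$): $\mathrm{Cov}(X,X) = I$; $\mathrm{Cov}(X,Z) = \E[X(u^\top X + \xi)] = \E[XX^\top] u + \E[X]\E[\xi] = u$ using independence of $X$ and $\xi$; and $\mathrm{Var}(Z) = u^\top I u + \sigma^2 = \|u\|^2 + \sigma^2$. The marginal means are both zero.

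Next, I would invoke the standard Gaussian conditioning formula (which is the result actually cited from \cite{PP2008}): for jointly Gaussian $(X, Z)$,
\[
X \mid (Z = r) \;\sim\; \cN\!\Bigl(\mu_X + \Sigma_{XZ}\Sigma_{ZZ}^{-1}(r-\mu_Z),\; \Sigma_{XX} - \Sigma_{XZ}\Sigma_{ZZ}^{-1}\Sigma_{ZX}\Bigr).
\]
Plugging in the blocks computed above gives mean $u \cdot r/(\|u\|^2 + \sigma^2) = ru/(\|u\|^2+\sigma^2)$ and covariance $I - uu^\top/(\|u\|^2 + \sigma^2)$, matching the statement.

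I do not expect a real obstacle: the only modest subtlety is the degenerate case $u = 0$ (where conditioning on $\xi = r$ leaves $X \sim \cN(0, I)$), which the stated formula handles correctly since both correction terms vanish. Everything else is mechanical linear algebra, which is why the fact is attributed to a textbook rather than proved from scratch.
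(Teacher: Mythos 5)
Your proof is correct and is the standard argument; the paper itself gives no proof, citing this as a known fact from the Matrix Cookbook \cite{PP2008}, and your derivation via the block-conditioning formula for jointly Gaussian vectors is precisely what that reference does. Your observation about the missing $\mu$-dependent shift is also a genuine (minor) issue with the statement as written: for general $\mu$ the conditional mean should be $\mu + u(r - u^\top\mu)/(\|u\|^2+\sigma^2)$, so the displayed formula is only literally correct when $\mu = 0$. This is harmless in context, since every invocation in the paper (in \Cref{cl:helper}, \Cref{lem:impr-coupling}, and \Cref{lem:coupling_whole_thing}) takes $\mu = 0$, but flagging it was the right instinct.
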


If $P$ and $Q$ are distributions over $\cX$, a coupling $\Pi$ of $P$ and $Q$ is any distribution over $\cX \times \cX$ with marginals $P$ and $Q$ such that under $(X,Y) \sim \Pi$, the marginals are $X \sim P$ and $Y \sim Q$.

\begin{restatable}[Maximal coupling (see, e.g., \cite{Roc2024})]{fact}{FACTMAXIMAL}\label{fact:maximal}
    Let $P$ and $Q$ be distributions on some domain $\cX$.
    There exists a coupling $\Pi$ between $P$ and $Q$ such that $\Pr_{(X,Y) \sim \Pi}[X \neq Y] = \dtv\left( P ,Q \right)$.
\end{restatable}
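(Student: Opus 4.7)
The plan is to construct the coupling explicitly via the standard ``common part plus independent residuals'' decomposition. Fix a common dominating measure $\mu$ for $P$ and $Q$ (e.g., $\mu = (P+Q)/2$, with respect to which $P,Q$ admit densities $p,q$), and define the pointwise minimum $m(x) \deff \min(p(x), q(x))$. A short calculation from the definition $\dtv(P,Q) = \tfrac{1}{2}\int |p - q|\,d\mu$ gives the identity $\alpha \deff \int_{\cX} m(x)\,d\mu(x) = 1 - \dtv(P,Q)$, which is the key bookkeeping step.

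Next, I would define $\Pi$ as the following two-branch mixture. With probability $\alpha$, sample $Z$ from the density $m/\alpha$ and set $X = Y = Z$; with probability $1 - \alpha$, sample $X$ and $Y$ \emph{independently} from the residual densities $(p - m)/(1 - \alpha)$ and $(q - m)/(1 - \alpha)$, respectively. (The edge cases $\alpha \in \{0,1\}$ are handled trivially: if $\alpha = 1$ then $P = Q$ and we always use the first branch; if $\alpha = 0$ then $P$ and $Q$ are mutually singular and only the second branch occurs.) The marginals check out by direct computation: the $X$-marginal under $\Pi$ has density $\alpha \cdot (m/\alpha) + (1 - \alpha) \cdot (p - m)/(1 - \alpha) = p$, and symmetrically for $Y$, so $\Pi$ is indeed a coupling of $P$ and $Q$.

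Finally, I would verify the disagreement probability in both directions. For the upper bound, the construction forces $X = Y$ on the first branch, hence $\Pr_{\Pi}[X \neq Y] \leq 1 - \alpha = \dtv(P,Q)$. For the matching lower bound (which holds for \emph{every} coupling, confirming optimality of $\Pi$), use the elementary inequality $P(A) - Q(A) \leq \Pr[X \in A,\, Y \notin A] \leq \Pr[X \neq Y]$ for any measurable $A \subseteq \cX$ and any coupling, and take the supremum over $A$ to obtain $\dtv(P,Q) \leq \Pr[X \neq Y]$. Combining both bounds yields equality for the explicit $\Pi$ above. I do not anticipate a serious obstacle: the only mild subtlety is the measure-theoretic setup when $P$ and $Q$ are mutually singular or lack an obvious shared reference measure, which is cleanly dispatched by invoking the Lebesgue decomposition with respect to $(P+Q)/2$ and does not alter the rest of the construction.
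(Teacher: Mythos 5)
Your proof is correct and complete: the ``common part plus independent residuals'' mixture is the standard construction of a maximal coupling, the marginal verification and the identity $\int \min(p,q)\,d\mu = 1 - \dtv(P,Q)$ are handled properly, and the lower bound $\dtv(P,Q) \le \Pr[X \neq Y]$ for any coupling correctly certifies optimality. Note that the paper itself gives no proof of this fact---it is stated as known and cited to an external reference---so there is no in-paper argument to compare against; your argument is exactly the textbook one that such a reference would contain.
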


\begin{restatable}[Total variation between univariate Gaussians (see, e.g., \cite{PP2008}]{fact}{FACTTVGAUSSIANS}\label{fact:TVDGaussians}
    If $D_1 = \cN(\mu_1, \sigma^2)$ and $D_2 = \cN(\mu_1, \sigma^2)$ then $\dtv\left( D_1, D_2 \right) \leq (1/\sqrt{2}){|\mu_1-\mu_2|}/{\sigma}$.
\end{restatable}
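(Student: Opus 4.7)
The plan is to derive this elementary bound via Pinsker's inequality applied to the closed-form KL divergence between two equal-variance Gaussians. Pinsker states that $\dtv(P,Q) \leq \sqrt{\DKL(P\|Q)/2}$ for any two probability measures $P,Q$ on a common space, and it is a standard calculation that $\DKL(\cN(\mu_1,\sigma^2)\,\|\,\cN(\mu_2,\sigma^2)) = (\mu_1-\mu_2)^2/(2\sigma^2)$. Substituting immediately yields $\dtv(D_1,D_2) \leq |\mu_1-\mu_2|/(2\sigma)$, which is in fact strictly stronger than the claimed bound $|\mu_1-\mu_2|/(\sqrt{2}\sigma)$, so the result follows a fortiori.

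To make the KL identity self-contained, I would compute it directly. Writing $\log(p_1(x)/p_2(x)) = ((x-\mu_2)^2 - (x-\mu_1)^2)/(2\sigma^2) = (\mu_1-\mu_2)(2x-\mu_1-\mu_2)/(2\sigma^2)$ and taking the expectation under $X \sim \cN(\mu_1,\sigma^2)$ (so that $\E[2X-\mu_1-\mu_2] = \mu_1-\mu_2$) gives $\DKL = (\mu_1-\mu_2)^2/(2\sigma^2)$. Pinsker's inequality itself is standard and could either be cited or derived in one line from the Csisz\'ar--Kullback inequality; its proof goes through Jensen applied to the log and the elementary inequality $3(p-q)^2 \leq (2p+q)(p\log(p/q)-p+q)$ integrated against the densities, but no more than a one-sentence citation is needed here.

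As an alternative purely pedestrian route, by translation and scale invariance of $\dtv$ it suffices to bound $\dtv(\cN(0,1),\cN(t,1))$ for $t:=|\mu_1-\mu_2|/\sigma \geq 0$. The optimal likelihood-ratio test splits at the midpoint $t/2$, giving the closed form $\dtv = \Phi(t/2) - \Phi(-t/2) = \int_{-t/2}^{t/2}\phi(x)\,\d x$ where $\phi$ is the standard normal density. Upper-bounding the integrand by $\phi(0) = 1/\sqrt{2\pi}$ yields $\dtv \leq t/\sqrt{2\pi} \leq t/\sqrt{2}$, which is exactly the claim after undoing the normalization.

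There is essentially no obstacle: the statement is textbook and either of the two routes above (Pinsker plus the Gaussian KL, or direct evaluation plus a one-line pdf bound) finishes in a few lines. I would default to the Pinsker route since it is shortest and the KL identity is already used elsewhere in the paper's toolkit.
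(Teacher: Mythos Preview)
Your proposal is correct and matches the paper's own approach: the paper states this fact as an immediate consequence of the closed-form KL divergence between Gaussians combined with Pinsker's inequality, exactly as in your primary route. Your alternative direct-evaluation route is also valid but is not what the paper uses.
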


\section{Warm-up: Coupling via Hybrid Argument and Proof of \Cref{thm:combined}\ref{it:thm-small-beta}}\label{sec:warm-up}

We show that our goal reduces to finding a coupling with bounded coordinate-wise disagreements. We also present a method (the ``hybrid argument'') for building such couplings. This alone suffices for \Cref{thm:combined}\ref{it:thm-small-beta}, but stronger results require a modification that we  give in \Cref{sec:improved-coupling}.\looseness=-1

\paragraph{Estimation Hardness via Hypothesis Testing}\label{sec:reduction}
To show linear regression hardness with error $<\eps$, it suffices to show 
an indistinguishability result
for vectors $\beta^{(0)},\beta^{(1)}$ with $\|\beta^{(0)} - \beta^{(1)}\| > 2\eps$:\looseness=-1
\begin{itemize} 
\item (Null Hypothesis) The data follows the model in \Cref{def:linear_model,def:model} with $\beta = \beta^{(0)}$.
\item (Alternative Hypothesis) The data follows the model in \Cref{def:linear_model,def:model} with $\beta = \beta^{(1)}$.
\end{itemize}
This is sufficient because if an estimator $\hat{\beta}$ exists with error less than $\eps$, the test that rejects the null hypothesis when $\|\hat{\beta} - \beta^{(0)}\| > \|\hat{\beta} - \beta^{(1)}\|$ will succeed, as shown by a simple triangle inequality.

\paragraph{Couplings with Small Coordinate-Wise Disagreements}\label{sec:couplings-to-hardness}
We now argue that constructing a suitable coupling suffices for proving hardness of the testing problem. One can think of data generation as follows: a pair of two samples $((X,y), (X',y'))$ is drawn from a coupling $\Pi$ between the distributions of \Cref{def:linear_model} with $\beta = \beta^{(0)}$ and $\beta = \beta^{(1)}$. Depending on the hypothesis, either $(X,y)$ or $(X',y')$ is added to the dataset, after which the adversary may modify it (\Cref{def:model}). If $\E[\1(X_i \neq X_i')] \leq \eta$ for all $i \in [d]$, then by deleting disagreeing coordinates, the adversary can make the dataset look identical under both hypotheses, making testing impossible. \Cref{fact:coupling} formalizes this. This lemma actually uses a slightly relaxed coordinate-wise disagreement condition: $\E[ \sum_i \1(X_i \neq X_i)] \leq  \eta d$. At first glance, this may appear to correspond to a slightly different definition of the contamination model, in which the adversary is allowed to corrupt $\eta d$ coordinates \emph{per sample} in expectation. However, \Cref{def:model} instead imposes a budget of $\eta n$ corruptions \emph{per coordinate}. We can overcome this issue by assuming that the distribution of $(X,y)$ is invariant to permutations of the coordinates of $X$, because this would imply that after a random permutation each coordinate is deleted with equal probability (at most $\eta$). Moreover, because the adversary's budget in \Cref{def:model} is deterministic (rather than in expectation), there is a small probability—dictated by the Chernoff–Hoeffding bound—that our randomized adversary does not satisfy \Cref{def:model}. In this case, we may simply assume that the testing problem is solvable.
This completes the proof sketch of \Cref{fact:coupling}. Finally, we note that we will actually require the permutation invariance condition in the statement of \Cref{fact:coupling} to hold separately over the first and second halves of the coordinates, rather than over the entire vector. This modification ensures that the lemma remains applicable in the more involved construction of \Cref{thm:combined}\ref{it:thm-small-eta}, which treats the two halves of the coordinates independently. The formal proof is given in \Cref{app:warm-up-couplings}.

\begin{restatable}{lemma}{FACTCOUPLING}\label{fact:coupling}
Let $D,D'$ be distributions over labeled examples $(X, y)$ with $X \in \R^d$, $y \in \R$. Assume that for any two permutations $\pi_1, \pi_2 \in S_{d/2}$, the distribution of $(X,y) \sim D$ is the same as that of $(\pi_1(X_1), \pi_2(X_2), y)$, where $X_1$ and $X_2$ denote the first and last $d/2$ coordinates of $X$, respectively. Assume that the same property also holds for $D'$. Consider the hypothesis testing problem where the null hypothesis is that data are drawn from $D$ under the corruption model of \Cref{def:model}, and the alternative hypothesis is that data are drawn from $D'$.
If there exists a coupling $\Pi$ between $D,D'$ such that for some $c \in (0,1)$: $\Pr_{((X,y),(X',y')) \sim \Pi}\left[ y \neq y' \right] \leq \frac{\eta}{2} (1 - c)$ and \looseness=-1
\begin{align} \label{eq:coupling-assumption}
   \textstyle \E\limits_{((X,y),(X',y')) \sim \Pi} \left[  \sum_{i=1}^d \1\left( X_i \neq X_i'  \right) \right] \leq \frac{\eta}{2} d (1 - c),
\end{align}
then no test can distinguish $D$ from $D'$ with probability greater than $\frac{1}{2}+\frac{(d {+} 1)}{2} e^{-\Omega(c^2 \eta n)}$.
\end{restatable}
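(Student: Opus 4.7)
The plan is a direct reduction: I would show that an adversary can, with high probability, transform samples from either hypothesis into the same distribution by deleting only coordinates where the coupled pair disagrees. Concretely, consider the following joint process for generating both datasets simultaneously. Draw $n$ i.i.d. coupled pairs $\{((X^{(i)},y^{(i)}),(X'^{(i)},y'^{(i)}))\}_{i=1}^n$ from $\Pi$, and independently draw uniformly random coordinate permutations $\pi_1,\dots,\pi_n$ of $[d]$, applying $\pi_i$ to both $X^{(i)}$ and $X'^{(i)}$. By the permutation invariance of the marginals of $D$ and $D'$, the resulting $(X^{(i)},y^{(i)})$ and $(X'^{(i)},y'^{(i)})$ are still marginally distributed as $D$ and $D'$, respectively. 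For each coordinate $j\in[d]$, let $S_j=\{i\in[n]: X^{(i)}_j \neq X'^{(i)}_j\text{ after permutation}\}$, and let $S_y=\{i : y^{(i)} \neq y'^{(i)}\}$.

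Next I would bound $|S_j|$ and $|S_y|$. For each sample $i$, let $Z_i = \sum_{j=1}^d \1(X^{(i)}_j \neq X'^{(i)}_j)$ (before permutation); by the coupling assumption $\E[Z_i] \le \eta d(1-c)$. After the random permutation $\pi_i$, each coordinate $j$ is a disagreement with probability $Z_i/d$, so $\Pr[i\in S_j] \leq \eta(1-c)$ and $\E[|S_j|]\le \eta n(1-c)$. Crucially, $|S_j|$ is a sum of independent Bernoullis across $i$ (by independence of the $n$ coupled draws and the independent permutations), so a standard multiplicative Chernoff bound gives
\begin{equation*}
\Pr[|S_j| > \eta n] \le \exp(-\Omega(c^2 \eta n)).
\end{equation*}
An identical argument applied to $|S_y|$, using $\Pr[y\neq y']\le \eta(1-c)$, yields $\Pr[|S_y|>\eta n] \le \exp(-\Omega(c^2\eta n))$. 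By a union bound over the $d$ coordinate indices and the label, with probability at least $1-(d+1)e^{-\Omega(c^2 \eta n)}$, all of $|S_1|,\ldots,|S_d|,|S_y|$ lie within the adversary's budget of $\eta n$.

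Now I would describe the adversary's strategy. On this good event, consider the adversary that, upon seeing the data, deletes coordinate $j$ of the $i$-th sample exactly when $i \in S_j$ (and similarly for $y$); this is a valid adversary under \Cref{def:model} since no coordinate is erased more than $\eta n$ times. After these erasures, the datasets produced starting from $\{(X^{(i)},y^{(i)})\}$ and from $\{(X'^{(i)},y'^{(i)})\}$ are pointwise identical, because all entries where they differ have been replaced by $\perp$. Hence on the good event, the observed data has exactly the same distribution under the null and the alternative, so any test behaves identically under both. On the bad event (probability $\leq (d+1)e^{-\Omega(c^2 \eta n)}$) we give up and allow the test to succeed. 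A standard LeCam-style argument then bounds the maximum success probability of any test by $\tfrac{1}{2}+\tfrac{1}{2}(d+1)e^{-\Omega(c^2 \eta n)}$, which is the claimed bound.

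The main obstacle is the mismatch between the probabilistic per-coordinate disagreement furnished by the coupling and the hard per-coordinate deletion budget $\eta n$ in \Cref{def:model}; the slack factor $(1-c)$ in the hypothesis is exactly what permits a Chernoff bound to close this gap, and the permutation-invariance of $D$ and $D'$ is what allows us to spread the $\eta d$-in-expectation total coordinate disagreements evenly across the $d$ coordinates so that the budget applies uniformly.
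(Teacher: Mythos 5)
Your proof is correct and follows essentially the same approach as the paper: random per-sample permutations to spread coordinate disagreements uniformly, an adversary that erases disagreeing coordinates, a per-coordinate Chernoff bound with union bound over the $d+1$ coordinate/label budgets, and a Le Cam argument on the high-probability event. The only cosmetic difference is that the paper's adversary explicitly maintains budget counters and stops erasing when a budget is exhausted, whereas you condition on the good event that all budgets suffice; these are equivalent analyses.
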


\paragraph{Hybrid Argument for Coupling Construction}\label{sec:hybrid}
We discuss an argument for constructing couplings satisfying \Cref{eq:coupling-assumption}. Instead of coupling the distributions of \Cref{def:linear_model}, we will first show, in \Cref{lem:hybrid} below, how to construct such couplings between two $d$-dimensional Gaussians $D$ and $D'$ with same covariance and different means. As we will see in the next subsection, this directly leads to a coupling between the distributions of the linear regression model. \looseness=-1

\begin{restatable}[Hybrid argument for constructing couplings]{lemma}{LEMMAHYBRID}\label{lem:hybrid}
    Let $D = \cN((\mu_1,\ldots,\mu_d), \Sigma)$ and $D' = \cN((\mu'_1,\ldots,\mu'_d), \Sigma)$. 
    There exists a coupling $\Pi$ between $D$ and $D'$ such that
    \begin{align*}
         \E_{(X,X') \sim \Pi}\left[ \sum_{i=1}^d \1(X_i \neq X_i')  \right] {=} \sum_{i=0}^{d-1} \DTV\left( Q_i, Q_{i+1} \right),
    \end{align*}
    where $Q_i = \cN((\mu'_1\mu'_2,\ldots,\mu'_i,\mu_{i+1},\ldots,\mu_d), \Sigma)$ is the Gaussian whose mean has the same values as $\mu'$ in the first $i$ coordinates and the same as $\mu$ elsewhere ($Q_0 = D$ and $Q_d = D'$).
\end{restatable}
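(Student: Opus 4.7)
\textbf{Proof plan for \Cref{lem:hybrid}.}
The plan is to construct $\Pi$ as the joint distribution of the endpoints $(X^{(0)}, X^{(d)})$ of a chain of random variables $X^{(0)}, X^{(1)}, \dots, X^{(d)}$ with $X^{(i)} \sim Q_i$, coupled pairwise in such a way that $X^{(i-1)}$ and $X^{(i)}$ can disagree only in coordinate $i$. I will build the chain inductively, with the two central observations being: (a) consecutive $Q_{i-1}$ and $Q_i$ share their mean on every coordinate other than $i$; and (b) the TV distance between two Gaussians with the same covariance equals the expected conditional TV distance on any single coordinate after conditioning on the others.

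The inductive construction is as follows. Start with $X^{(0)} \sim Q_0 = D$. Given $X^{(i-1)}$, define $X^{(i)}$ by (i) setting $X^{(i)}_j = X^{(i-1)}_j$ for every $j \neq i$, and (ii) sampling $X^{(i)}_i$ from the conditional distribution of $Q_i$'s $i$-th coordinate given the remaining coordinates equal $X^{(i-1)}_{-i}$, using a \emph{maximal} coupling (\Cref{fact:maximal}) with the conditional distribution of $X^{(i-1)}_i$ given $X^{(i-1)}_{-i}$. To verify that $X^{(i)} \sim Q_i$, note that the mean vectors of $Q_{i-1}$ and $Q_i$ agree on coordinates $\neq i$, so the marginal of $X^{(i-1)}_{-i}$ under $Q_{i-1}$ equals the marginal of $X^{(i)}_{-i}$ under $Q_i$. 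Combining this correct marginal on $X^{(i)}_{-i}$ with the correct conditional distribution for $X^{(i)}_i$ (by construction) yields $X^{(i)} \sim Q_i$ from the standard Gaussian factorization.

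Next, I compute the expected disagreement in coordinate $i$. Given $X^{(i-1)}_{-i}$, both conditional distributions of $X^{(i-1)}_i$ and $X^{(i)}_i$ are univariate Gaussians with the same conditional variance $\Sigma_{ii} - \Sigma_{i,-i}\Sigma_{-i,-i}^{-1}\Sigma_{-i,i}$ and with conditional means differing by the deterministic constant $\mu'_i - \mu_i$ (since the linear regression term $\Sigma_{i,-i}\Sigma_{-i,-i}^{-1}(X^{(i-1)}_{-i} - \mu_{-i})$ appears identically in both). By translation invariance of TV, this conditional TV is a constant not depending on $X^{(i-1)}_{-i}$. Moreover, because $Q_{i-1}$ and $Q_i$ share the marginal on $X_{-i}$, one has the identity
\[
\DTV(Q_{i-1}, Q_i) \;=\; \E_{X_{-i}}\!\left[\DTV\!\left(Q_{i-1}(X_i \mid X_{-i}),\, Q_i(X_i \mid X_{-i})\right)\right],
\]
which follows from expanding the TV integral and factoring out the common marginal on $X_{-i}$. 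Hence maximal coupling yields $\Pr[X^{(i-1)}_i \neq X^{(i)}_i] = \DTV(Q_{i-1}, Q_i)$.

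Finally, composing the chain gives the coupling $\Pi$ between $X^{(0)} \sim D$ and $X^{(d)} \sim D'$. For any coordinate $j$, the value $X^{(\cdot)}_j$ is unchanged at every step except possibly at step $j$, so $\1(X^{(0)}_j \neq X^{(d)}_j) \leq \1(X^{(j-1)}_j \neq X^{(j)}_j)$. Taking expectations and summing over $j$ yields the claimed identity. The main technical point to be careful with is the marginal verification in the inductive step, but this is clean because $Q_{i-1}$ and $Q_i$ differ only in the $i$-th mean coordinate, so the $-i$ marginals and the conditional formulas align automatically.
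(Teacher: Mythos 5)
Your proof is correct and takes essentially the same route as the paper: a chain $X^{(0)},\dots,X^{(d)}$ with $X^{(i)}\sim Q_i$, where each consecutive pair is coupled so that only coordinate $i$ may change, with that coordinate sampled via a maximal coupling of the two conditional laws given $X_{-i}$; summing the per-step disagreement probabilities $\DTV(Q_{i-1},Q_i)$ gives the result. One small slip: in the final line the indicator relation should be an equality, $\1(X^{(0)}_j\neq X^{(d)}_j)=\1(X^{(j-1)}_j\neq X^{(j)}_j)$, not $\leq$, since coordinate $j$ is fixed on all steps other than step $j$; with $\leq$ you would only conclude an inequality, whereas the lemma asserts an identity.
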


The idea is the following: if the two means differ only in the $i$-th coordinate, we can adapt the standard maximal coupling (\Cref{fact:maximal}) so that any disagreement comes exclusively from the $i$-th coordinate.\looseness=-1

\begin{restatable}{claim}{CLAIMONESTEP}\label{cl:one_step}
        Consider the two $d$-dimensional Gaussians $Q = \cN((\mu_1,\mu_2,\ldots,\mu_d), \Sigma)$ and $Q' = \cN((\mu_1,\ldots,\mu_{i-1}, \mu_{i}',\mu_{i+1},\ldots,\mu_d), \Sigma)$. There is a coupling $\Pi$ between the distributions $Q$ and $Q'$ such that 
        $\Pr_{(X,X') \sim \Pi}[X_i \neq X_i'] = \DTV(Q,Q')$ and $\Pr_{(X,X') \sim \Pi}[X_j \neq X_j']=0$ for all $j \neq i$.
\end{restatable}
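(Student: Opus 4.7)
\begin{proofsketch}
The plan is to construct the coupling in two stages, exploiting the fact that the marginal distribution of the coordinates $X_{-i} := (X_j)_{j \ne i}$ is identical under $Q$ and $Q'$ (both equal to $\cN(\mu_{-i}, \Sigma_{-i,-i})$, since the mean vectors of $Q$ and $Q'$ agree outside coordinate $i$ and the covariance matrices are the same). First, I would couple $X_{-i}$ and $X'_{-i}$ by the identity coupling, so $X_j = X'_j$ for every $j \ne i$, immediately giving $\Pr[X_j \ne X'_j] = 0$ for $j \ne i$. Second, conditional on the shared value $X_{-i} = X'_{-i} = x$, I would apply the univariate maximal coupling (\Cref{fact:maximal}) between the conditional distributions of $X_i$ and $X'_i$.

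The key computation is to identify those conditional distributions. By standard Gaussian conditioning, $X_i \mid X_{-i} = x$ is Gaussian with mean $\mu_i + \Sigma_{i,-i}\Sigma_{-i,-i}^{-1}(x - \mu_{-i})$ and variance $v = \Sigma_{ii} - \Sigma_{i,-i}\Sigma_{-i,-i}^{-1}\Sigma_{-i,i}$, and similarly for $X'_i$ but with $\mu_i$ replaced by $\mu_i'$. So the conditionals are $\cN(m(x), v)$ and $\cN(m(x) + (\mu_i' - \mu_i), v)$, differing only by the constant shift $\mu_i' - \mu_i$ independent of $x$. Applying \Cref{fact:maximal} to this pair gives a coupling under which $\Pr[X_i \ne X'_i \mid X_{-i} = x] = \DTV(\cN(\mu_i, v), \cN(\mu_i', v))$, a quantity that does not depend on $x$. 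Averaging over $X_{-i}$ yields $\Pr[X_i \ne X'_i] = \DTV(\cN(\mu_i, v), \cN(\mu_i', v))$.

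It remains to verify that this equals $\DTV(Q, Q')$. Since $Q$ and $Q'$ share the marginal of $X_{-i}$, a direct manipulation of the definition of TV distance (splitting the density as marginal times conditional and pulling the marginal outside the absolute value) gives the identity $\DTV(Q, Q') = \E_{X_{-i}}[\DTV(Q_{X_i \mid X_{-i}}, Q'_{X_i \mid X_{-i}})]$. Combined with the fact that the inner TV distance is constant in $X_{-i}$, this yields the desired equality $\Pr[X_i \ne X'_i] = \DTV(Q, Q')$, completing the claim.

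The main obstacle, which is really the only nontrivial point, is recognizing that the difference between the conditional means of $X_i$ and $X'_i$ is exactly $\mu_i - \mu_i'$ regardless of the value of $X_{-i}$; once this is observed, the construction and identities follow cleanly from standard facts about Gaussians and maximal couplings.
\end{proofsketch}
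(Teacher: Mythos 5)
Your proposal is correct and uses essentially the same construction as the paper: an identity coupling on the coordinates $(X_j)_{j\neq i}$ (whose joint marginal is the same under $Q$ and $Q'$) followed by a maximal coupling of the $i$-th coordinate conditional on the rest, with the final step being the observation that $\DTV(Q,Q') = \E_{X_{-i}}[\DTV(\cdot,\cdot)]$ over the conditionals when the $X_{-i}$-marginals agree. Your extra intermediate observation that the conditional distributions are $\cN(m(x),v)$ and $\cN(m(x)+(\mu_i'-\mu_i),v)$ with a constant shift is a correct and nice clarification, but it is not needed for the argument, which the paper gets directly from the marginal-agreement identity.
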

\noindent\Cref{cl:one_step} holds because the marginals over $[d] \setminus \{i\}$ are identical under $Q$ and $Q'$, allowing us to match these coordinates, and apply \Cref{fact:maximal} to the distribution of the $i$-th coordinate conditioned on the rest.\looseness=-1

To prove \Cref{lem:hybrid} for means differing in more coordinates, we can consider the distributions $Q_i$ from the statement and couple each consecutive pair using \Cref{cl:one_step}.
Chaining together the couplings completes the proof of \Cref{lem:hybrid}. The formal proofs are in \Cref{app:warm-up-couplings}.

\paragraph{Proof Sketch of \Cref{thm:combined}\ref{it:thm-small-beta}}
  We present a proof sketch with routine calculations omitted here; the full proof is in \Cref{app:proof-of-small-beta}. We first focus on the regime $\|\beta\| \leq \eta \sqrt{d} \sigma/(2\sqrt{2})$. The remaining regime can be handled by an argument that essentially reduces to the $\|\beta\| \leq \eta \sqrt{d} \sigma/(2\sqrt{2})$ regime and will be explained at the end.
For any $b \in [0,\eta \sqrt{d}\sigma/(2\sqrt{2})]$, we consider the hypothesis testing problem of distinguishing between the regression vectors $\beta^{(0)} =  (b/\sqrt{d},\ldots,b/\sqrt{d})$ and $\beta^{(1)} = - \beta^{(0)}$ from $n$ samples from the models in \Cref{def:linear_model,def:model}. 
Note that $\|\beta^{(0)}\| = \| \beta^{(1)} \| = b$ and $\|\beta^{(0)} - \beta^{(1)}\| = \sqrt{2} b$.
By the reduction from estimation to hypothesis testing, showing that no algorithm can solve this testing problem implies that no estimator has Euclidean error smaller than $b/\sqrt{2}$.\looseness=-1

To use \Cref{fact:coupling}, we need a coupling $(X,y),(X',y')$ where $(X,y)$ follows \Cref{def:linear_model} with $\beta = \beta^{(0)}$ and $(X',y')$ follows the same model with $\beta = \beta^{(1)}$, satisfying $\E[\sum_{i=1}^d \1(X_i = X_i')] \leq \eta d$ and $\Pr[y \neq y'] \leq \eta$. We construct this coupling by: (i) drawing $t$ from $\cN(0,\|\beta\|^2 + \sigma^2)$, and setting $y = y' = t$ (the label distribution is the same for both hypotheses), and (ii) drawing $X$ and $X'$ from an appropriate coupling of the conditional distributions given $y = t$. We thus focus on how to do step (ii) with small coordinate-wise disagreements. 

\begin{claim}\label{cl:helper}
    For $t \in \R$, $\sigma > 0$ and $b \in [0,\eta \sqrt{d} \sigma/(2\sqrt{2})]$, let $D$ be the distribution of $X |(y=t)$ under $X \sim \cN(0,I_d)$ and $y = (b/\sqrt{d},\ldots,b/\sqrt{d})^\top X + \xi$ for $\xi \sim \cN(0,\sigma^2)$. Let $D'$ be defined similarly but using the conditioning $y=-t$. There exists a coupling $\Pi$ between $D$ and $D'$ such that $\E_{(X,X') \sim \Pi}\left[ \sum_{i=1}^d \1( X_i \neq X_i') \right] \leq   \sqrt{2d} |t|  b/(\sigma^2 + b^2)$.
\end{claim}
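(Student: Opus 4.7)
\textbf{Proof plan for Claim \ref{cl:helper}.} The plan is a direct application of Lemma \ref{lem:hybrid} after first identifying $D$ and $D'$ as multivariate Gaussians with a common covariance, and then bounding each of the resulting $d$ ``one-step'' total-variation distances using Pinsker's inequality together with the closed-form KL between two equal-covariance Gaussians.

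First, by Fact \ref{fact:conditional} applied with $u = \beta = (b/\sqrt{d})\vec{1}_d$, one sees that $D = \cN(\mu,\Sigma)$ and $D' = \cN(-\mu,\Sigma)$, where $\Sigma = I_d - \beta\beta^\top/(b^2+\sigma^2)$ and $\mu = t\beta/(b^2+\sigma^2)$ has every coordinate equal to $tb/(\sqrt{d}(b^2+\sigma^2))$. Because $D$ and $D'$ share a covariance matrix, Lemma \ref{lem:hybrid} directly produces a coupling $\Pi$ for which
\begin{equation*}
\E_{(X,X')\sim\Pi}\!\Big[\textstyle\sum_{i=1}^d \1(X_i \neq X'_i)\Big] = \sum_{i=0}^{d-1} \DTV(Q_i, Q_{i+1}),
\end{equation*}
where consecutive hybrids $Q_i, Q_{i+1}$ differ only in the $(i+1)$-th coordinate of the mean by an amount of magnitude $|\Delta| = 2|t|b/(\sqrt{d}(b^2+\sigma^2))$. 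So it remains to produce a good upper bound on each $\DTV(Q_i, Q_{i+1})$.

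For this, I would invoke Pinsker's inequality together with the standard formula for the KL between equal-covariance Gaussians, giving $\DTV(Q_i, Q_{i+1}) \leq (1/2)\sqrt{\Delta^\top\Sigma^{-1}\Delta} = (1/2)|\Delta|\sqrt{(\Sigma^{-1})_{i+1,i+1}}$. The main calculation — and the one place where it is easy to slip up — is inverting $\Sigma$. Since $\beta\beta^\top = (b^2/d)J_d$ with $J_d$ the all-ones matrix, a direct check (or Sherman--Morrison) yields $\Sigma^{-1} = I_d + \beta\beta^\top/\sigma^2$, so $(\Sigma^{-1})_{i+1,i+1} = 1 + b^2/(d\sigma^2) = (d\sigma^2+b^2)/(d\sigma^2)$. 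Substituting gives $\DTV(Q_i,Q_{i+1}) \leq |t|b\sqrt{d\sigma^2+b^2}/(d\sigma(b^2+\sigma^2))$, and summing the $d$ identical terms produces $\sum_{i=0}^{d-1}\DTV(Q_i,Q_{i+1}) \leq |t|b\sqrt{d\sigma^2+b^2}/(\sigma(b^2+\sigma^2))$.

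Finally, the hypothesis $b \leq \eta\sqrt{d}\sigma$ together with $\eta \leq 1$ gives $b^2 \leq d\sigma^2$, so $\sqrt{d\sigma^2+b^2} \leq \sqrt{2d}\,\sigma$, yielding the advertised bound $\sqrt{2d}\,|t|b/(b^2+\sigma^2)$. The main obstacle is essentially the linear-algebra step of inverting $\Sigma$; once $\Sigma^{-1}$ is in hand, everything else — identifying the conditionals, the hybrid reduction, and the Pinsker bound — is routine. One minor bookkeeping point: using the slightly weaker Fact \ref{fact:TVDGaussians} (with $1/\sqrt{2}$) in place of the tighter Pinsker factor $1/2$ would lose only a constant factor and would not affect the downstream use of the claim.
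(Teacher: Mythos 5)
Your proof is correct and follows essentially the same route as the paper's: identify the conditionals via Fact \ref{fact:conditional}, invoke the hybrid coupling of Lemma \ref{lem:hybrid}, invert $\Sigma$ via Sherman--Morrison to get $\Sigma^{-1} = I + \beta\beta^\top/\sigma^2$, bound each one-step TV by Pinsker and the equal-covariance KL formula, and finish by applying $b \leq \eta\sqrt{d}\sigma \leq \sqrt{d}\sigma$. The algebra checks out and produces exactly the stated $\sqrt{2d}\,|t|b/(\sigma^2+b^2)$ bound.
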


\begin{proof}
Let $\beta^{(0)} = (b/\sqrt{d},\ldots,b/\sqrt{d})$ and $\beta^{(1)} = -\beta^{(0)}$ as before.
By \Cref{fact:conditional}, $D^{(i)}$ for $i=0,1$ are the Gaussians $\cN(\mu^{(i)},\Sigma^{(i)})$ where 
    \begin{align*}%
        \mu^{(i)} := \frac{t}{\sigma^2+ b^2} \beta^{(i)},\quad  \Sigma^{(1)} {=} \Sigma^{(2)} {=} \Sigma := I {-} \frac{\beta^{(1)}(\beta^{(1)})^\top}{\sigma^2 + b^2}. 
    \end{align*}
\noindent 
We now use \Cref{lem:hybrid} to couple $\cN(\mu^{(0)},\Sigma),\cN(\mu^{(1)},\Sigma)$. For each $i \in [d]$ we need to upper bound the TV-distance between the two Gaussians $\cN(m^{(i)}, \Sigma), \cN(m^{(i+1)} , \Sigma)$, where $m^{(i)} = (\mu^{(1)}_1, \ldots, \mu^{(1)}_{i-1}, \mu^{(0)}_{i}, \mu^{(0)}_{i+1}, \ldots, \mu^{(0)}_{d})$ and $m^{(i+1)} = (\mu^{(1)}_1, \ldots, \mu^{(1)}_{i-1}, \mu^{(1)}_{i}, \mu^{(0)}_{i+1}, \ldots, \mu^{(0)}_{d})$.
First, by Pinsker's inequality it suffices to bound $\tfrac{1}{\sqrt{2}}\DKL(\cN(m^{(i)} , \Sigma)  \,\|\, \cN(m^{(i+1)} , \Sigma))^{1/2}$.
The KL-divergence can then be bounded by
(see \Cref{eq:firsteq}-(\ref{eq:lasteq}) in \Cref{app:proof-of-small-beta} for the missing steps):
\begin{align}
    \textstyle &\sqrt{\DKL(\cN(m^{(i)} , \Sigma)  \,\|\, \cN(m^{(i+1)} , \Sigma)) }  
    \leq \sqrt{\tfrac{1}{2}(m^{(i+1)} - m^{(i)})^\top \Sigma^{-1} (m^{(i+1)} - m^{(i)})}\label{eq:boundDTV-via-pinsker1}\\
    &\leq \sqrt{\tfrac{1}{2}(m^{(i+1)} - m^{(i)})^\top (I + \tfrac{\beta^{(1)}{\beta^{(1)}}^\top}{\sigma}) (m^{(i+1)} - m^{(i)})}
    \leq \frac{\sqrt{2} |t|}{\sqrt{d}}  \frac{b}{\sigma^2 + b^2} . \notag
\end{align}
Adding together all the TV-distance terms for $i=1,\ldots,d$ concludes the proof of \Cref{cl:helper}.
\end{proof}
\vspace{-10pt}
By turning this into a coupling for the full labeled examples as described in the beginning of this proof sketch, the expected sum of disagreements is bounded by simply taking expectation with respect to $t \sim \cN(0,\sigma^2 + b^2)$ above. By using $\E[|t|] \leq \sqrt{\sigma^2 + b^2}$ and plugging in $b \leq \eta \sqrt{d} \sigma/(2\sqrt{2})$, the expected sum coordinate-wise disagreement becomes at most $\eta d/2$. Applying \Cref{fact:coupling} completes the proof of the lower bound $\Omega(\|\beta\|)$ in the regime $\|\beta\| \leq \eta \sqrt{d} \sigma/(2\sqrt{2})$.

For the remaining regime $\|\beta\| > \eta \sqrt{d} \sigma /(2\sqrt{2})$, consider instead the  testing problem in $\R^{d}$ between $\beta = (r, b/\sqrt{d}, \ldots, b/\sqrt{d})$ and $\beta' = (r, -b/\sqrt{d}, \ldots, -b/\sqrt{d})$, where $b \leq \eta \sqrt{d}\sigma/(2\sqrt{2})$ is as before and $r$ is a tunable parameter allowing $\|\beta\|$ to get any desired value larger than $\eta \sqrt{d} \sigma/(2\sqrt{2})$. The labels can be written as $y = r X_1 + y_0$ and $y' = r X_1' + y_0'$, where $y_0 = (b/\sqrt{d}, \ldots, b/\sqrt{d})^\top X_{2:d} + \xi$ and $y_0' = (-b/\sqrt{d}, \ldots, -b/\sqrt{d})^\top X_{2:d}' + \xi'$, with $X_{2:d}$ denoting the last $d$ coordinates of $X$. From earlier, there exists a coupling between $(X_{2:d}, y_0)$ and $(X_{2:d}', y_0')$ with expected disagreements at most $\eta d/2$ per sample which extends trivially to the full $(X, y)$ and $(X', y')$ by adding shared Gaussian noise $\mathcal{N}(0, r^2)$ to the labels. Applying \Cref{fact:coupling} as before completes the proof. 
\qed

\section{Improved Core Coupling Construction}\label{sec:improved-coupling}

The previous strategy has a key limitation that the result holds only for $\|\beta\| \leq \sigma$. When $\sigma = 0$, the matrix $\Sigma$ in \Cref{eq:boundDTV-via-pinsker1} becomes singular, causing the proof to break down and yield no meaningful bound. For some regimes in \Cref{thm:combined}, we want to overcome this, for example, to establish the lower bound $\Omega(\|\beta\|)$ whenever $\eta \geq 7/\sqrt{d}$, regardless of the value of $\sigma$. The key observation is that, because $\beta$ in \Cref{cl:helper} points in the all-ones direction, the zero-eigenvalue issue disappears when we restrict our attention to the first $d-1$ coordinates. We can thus apply the technique to those coordinates and adjust the final one to ensure the total sum is correct. This idea is formalized in \Cref{lem:impr-coupling}, which provides an improved version of \Cref{cl:helper} for the case $b = \sqrt{d}$ and $\sigma = 0$ (the result extends trivially to any scaling of the vector and any $\sigma > 0$). We will later use \Cref{lem:impr-coupling} to prove our remaining main results.\looseness=-1

\begin{restatable}[Improved core coupling]{lemma}{IMPRCOUPLING} \label{lem:impr-coupling}
        For any $d \in \Z_+$,  $t,t'  \in  \R$, the following hold. 
    If $D$ denotes  the distribution of $(X_1,\ldots, X_d) \sim \cN(0,I_d)$ conditioned on $\sum_{i=1}^d X_i = t$ and $D'$ the distribution of $(X'_1,\ldots, X'_d)\sim \cN(0,I_d)$ conditioned on $\sum_{i=1}^d X'_i  =  t'$, then there exists a coupling $\Pi$ between $D,D'$ such that  $\E_{(X,X')) \sim \Pi } \left[ \sum_{i=1}^d \1\left( X_i {\neq} X'_i \right) \right] \leq  1 +  |t - t'| $.\looseness=-1
\end{restatable}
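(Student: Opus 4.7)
The plan is to couple $D$ and $D'$ in two stages: first produce a coupling of the marginals on the first $d-1$ coordinates via the hybrid argument (\Cref{lem:hybrid}), and then determine the last coordinate deterministically from the sum constraint. The reason this circumvents the singular-covariance issue from \Cref{sec:warm-up} is that although the joint covariance $I_d - \1_d\1_d^\top/d$ of $D$ (and $D'$) is singular on $\R^d$, its top-left $(d-1) \times (d-1)$ principal block is \emph{non-singular}, since the nullspace $\mathrm{span}(\1_d)$ is not contained in the coordinate subspace spanned by $e_1,\ldots,e_{d-1}$.

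Concretely, by \Cref{fact:conditional} (applied with $u = \1_d$, $\sigma = 0$) the distribution $D$ equals $\cN((t/d)\1_d,\, I_d - \1_d\1_d^\top/d)$, so the marginal of $(X_1,\ldots,X_{d-1})$ under $D$ is $\cN((t/d)\1_{d-1},\, \Sigma)$ where $\Sigma := I_{d-1} - \tfrac{1}{d}\1_{d-1}\1_{d-1}^\top$. The corresponding marginal under $D'$ is the same but with $t$ replaced by $t'$, so the two share the common covariance $\Sigma$. I will apply \Cref{lem:hybrid} to these two $(d-1)$-dimensional Gaussians to obtain a coupling of $(X_1,\ldots,X_{d-1})$ and $(X_1',\ldots,X_{d-1}')$, and then set $X_d = t - \sum_{i=1}^{d-1} X_i$ and $X_d' = t' - \sum_{i=1}^{d-1} X_i'$. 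Since under $D$ the last coordinate is almost surely $t - \sum_{i<d}X_i$ (and likewise under $D'$), this procedure produces the correct joint marginals.

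The remaining task is to bound $\sum_{i=0}^{d-2} \DTV(Q_i, Q_{i+1})$ from \Cref{lem:hybrid}, where consecutive Gaussians differ in a single coordinate by $(t-t')/d$. By Sherman-Morrison, $\Sigma^{-1} = I_{d-1} + \1_{d-1}\1_{d-1}^\top$, so the diagonal entries of $\Sigma^{-1}$ are all $2$. Hence for each $i$, $\DKL(Q_i \;\|\; Q_{i+1}) = \tfrac{1}{2} \cdot 2 \cdot \bigl((t-t')/d\bigr)^2 = (t-t')^2/d^2$, and Pinsker's inequality yields $\DTV(Q_i,Q_{i+1}) \leq |t-t'|/(d\sqrt{2})$. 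Summing $d-1$ such terms bounds the expected number of disagreements among the first $d-1$ coordinates by $|t-t'|/\sqrt{2}$. The last coordinate may always disagree, contributing at most $1$ to the sum, giving $\E[\sum_{i=1}^d \1(X_i \neq X_i')] \leq 1 + |t-t'|/\sqrt{2} \leq 1 + |t-t'|$, as required.

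The only conceptual step is the decomposition idea in the first paragraph; once set up, the computations (the block form of $\Sigma$, its inverse via Sherman-Morrison, and Pinsker's bound) are mechanical. The edge case $d=1$ is trivial, since then $X_1 = t$ and $X_1' = t'$ are deterministic and the entire sum is at most $1 \leq 1 + |t-t'|$.
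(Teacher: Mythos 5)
Your proposal is correct and follows essentially the same approach as the paper: couple the first $d-1$ coordinates via the hybrid argument of \Cref{lem:hybrid}, set the last coordinate deterministically from the sum constraint, and bound each consecutive TV distance by $|t-t'|/(d\sqrt{2})$ using Pinsker, \Cref{fact:KL-Gaussians}, and the Sherman--Morrison inverse $\Sigma^{-1}=I_{d-1}+\1_{d-1}\1_{d-1}^\top$. Your derivation of the $(d-1)$-dimensional marginals (applying \Cref{fact:conditional} to the full $d$-dimensional conditional and then marginalizing, rather than the paper's shorthand of invoking it with $u=\1_{d-1}$) is in fact the cleaner and more literally correct route to the same formulas.
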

\begin{proof}
    Consider generating the pair $(X,X')$ as follows:
    \begin{itemize} 
        \item Let $\Pi'$ be the coupling from \Cref{lem:hybrid} between the first $(d-1)$ coordinates of $D$ and $D'$. Sample $(X_1,\ldots,X_{d-1}),(X_1',\ldots,X_{d-1}')$ from $\Pi'$.
        \item Set the last coordinate as $X_{d} = t - \sum_{i=1}^{d-1}X_i$ and $X'_{d} = t' - \sum_{i=1}^{d-1}X'_i$.
    \end{itemize}
    By construction, $X \sim D$ and $X' \sim D'$, so the above defines a valid coupling between $D$ and $D'$. We denote this coupling by $\Pi$, and claim that it satisfies the lemma's guarantee. By \Cref{lem:hybrid} and the trivial bound $\1(X_d \neq X'_d)\leq 1$ on the $d$-th coordinate, we have
    \begin{equation}\label{eq:coupling_dis_bound}
         \E_{(X,X')\sim \Pi}\left[ \sum_{i=1}^d \1(X_i {\neq} X_i')  \right] 
        {\leq }1 {+} \sum_{i=0}^{d-2} \DTV\left( Q_i, Q_{i+1} \right) ,
    \end{equation}
    where the $Q_i$ are as defined in \Cref{lem:hybrid}. It remains to determine their exact form and bound their total variation distances. Applying \Cref{fact:conditional} with $u {=} \1_{d-1}$ (the all-ones vector in $\R^{d-1}$) and $\sigma {=} 0$, we find that the first $d-1$ coordinates of $D$ and $D'$ follow $\cN(\mu, \Sigma)$ and $\cN(\mu', \Sigma)$, respectively, where $\mu {=} \1_{d-1} t/d$, $\mu' {=} \1_{d-1} t'/d$, and $\Sigma = I_{d-1} {-} \1_{d-1} \1_{d-1}^\top / d$. Thus, each $Q_i$ is a Gaussian with covariance $\Sigma$ and mean $\mu^{(i)}$, where $\mu^{(i)}$ equals $t'/d$ on the first $i$ coordinates and $t/d$ on the remaining ones. By Pinsker's inequality, each term $\dtv(Q_i, Q_{i+1})$ in \Cref{lem:hybrid} is \looseness=-1
    \begin{align}
        &\dtv(Q_i, Q_{i+1}) \leq \sqrt{\tfrac{1}{2}\DKL(Q_i  \,\|\, Q_{i+1})} 
        \leq   \sqrt{\tfrac{1}{4}(\mu^{(i+1)} - \mu^{(i)})^\top \Sigma^{-1}(\mu^{(i+1)} - \mu^{(i)})} \tag{\Cref{fact:KL-Gaussians}}\\
        &= \sqrt{\tfrac{1}{4}(\mu^{(i+1)} - \mu^{(i)})^\top \left( I +  \1_{d-1} \1_{d-1}^\top\right)(\mu^{(i+1)} - \mu^{(i)})} \tag{\Cref{cl:sherman-morrison}}\\
        &\leq \sqrt{\tfrac{1}{4} \left( \|\mu^{(i+1)} - \mu^{(i)}\|^2 + |(\mu^{(i+1)} - \mu^{(i)})^\top \1_{d-1}|^2  \right)   } \notag\\
        &= \sqrt{\tfrac{1}{4} \left( |t-t'|^2/d^2 + |t-t'|^2/d^2  \right)   } 
        = \frac{1}{\sqrt{2}}\frac{|t-t'|}{d}.
    \end{align}
    Thus, plugging this to \Cref{eq:coupling_dis_bound} we obtain $\E_{(X,X') \sim \Pi} [ \sum_{i=1}^d \1(X_i \neq X_i')  ] \leq 1 + |t-t'|$.
\end{proof}

\section{Proofs of parts \ref{it:thm-big-eta}-\ref{it:thm-small-eta} of \Cref{thm:combined}}\label{sec:thm-proofs}

The proof of \Cref{thm:combined}\ref{it:thm-big-eta} is almost the same as the proof of \Cref{thm:combined}\ref{it:thm-small-beta}: we test between linear regression setups with $\sigma {=} 0$ and $\beta {=} (1,\ldots,1)$ vs. $\beta' {=} -\beta$. It suffices to use $\sigma {=} 0$ and $\|\beta\| {=} \sqrt{d}$, since coupling these cases implies a coupling for any scaling $\|\beta\|$ and $\sigma {>} 0$ (see \Cref{app:thm-proofs} for the full proof). Using \Cref{lem:impr-coupling,fact:coupling} completes the proof.\looseness=-1

The remaining lower bound proofs also use \Cref{lem:impr-coupling}. However, since we no longer aim to prove a lower bound of order $\|\beta\|$ (as the upper bound is smaller), we will not use the hypothesis testing setup with $\beta = (1,\ldots,1)$ vs.\ $\beta' = (-1,\ldots,-1)$. Instead, we adjust it as follows. The full proofs for this section are provided in \Cref{app:thm-proofs}.

\paragraph{Proof Sketch of \Cref{thm:combined}\ref{it:thm-interm-eta}} For this lower bound we need a slightly different hypothesis testing problem: Our two hypotheses will use regressors $\beta = s(\eps,\ldots,\eps,1,\ldots,1)$ and $\beta' = s(-\eps,\ldots,-\eps,1,\ldots,1)$ respectively,
where $s > 0$ can be any scaling factor (so that our lower bound holds for every possible norm of $\|\beta\|$) and $\eps\leq 1$ will be chosen shortly. We claim that \Cref{lem:impr-coupling} implies a coupling $\Pi$ between the distributions of \Cref{def:linear_model} for $\beta$ and $\beta'$ such that $\Pr_{((X,y)(X',y')) \sim \Pi}[y \neq y']=0$ and\looseness=-1
\begin{equation}\label{eq:coupling-combined}
     \E\limits_{((X,y),(X',y')) \sim \Pi}\left[\sum_{i=1}^d \1(X_i \neq X_i') \right] \leq 2 + O(\eps\sqrt{d} ) .
\end{equation}
We first complete the theorem proof sketch given \Cref{eq:coupling-combined}. Setting $\eps := (\eta - 4/d)\tfrac{\sqrt{d}}{2C }$ (for $C$ the constant in the big-O), the RHS of \Cref{eq:coupling-combined} becomes $\eta d/2$, making \Cref{fact:coupling} applicable. By the reduction from estimation to testing, this implies every estimator has error $\Omega(\|\beta-\beta'\|)$, which is
    $\|\beta - \beta'\| =  \eps s\sqrt{d/2} = \tfrac{\eps}{\sqrt{1+\eps^2}}\|\beta\| \gtrsim \eps \|\beta\|  
    \gtrsim (\eta - 4/d)\sqrt{d}\|\beta\| \gtrsim \eta \sqrt{d}\|\beta\|$,
where the last step used $\eta \geq 5/d$. In Appendix we use better constants so only require $\eta \geq (2{+}o(1))/d$.

We now prove \Cref{eq:coupling-combined}. By scaling, it suffices to do this only for $s=1$ and $\sigma=0$ (i.e., no additive noise). The coupling is the following: (i) Sample $z$ from the distribution of labels (which is $\cN(0,\sigma^2 + \|\beta\|^2)= \cN(0,(1+\eps^2)d/2)$ for both hypotheses) and set $y=y'=z$. (ii) Sample $t$ and $t'$ from the distribution of the sum of the first half of the coordinates for the two hypotheses. By using standard facts for Gaussians, it turns out that these two distributions are $\cN(\eps z/(1+ \eps^2), d/(2(1+\eps^2)),\cN(-\eps z/(1+ \eps^2), d/(2(1+\eps^2))$. We can also choose to have that $t-t' = 2\eps z/(1+ \eps^2)$ always. (iii) Sample the first half of the coordinates using the coupling of \Cref{lem:impr-coupling} such that they sum to $z-\eps t$. (iv) Sample the second half using \Cref{lem:impr-coupling} such that they sum to $z+\eps t'$. The expected disagreements from the two applications of \Cref{lem:impr-coupling} are
\begin{align*}
\E\left[ \textstyle \sum_{i=1}^d \1\left(  X_i {\neq} X_i'\right) \right]
        \leq 2 + O\big(\E_{t,t'}\left[|t {-} t'| \right]  {+} \eps \E_{t,t'}\left[|t {+} t'| \right]  \big). 
    \end{align*}
    Using $t-t' = 2\eps z/(1 + \eps^2)$ with $z \sim \cN(0,(1{+}\eps^2)d/2)$, $\eps \leq 1$ and that $t,t'$ have variance $O(d)$ proves \Cref{eq:coupling-combined}\looseness=-1.\qed

The constant additive term in \Cref{eq:coupling-combined} breaks the proof of \Cref{thm:combined}\ref{it:thm-interm-eta} when $\eta < 2/d$. To handle this, we improve the coupling below. Roughly speaking, the idea is to break the variables we want to couple into smaller parts and couple the parts separately. When the first parts match, we force the rest to match as well, which (in expectation) reduces the coordinate-wise disagreements. 

\paragraph{Proof Sketch of \Cref{thm:combined}\ref{it:thm-small-eta}} Let $B,E \in (0,1)$ be parameters, let $\1_{d}\in \R^d$ denote the all-ones vector, and define the problem of distinguishing between regressor $\beta := (\tfrac{B}{\sqrt{d/2}}\1_{d/2}, \tfrac{E}{\sqrt{d/2}}\1_{d/2})$ and $\beta' := (\tfrac{B}{\sqrt{d/2}}\1_{d/2}, \tfrac{-E}{\sqrt{d/2}}\1_{d/2})$. The claim now is existence of a coupling $\Pi$ between labeled examples of the linear regression models of \Cref{def:linear_model} with $\beta$ and $\beta'$ such that, $\Pr_{((X,y)(X',y')) \sim \Pi}[y \neq y']=0$  and\looseness=-1
\begin{align}\label{eq:coupling-combined-2}
     \E_{((X,y),(X',y')) \sim \Pi}\left[\sum_{i=1}^d \1(X_i \neq X_i') \right] \lesssim  \frac{E}{\sigma} + \sqrt{d} \frac{E}{B} .
\end{align}
As before, we aim to apply \Cref{fact:coupling} to conclude that the hypothesis testing problem is hard. To do that, we need to ensure that the RHS above is at most $\eta d/2$. Equivalently $E/\sigma \ll \eta d$ and $\sqrt{d}E /B \ll \eta d$. We thus choose $E:= \frac{1}{2C } \min(\eta d \sigma, \eta \sqrt{d}B)$. Noting that $\|\beta - \beta'\| \gtrsim \min(\eta d \sigma, \eta \sqrt{d} \|\beta\|)$ concludes the proof of \Cref{thm:combined}\ref{it:thm-small-eta}. \looseness=-1

It remains to argue \Cref{eq:coupling-combined-2}. The coupling for that is the following. We will use the notation $X=(X_1,X_2)$ to denote the first and second half of coordinates of $X$ and $S_1 = \1_{d/2}^\top X_1 /\sqrt{d/2}$, $S_2 = \1_{d/2}^\top X_2 /\sqrt{d/2}$ to denote the scaled sums of these coordinates (and we will use similar primed letters notation for the corresponding parts of $X'$). The coupling is the following: (i) Draw $y$ from the distribution of labels and let $y'=y$. (ii) Draw $(S_2,S_2')$ from a maximal coupling between the distributions under the null and alternative hypotheses conditioned on $y$. (iii) If $S_2 = S_2’$: draw $X_2$ from the null distribution conditioned on the value of $S_2$ from the previous step, and set $X_2’=X_2$. Draw $(S_1,S_1’)$ from a maximal coupling between null and alternative distributions conditioned on $(S_2, y)$.
     (iii a) If  $S_1=S_1’$ draw $X_1$ from the null distribution conditioned on $S_1$ and set $X_1’=X_1$.
     (iii b) Otherwise use \Cref{lem:impr-coupling} to jointly sample $(X_1,X_1’)$ conditioned on the values of $S_1$ and $S_1’$.
(iv) If $S_2 \neq S_2’$: couple the conditional distributions $S_1$ under the null and $S_1’$ under the alternative given $(S_2, y)$ and $(S_2', y)$, respectively. Then use \Cref{lem:impr-coupling} to jointly sample $(X_2, X_2')$ conditioned on $S_2, S_2'$, and similarly for $(X_1, X_1')$ conditioned on $S_1, S_1'$. 

It is true (and straightforward) that this is indeed a coupling, since the labeled examples are constructed in steps and each step uses the correct conditional distributions under the null/alternative hypotheses, conditioned on all previous steps.

For the disagreement analysis, one must derive the precise form of all the conditional distributions mentioned above and apply \Cref{fact:maximal,lem:impr-coupling} appropriately. This requires some technical work, which we defer to \Cref{app:thm-proofs}. At a high level however, the key point is that the case-splitting in the coupling construction implies that in some cases there are no disagreements, and \Cref{lem:impr-coupling} is applied only to the remaining cases, which occur with probability strictly less than $1$. As a result, the additive $1$ term in the disagreement bound from \Cref{lem:impr-coupling} is scaled by the probability of the corresponding case. Ultimately, this ensures that the expected total number of disagreements is strictly less than $1$, which in turn allows our theorem to apply in the regime $\eta < 1/d$.

\section*{Acknowledgments} The authors would like to thank the anonymous COLT reviewer for identifying a minor bug in an earlier version of this paper.

\newpage

\bibliography{refs}

\begin{thebibliography}{10}

\bibitem{ABDS2022}
Alexis Ayme, Claire Boyer, Aymeric Dieuleveut, and Erwan Scornet.
\newblock Minimax rate of consistency for linear models with missing values.

\bibitem{ABDS2023}
Alexis Ayme, Claire Boyer, Aymeric Dieuleveut, and Erwan Scornet.
\newblock Naive imputation implicitly regularizes high-dimensional linear
  models.
\newblock In {\em Proceedings of the 40th {{International Conference}} on
  {{Machine Learning}}}, pages 1320--1340. PMLR.

\bibitem{BarNJT10-security}
M.~Barreno, B.~Nelson, A.~D. Joseph, and J.~D. Tygar.
\newblock The security of machine learning.
\newblock {\em Machine Learning}, 2010.

\bibitem{Ber2018}
Dennis~S. Bernstein.
\newblock {\em Scalar, {{Vector}}, and {{Matrix Mathematics}}: {{Theory}},
  {{Facts}}, and {{Formulas}} - {{Revised}} and {{Expanded Edition}}}.
\newblock Princeton University Press.

\bibitem{bertsimas2024simple}
Dimitris Bertsimas, Arthur Delarue, and Jean Pauphilet.
\newblock Simple imputation rules for prediction with missing data: Theoretical
  guarantees vs. empirical performance.
\newblock {\em Transactions on Machine Learning Research}, 2024.

\bibitem{BDGW2025}
Arnab Bhattacharyya, Constantinos Daskalakis, Themis Gouleakis, and Yuhao Wang.
\newblock Learning {{High-dimensional Gaussians}} from {{Censored Data}}.

\bibitem{BigNL12-poisoning}
B.~Biggio, B.~Nelson, and P.~Laskov.
\newblock Poisoning attacks against support vector machines.
\newblock In {\em ICML}, 2012.

\bibitem{CAM2020}
Kabir~Aladin Chandrasekher, Ahmed~El Alaoui, and Andrea Montanari.
\newblock Imputation for {{High-Dimensional Linear Regression}}.

\bibitem{CCD2023}
Chen Cheng, Gary Cheng, and John~C. Duchi.
\newblock Collaboratively {{Learning Linear Models}} with {{Structured Missing
  Data}}.
\newblock In {\em Advances in {{Neural Information Processing Systems}} 36:
  {{Annual Conference}} on {{Neural Information Processing Systems}} 2023,
  {{NeurIPS}} 2023, {{New Orleans}}, {{LA}}, {{USA}}, {{December}} 10 - 16,
  2023}. arXiv.

\bibitem{CAT+2020}
Yeshwanth Cherapanamjeri, Efe Aras, Nilesh Tripuraneni, Michael~I. Jordan,
  Nicolas Flammarion, and Peter~L. Bartlett.
\newblock Optimal {{Robust Linear Regression}} in {{Nearly Linear Time}}.

\bibitem{DGTZ2018}
Constantinos Daskalakis, Themis Gouleakis, Chistos Tzamos, and Manolis
  Zampetakis.
\newblock Efficient {{Statistics}}, in {{High Dimensions}}, from {{Truncated
  Samples}}.
\newblock In {\em 2018 {{IEEE}} 59th {{Annual Symposium}} on {{Foundations}} of
  {{Computer Science}} ({{FOCS}})}, pages 639--649. IEEE.

\bibitem{DGTZ2019}
Constantinos Daskalakis, Themis Gouleakis, Christos Tzamos, and Manolis
  Zampetakis.
\newblock Computationally and {{Statistically Efficient Truncated Regression}}.
\newblock In {\em Proceedings of the {{Thirty-Second Conference}} on {{Learning
  Theory}}}, pages 955--960. PMLR.

\bibitem{DRZ2020}
Constantinos Daskalakis, Dhruv Rohatgi, and Emmanouil Zampetakis.
\newblock Truncated {{Linear Regression}} in {{High Dimensions}}.
\newblock In {\em Advances in {{Neural Information Processing Systems}}},
  volume~33, pages 10338--10347. Curran Associates, Inc.

\bibitem{DSYZ2021}
Constantinos Daskalakis, Patroklos Stefanou, Rui Yao, and Emmanouil Zampetakis.
\newblock Efficient {{Truncated Linear Regression}} with {{Unknown Noise
  Variance}}.
\newblock In {\em Advances in {{Neural Information Processing Systems}}},
  volume~34, pages 1952--1963. Curran Associates, Inc.

\bibitem{DKK+19}
I.~Diakonikolas, G.~Kamath, D.~Kane, J.~Li, A.~Moitra, and A.~Stewart.
\newblock Robust estimators in high-dimensions without the computational
  intractability.
\newblock {\em SIAM Journal on Computing}, 2019.

\bibitem{diakonikolas2018robustly}
I.~Diakonikolas, G.~Kamath, D.~M. Kane, J.~Li, A.~Moitra, and A.~Stewart.
\newblock Robustly learning a gaussian: Getting optimal error, efficiently.
\newblock In {\em Proceedings of the Twenty-Ninth Annual ACM-SIAM Symposium on
  Discrete Algorithms}, pages 2683--2702. SIAM, 2018.

\bibitem{diakonikolas2023near}
I.~Diakonikolas, D.~Kane, A.~Pensia, and T.~Pittas.
\newblock Near-optimal algorithms for gaussians with huber contamination: Mean
  estimation and linear regression.
\newblock {\em Advances in Neural Information Processing Systems},
  36:43384--43422, 2023.

\bibitem{DKS2019}
I.~Diakonikolas, W.~Kong, and A.~Stewart.
\newblock Efficient {{Algorithms}} and {{Lower Bounds}} for {{Robust Linear
  Regression}}.
\newblock In {\em Proceedings of the {{Thirtieth Annual ACM-SIAM Symposium}} on
  {{Discrete Algorithms}}, {{SODA}} 2019, {{San Diego}}, {{California}},
  {{USA}}, {{January}} 6-9, 2019}, pages 2745--2754. SIAM.

\bibitem{DK2023}
Ilias Diakonikolas and Daniel~M. Kane.
\newblock {\em Algorithmic {{High-Dimensional Robust Statistics}}}.
\newblock Cambridge University Press, 1 edition.

\bibitem{DP2009}
Devdatt~P. Dubhashi and Alessandro Panconesi.
\newblock {\em Concentration of {{Measure}} for the {{Analysis}} of
  {{Randomized Algorithms}}}.
\newblock Cambridge University Press.

\bibitem{Gal1897}
Francis Galton.
\newblock An {{Examination}} into the {{Registered Speeds}} of {{American
  Trotting Horses}}, with {{Remarks}} on {{Their Value}} as {{Hereditary
  Data}}.
\newblock 62:310--315.

\bibitem{HayKSO21}
J.~Hayase, W.~Kong, R.~Somani, and S.~Oh.
\newblock Spectre: defending against backdoor attacks using robust statistics.
\newblock In {\em Proc.\ 38th International Conference on Machine Learning
  (ICML)}, 2021.

\bibitem{HR2021}
Lunjia Hu and Omer Reingold.
\newblock Robust {{Mean Estimation}} on {{Highly Incomplete Data}} with
  {{Arbitrary Outliers}}.
\newblock In {\em The 24th {{International Conference}} on {{Artificial
  Intelligence}} and {{Statistics}}, {{AISTATS}} 2021, {{April}} 13-15, 2021,
  {{Virtual Event}}}, volume 130 of {\em Proceedings of {{Machine Learning
  Research}}}, pages 1558--1566. PMLR.

\bibitem{Hub64}
Peter~J Huber.
\newblock Robust estimation of a location parameter.
\newblock pages 492--518, 1992.

\bibitem{HR2009}
Peter~J. Huber and Elvezio~M. Ronchetti.
\newblock {\em Robust {{Statistics}}}.
\newblock Wiley {{Series}} in {{Probability}} and {{Statistics}}. Wiley, 1
  edition.

\bibitem{JCP+2024}
Julie Josse, Jacob~M. Chen, Nicolas Prost, Gaël Varoquaux, and Erwan Scornet.
\newblock On the consistency of supervised learning with missing values.
\newblock 65(9):5447--5479.

\bibitem{KKM2018}
Adam~R. Klivans, Pravesh~K. Kothari, and Raghu Meka.
\newblock Efficient {{Algorithms}} for {{Outlier-Robust Regression}}.
\newblock In {\em Conference {{On Learning Theory}}, {{COLT}} 2018,
  {{Stockholm}}, {{Sweden}}, 6-9 {{July}} 2018}, volume~75 of {\em Proceedings
  of {{Machine Learning Research}}}, pages 1420--1430. PMLR.

\bibitem{KTZ2019}
Vasilis Kontonis, Christos Tzamos, and Manolis Zampetakis.
\newblock Efficient {{Truncated Statistics}} with {{Unknown Truncation}}.
\newblock In {\em 2019 {{IEEE}} 60th {{Annual Symposium}} on {{Foundations}} of
  {{Computer Science}} ({{FOCS}})}, pages 1578--1595. IEEE.

\bibitem{KWL+2013}
Chinmay Kulkarni, Koh~Pang Wei, Huy Le, Daniel Chia, Kathryn Papadopoulos,
  Justin Cheng, Daphne Koller, and Scott~R. Klemmer.
\newblock Peer and self assessment in massive online classes.
\newblock 20(6):33:1--33:31.

\bibitem{LRV2016}
K.~A. Lai, A.~B. Rao, and S.~Vempala.
\newblock Agnostic {{Estimation}} of {{Mean}} and {{Covariance}}.
\newblock In {\em 2016 {{IEEE}} 57th {{Annual Symposium}} on {{Foundations}} of
  {{Computer Science}} ({{FOCS}})}, pages 665--674. IEEE.

\bibitem{LJM+2020}
Marine Le~Morvan, Julie Josse, Thomas Moreau, Erwan Scornet, and Gael
  Varoquaux.
\newblock {{NeuMiss}} networks: Differentiable programming for supervised
  learning with missing values.
\newblock In {\em Advances in {{Neural Information Processing Systems}}},
  volume~33, pages 5980--5990. Curran Associates, Inc.

\bibitem{LJSV2021}
Marine Le~Morvan, Julie Josse, Erwan Scornet, and Gael Varoquaux.
\newblock What’s a good imputation to predict with missing values?
\newblock In {\em Advances in {{Neural Information Processing Systems}}},
  volume~34, pages 11530--11540. Curran Associates, Inc.

\bibitem{lecam1973convergence}
Lucien LeCam.
\newblock Convergence of estimates under dimensionality restrictions.
\newblock {\em The Annals of Statistics}, pages 38--53, 1973.

\bibitem{LiATSCRCBFCM08-science}
J.Z. Li, D.M. Absher, H.~Tang, A.M. Southwick, A.M. Casto, S.~Ramachandran,
  H.M. Cann, G.S. Barsh, M.~Feldman, L.L. Cavalli-Sforza, and R.M. Myers.
\newblock Worldwide human relationships inferred from genome-wide patterns of
  variation.
\newblock {\em Science}, 2008.

\bibitem{LKDD2024}
Shuyao Li, Sushrut Karmalkar, Ilias Diakonikolas, and Jelena Diakonikolas.
\newblock Learning a {{Single Neuron Robustly}} to {{Distributional Shifts}}
  and {{Adversarial Label Noise}}.
\newblock In {\em Advances in {{Neural Information Processing Systems}} 38:
  {{Annual Conference}} on {{Neural Information Processing Systems}} 2024,
  {{NeurIPS}} 2024, {{Vancouver}}, {{BC}}, {{Canada}}, {{December}} 10 - 15,
  2024}. arXiv.

\bibitem{Lit1993}
Roderick J.~A. Little.
\newblock Pattern-{{Mixture Models}} for {{Multivariate Incomplete Data}}.
\newblock 88(421):125--134.

\bibitem{Lit1992}
Roderick J.~A. Little.
\newblock Regression {{With Missing X}}'s: {{A Review}}.
\newblock 87(420):1227--1237.

\bibitem{LR2019}
Roderick J.~A. Little and Donald~B. Rubin.
\newblock {\em Statistical {{Analysis}} with {{Missing Data}}}.
\newblock John Wiley \& Sons.

\bibitem{LPRT2021}
Zifan Liu, Jongho Park, Theodoros Rekatsinas, and Christos Tzamos.
\newblock On {{Robust Mean Estimation}} under {{Coordinate-level Corruption}}.
\newblock In {\em Proceedings of the 38th {{International Conference}} on
  {{Machine Learning}}, {{ICML}} 2021, 18-24 {{July}} 2021, {{Virtual Event}}},
  volume 139 of {\em Proceedings of {{Machine Learning Research}}}, pages
  6914--6924. PMLR.

\bibitem{LW2012}
Po-Ling Loh and Martin~J. Wainwright.
\newblock High-dimensional regression with noisy and missing data: {{Provable}}
  guarantees with nonconvexity.
\newblock 40(3):1637--1664.

\bibitem{MP2021}
Karthika Mohan and Judea Pearl.
\newblock Graphical {{Models}} for {{Processing Missing Data}}.
\newblock 116(534):1023--1037.

\bibitem{MPJ+2020}
Marine~Le Morvan, Nicolas Prost, Julie Josse, Erwan Scornet, and Gaël
  Varoquaux.
\newblock Linear predictor on linearly-generated data with missing values: Non
  consistency and solutions.
\newblock In {\em The 23rd {{International Conference}} on {{Artificial
  Intelligence}} and {{Statistics}}, {{AISTATS}} 2020, 26-28 {{August}} 2020,
  {{Online}} [{{Palermo}}, {{Sicily}}, {{Italy}}]}, volume 108 of {\em
  Proceedings of {{Machine Learning Research}}}, pages 3165--3174. PMLR.

\bibitem{PasLJD10-ancestry}
P.~Paschou, J.~Lewis, A.~Javed, and P.~Drineas.
\newblock Ancestry informative markers for fine-scale individual assignment to
  worldwide populations.
\newblock {\em Journal of Medical Genetics}, 2010.

\bibitem{Pea1902}
Karl Pearson.
\newblock On the {{Systematic Fitting}} of {{Curves}} to {{Observations}} and
  {{Measurements}}.
\newblock 1(3):265--303.

\bibitem{PL1908}
Karl Pearson and Alice Lee.
\newblock On the {{Generalised Probable Error}} in {{Multiple Normal
  Correlation}}.
\newblock 6(1):59--68.

\bibitem{PJL2024}
Ankit Pensia, Varun Jog, and Po-Ling Loh.
\newblock Robust {{Regression}} with {{Covariate Filtering}}: {{Heavy Tails}}
  and {{Adversarial Contamination}}.
\newblock pages 1--12.

\bibitem{PP2008}
Kaare~Brandt Petersen and Michael~Syskind Pedersen.
\newblock {\em The Matrix Cookbook}.

\bibitem{PHC+2013}
Chris Piech, Jonathan Huang, Zhenghao Chen, Chuong~B. Do, Andrew~Y. Ng, and
  Daphne Koller.
\newblock Tuned {{Models}} of {{Peer Assessment}} in {{MOOCs}}.
\newblock In {\em Proceedings of the 6th {{International Conference}} on
  {{Educational Data Mining}}, {{Memphis}}, {{Tennessee}}, {{USA}}, {{July}}
  6-9, 2013}, pages 153--160. International Educational Data Mining Society.

\bibitem{Roc2024}
Sébastien Roch.
\newblock {\em Modern {{Discrete Probability}}: {{An Essential Toolkit}}}.
\newblock Cambridge University Press, 1 edition.

\bibitem{RT2010}
Mathieu Rosenbaum and Alexandre~B. Tsybakov.
\newblock Sparse recovery under matrix uncertainty.
\newblock 38(5).

\bibitem{RosPWCKZF02-Gen}
N.~Rosenberg, J.~Pritchard, J.~Weber, H.~Cann, K.~Kidd, L.A. Zhivotovsky, and
  M.W. Feldman.
\newblock Genetic structure of human populations.
\newblock {\em Science}, 2002.

\bibitem{Rub1976a}
Donald~B. Rubin.
\newblock Inference and {{Missing Data}}.
\newblock 63(3):581--592.

\bibitem{SBC2024}
Torben Sell, Thomas~B. Berrett, and Timothy~I. Cannings.
\newblock Nonparametric classification with missing data.

\bibitem{SteKL17-certified}
J.~Steinhardt, P.~W. Koh, and P.~Liang.
\newblock Certified defenses for data poisoning attacks.
\newblock In {\em Advances in Neural Information Processing Systems 30
  (NeurIPS)}, 2017.

\bibitem{TLR2003}
Gong Tang, Roderick J.~A. Little, and Trivellore~E. Raghunathan.
\newblock Analysis of multivariate missing data with nonignorable nonresponse.
\newblock 90(4):747--764.

\bibitem{TraLM18-backdoor}
B.~Tran, J.~Li, and A.~Madry.
\newblock Spectral signatures in backdoor attacks.
\newblock In {\em Advances in Neural Information Processing Systems 31
  (NeurIPS)}, 2018.

\bibitem{TCS+2001}
Olga Troyanskaya, Michael Cantor, Gavin Sherlock, Pat Brown, Trevor Hastie,
  Robert Tibshirani, David Botstein, and Russ~B. Altman.
\newblock Missing value estimation methods for {{DNA}} microarrays.
\newblock 17(6):520--525.

\bibitem{Tsi2006}
Anastasios Tsiatis.
\newblock {\em Semiparametric {{Theory}} and {{Missing Data}}}.
\newblock Springer {{Series}} in {{Statistics}}. Springer New York.

\bibitem{Tsy2008}
Alexandre~B. Tsybakov.
\newblock {\em Introduction to {{Nonparametric Estimation}}}.
\newblock Springer Science \& Business Media.

\bibitem{Tuk60}
J.~W. Tukey.
\newblock A survey of sampling from contaminated distributions.
\newblock {\em Contributions to probability and statistics}, 2:448--485, 1960.

\bibitem{vuurens2011much}
Jeroen Vuurens, Arjen~P de~Vries, and Carsten Eickhoff.
\newblock How much spam can you take? an analysis of crowdsourcing results to
  increase accuracy.
\newblock In {\em Proc. ACM SIGIR Workshop on Crowdsourcing for Information
  Retrieval (CIR’11)}, pages 21--26, 2011.

\bibitem{WZDD2023}
Puqian Wang, Nikos Zarifis, Ilias Diakonikolas, and Jelena Diakonikolas.
\newblock Robustly {{Learning}} a {{Single Neuron}} via {{Sharpness}}.
\newblock In {\em International {{Conference}} on {{Machine Learning}},
  {{ICML}} 2023, 23-29 {{July}} 2023, {{Honolulu}}, {{Hawaii}}, {{USA}}},
  volume 202 of {\em Proceedings of {{Machine Learning Research}}}, pages
  36541--36577. PMLR.

\bibitem{YZLL2024}
J.~Yang, K.~Zhou, Y.~Li, and Z.~Liu.
\newblock Generalized {{Out-of-Distribution Detection}}: {{A Survey}}.
\newblock 132(12):5635--5662.

\end{thebibliography}
\newpage

\newpage

\appendix
\section*{Appendix}

The structure of the appendix is as follows:
In \Cref{app:intro}, we provide material omitted 
from \Cref{sec:intro} of the main body, 
including the detailed description of 
upper bounds for our learning task, 
implications for other corruption models, 
a detailed summary of related work, and open problems. 
\Cref{app:prelims} records the notation and mathematical 
background required for our technical results. 
Finally, \Cref{app:warm-up} and \Cref{app:thm-proofs} 
give the technical proofs omitted from \Cref{sec:warm-up} 
and \Cref{sec:thm-proofs} respectively.

\section{Omitted Details from \Cref{sec:intro}} \label{app:intro}

\subsection{Discussion on \Cref{fact:known-algorithms}}
We provide a more detailed presentation of the three algorithms for linear-regression under coordinate-wise corruptions with the specific references to prior robust statistics literature below.

\KNOWNALGORITHMS*
\begin{proof}
    For the first algorithm, one can use any linear regression algorithm developed in the robust statistics literature, with the guarantee that if an $\eps$-fraction of the samples are corrupted (i.e., contain at least one corrupted coordinate), the algorithm's error is $O(\eps \sigma)$. One such polynomial-time algorithm, which uses $\tilde{O}(d/\eps^2)$ samples and works for any $\eps \in (0, \eps_0)$ for a sufficiently small absolute constant $\eps_0$, is the one from \cite{diakonikolas2023near}. Our coordinate-wise contamination model can be reduced to the corruption model for which that algorithm is designed by using $\eps = \eta (d+1)$.
    
    For the second algorithm, observe that for $(X, y)$ drawn from \Cref{def:linear_model}, we have $\E[yX] = \beta$. In particular, for each coordinate $j$, the expectation $\E[yX_j]$ gives an unbiased estimator for the $j$-th coordinate of $\beta$. Under our corruption model, the samples $\{ y_i X_i^{(j)} \}_{i=1}^n$ contain at most a $2\eta$ fraction of corruptions: up to $\eta$ from the $\{X_i^{(j)}\}_{i=1}^n$ and up to $\eta$ from the labels. Moreover, the distribution of the clean samples has sub-exponential tails and variance bounded by $O(\sqrt{\|\beta\|^2 + \sigma^2})$. For such random variables with an $O(\eta)$-fraction of corruptions, it is known that classical estimators such as trimmed mean (cf.~\Cref{fact:trimmed_mean_sub_exp}), yield an error of $O\left(\eta \log(1/\eta) \sqrt{\|\beta\|^2 + \sigma^2}\right)$ using $\tilde{O}(d/\eta^2)$ samples. Due to the additional Gaussian structure (i.e., the fact that $yX$ is not an arbitrary sub-exponential variable, but the product of Gaussian random variables), the usual $\log(1/\eta)$ factor can be removed with a more refined analysis \cite[Theorem 9]{diakonikolas2018robustly} (with sample complexity $O(\poly(d/\eta))$).
    
    Finally, the third algorithm is the trivial one that always outputs the zero vector.
\end{proof}

\subsection{Derivation of Regimes in \Cref{table:combined}}\label{sec:partition}

With \Cref{fact:known-algorithms} in hand, we discuss how the parameter space can be partitioned based on which algorithm from \Cref{fact:known-algorithms} achieves the best error (up to absolute constant factors). Due to space constraints in \Cref{sec:results}, the regimes were summarized there in a single, somewhat condensed table (\Cref{table:combined}). In this section, we present a more detailed version of this partitioning in \Cref{tab:eta-main,table:medium-eta,table:small-eta} and explicitly explain how the different regimes are derived. \Cref{table:combined} in \Cref{sec:intro} is a combined version of \Cref{tab:eta-main,table:medium-eta,table:small-eta} into a single table.

\begin{table}[ht]

  \begin{minipage}{\linewidth}
    \centering
    \begin{tabular}{lcccc}
\toprule
Regime for $\eta$ & $0 \leq \eta < \frac{0.49}{d}$ & $\frac{0.49}{d} \leq \eta < \frac{7}{\sqrt{d}}$ & $\frac{7}{\sqrt{d}} \leq \eta \leq 1$ \\
\midrule
Best upper bound & see \Cref{table:small-eta} & see \Cref{table:medium-eta} & $O(\|\beta\|)$ ($\cA_3$ from \Cref{fact:known-algorithms}) \\
Best lower bound & see \Cref{table:small-eta} & see \Cref{table:medium-eta} & $\Omega(\|\beta\|)$ (\Cref{thm:combined}\ref{it:thm-big-eta}) \\
\bottomrule
\end{tabular}
\vspace{-5pt}
    \caption{Estimation bounds for $\eta$ regimes ($C$ denotes a large constant).}
    \label{tab:eta-main}
  \end{minipage}

  \vspace{1em}

  \begin{minipage}{\linewidth}
    \centering
    \begin{tabular}{lccc}
\toprule
Regime for $\beta$
  & $0 \leq \|\beta\| < \eta \sqrt{d} \sigma$ 
  & $\eta \sqrt{d} \sigma \leq \|\beta\| < \sigma$ 
  & $\sigma \leq \|\beta\| < \infty$ \\
\midrule
\multirow{2}{*}{\shortstack{Best upper bound\\from \Cref{fact:known-algorithms}}} 
  & $O(\|\beta\|)$ 
  & $O(\eta \sqrt{d} \sigma)$ 
  & $O(\eta \sqrt{d} \|\beta\|)$ \\
  & (alg. $\cA_3$) 
  & (alg. $\cA_2$) 
  & (alg. $\cA_2$) \\
\midrule
\multirow{2}{*}{\shortstack{Best lower bound\\from \Cref{thm:combined}} }
  & $\Omega(\|\beta\|)$ 
  & $\Omega(\eta \sqrt{d} \sigma)$ 
  & $\Omega(\eta \sqrt{d} \|\beta\|)$ \\
  & (part \ref{it:thm-small-beta}) 
  & (part \ref{it:thm-small-beta}) 
  & (part \ref{it:thm-interm-eta})$^*$ \\
\bottomrule
\end{tabular}
\vspace{-4pt}
    \caption{Sub-regimes of the $\frac{0.49}{d} {\leq} \eta {<} \tfrac{7}{\sqrt{d}}$ case. $^*$Note: \Cref{thm:combined}\ref{it:thm-interm-eta} is only for $ \eta \in [\frac{2+c}{d},\tfrac{7}{\sqrt{d}}]$.}
    \label{table:medium-eta}
  \end{minipage}

  \vspace{1em}

  \begin{minipage}{\linewidth}
    \centering
    \begin{tabular}{lcccc}
\toprule
Regime for $\beta$ & $0 \leq \|\beta\| < \eta \sqrt{d} \sigma$ & $\eta \sqrt{d} \sigma \leq \|\beta\| < \sigma$ & $ \sigma \leq \|\beta\| < \sqrt{d}\sigma$  & $\sqrt{d} \sigma \leq \|\beta\| $ \\
\midrule
\multirow{2}{*}{\shortstack{Best upper bound\\from \Cref{fact:known-algorithms}}} & $O(\|\beta\|)$ & $ O(\eta \sqrt{d} \sigma )$ & $O(\eta \sqrt{d} \|\beta\| )$ &  $O(\eta d \sigma) $ \\
                             & (alg. $\cA_3$) & (alg. $\cA_2$) & (alg. $\cA_2$) & (alg. $\cA_1$) \\
\midrule
\multirow{2}{*}{\shortstack{Best lower bound\\from \Cref{thm:combined}}} & $ \Omega(\|\beta\|)$ &  $\Omega( \eta \sqrt{d} \sigma )$ & $\Omega( \eta \sqrt{d}\|\beta\|  )$  & $\Omega \left( \eta d \sigma \right)$   \\
                             & (part \ref{it:thm-small-beta}) & (part \ref{it:thm-small-beta})  & (part \ref{it:thm-small-eta})  &(part \ref{it:thm-small-eta})   \\
\bottomrule
\end{tabular}
\vspace{-4pt}
    \caption{Sub-regimes of the $0 \leq \eta < \frac{0.49}{d}$ case.}
    \label{table:small-eta}
  \end{minipage}

\end{table}

\paragraph{Regime $7/\sqrt{d} \leq \eta \leq 1$} Algorithm $\cA_1$ from \Cref{fact:known-algorithms} is not applicable in this regime. Among the other two algorithms, $\cA_3$ always has the best error (because the error of $\cA_2$ is in the order of $\eta \sqrt{d}\sqrt{\|\beta\|^2 + \sigma^2}  \geq \eta \sqrt{d} \|\beta\| \geq 7\|\beta\|$).  This explains the last column in \Cref{tab:eta-main}.

\paragraph{Regime $0.49/d \leq \eta < 7/\sqrt{d}$} Again,  algorithm $\cA_1$ is not applicable in this regime. Between $\cA_2$ and $\cA_3$, we note that whenever $\|\beta\| \geq \eta \sqrt{d} \sigma$, the error of $\cA_2$ is $O(\eta \sqrt{d}\sqrt{\|\beta\|^2 + \sigma^2})=O(\eta \sqrt{d}\|\beta\| + \eta \sqrt{d}\sigma) = O(\|\beta\|)$, i.e., $\cA_2$ has the best error (up to constant factors). Moreover, when $\|\beta\| < \sigma$, the term that dominates in the error of $\cA_2$ is $\eta\sqrt{d}\sigma$ and when $\|\beta\| \geq\sigma$ the dominating $\cA_2$ error term is $\eta \sqrt{d} \|\beta\|$. This explains the regimes shown in \Cref{table:medium-eta}.

\paragraph{Regime $0 \leq \eta < 0.49/d$} In this regime, all three algorithms are applicable. For the same reason as before, the error of $\cA_3$ is better than that of $\cA_2$ when $\|\beta\| < \eta \sqrt{d} \sigma$ and $\cA_2$ is better than $\cA_3$ otherwise. It also outperforms $\cA_1$, whose error is on the order of $\eta d \sigma$.
In the sub-regime $\eta \sqrt{d} \sigma \leq \|\beta\| < \sqrt{d} \sigma$, the error of $\cA_2$ is at most $O(\eta \sqrt{d} \sigma + \eta \sqrt{d} \|\beta\|)$, with both terms being smaller than $O(\eta d \sigma)$. Thus, $\cA_2$ achieves the best error in this sub-regime.
In the final sub-regime $\|\beta\| \geq \sqrt{d}\sigma$, the error of $\cA_1$ is of the order $\eta d \sigma$, which is smaller than that of $\cA_3$ (which is of the order $\|\beta\|  \geq \sqrt{d} \sigma \geq \sigma \geq \eta d \sigma$). Comparing with $\cA_2$, the dominating term in the error of $\cA_2$ is the $\eta \sqrt{d} \|\beta\|$ term which is larger than $\eta d \sigma$ whenever $\|\beta\| > \sqrt{d}\sigma$. Thus $\cA_1$ has the best error (up to a constant factor) in the  sub-regime $\|\beta\| > \sqrt{d}\sigma$. This explains \Cref{table:small-eta}.

\subsection{Unified Algorithm with Error Guarantees as in \Cref{tab:eta-main,table:medium-eta,table:small-eta}}\label{sec:unified}

We further claim that there exists a single algorithm that can automatically adapt to the best choice among $\cA_1, \cA_2, \cA_3$ from \Cref{fact:known-algorithms}, and achieve the error upper bounds in \Cref{tab:eta-main,table:medium-eta,table:small-eta}, without requiring a priori knowledge of the regime in which $\|\beta\|$ lies. The algorithm only needs to know $\eta$ (which could also be avoided by applying standard techniques like Lepski's method, but is beyond the scope of this work).

\begin{theorem} \label{thm:unified}
    There is a polynomial time algorithm $\cA$ that takes as input a parameter $\eta \in (0,1)$ and $n = O(\poly(d/\eta))$ samples from the $d$-dimensional linear regression model (\Cref{def:linear_model}) after  $\eta$-fraction of coordinatewise corruptions according to \Cref{def:model2}, and returns an estimate $\hat \beta$ of $\beta$ that satisfies the error upper bounds from \Cref{table:combined} (equivalently \Cref{tab:eta-main,table:medium-eta,table:small-eta}) with high constant probability.
\end{theorem}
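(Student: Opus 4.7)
The plan is to run every applicable algorithm from \Cref{fact:known-algorithms} to form a candidate set $\cC$, and then use a Scheffé-style tournament with a robust pairwise-comparison subroutine to return a candidate whose error matches the best upper bound in \Cref{table:combined} up to constant factors. If $\eta \geq 4/\sqrt{d}$, the algorithm simply outputs $\hat\beta_3 := 0$, which by the last row of \Cref{table:combined} is already optimal up to constants in that regime; otherwise it collects $\hat\beta_3 := 0$ always, $\hat\beta_2 \gets \cA_2$ (valid since $\eta < 4/\sqrt{d} \leq 0.24$ for $d$ sufficiently large, and bounded-$d$ corner cases are handled trivially), and, when $\eta d < 0.49$, also $\hat\beta_1 \gets \cA_1$. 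By \Cref{fact:known-algorithms}, some candidate $\hat\beta^\star \in \cC$ realizes the best-of-three error $e^\star$, so it remains only to select a candidate of error $O(e^\star)$ adaptively, i.e., without a priori knowledge of $\|\beta\|$ or $\sigma$.

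The pairwise comparison of $\hat\beta_j$ and $\hat\beta_l$ uses the identity $\|\hat\beta_j - \beta\|^2 - \|\hat\beta_l - \beta\|^2 = \|\hat\beta_j\|^2 - \|\hat\beta_l\|^2 - 2 v^\top \beta$ with $v := \hat\beta_j - \hat\beta_l$, reducing the task to robustly estimating the scalar $v^\top \beta = \E[y\, v^\top X]$ from the samples $\{y_i\, v^\top X_i\}_{i \in [n]}$. Whenever the corruption fraction $\eta(d+1)$ among these scalars is bounded below $1/2$ (which holds throughout the interesting subrange of the $\cA_1$-applicable regime), a standard $1$D trimmed-mean estimator—sharpened via the Gaussian structure of $(y, v^\top X)$ as in \cite[Theorem 9]{diakonikolas2018robustly}—returns $\hat s$ with $|\hat s - v^\top \beta| \lesssim \eta\, \|v\|\sqrt{\|\beta\|^2 + \sigma^2}$, using $\poly(d/\eta)$ samples. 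A short calculation verifies that this accuracy is enough to correctly order any pair $(\hat\beta_j, \hat\beta_l) \in \cC \times \cC$ in which one candidate achieves error $O(e^\star)$ and the other is a constant factor worse, so the Scheffé tournament (picking a candidate that wins all its pairwise tests) returns a candidate of error $O(e^\star)$.

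The main obstacle is the intermediate band $0.49/d \leq \eta < 4/\sqrt{d}$, in which $\cA_1$ is inapplicable and the per-sample corruption rate $\eta(d+1)$ may exceed $1/2$, invalidating the projection-based comparison. In this band, however, $\cC$ contains at most the two candidates $\{\hat\beta_2, \hat\beta_3\}$, and the comparison collapses to a single threshold test: robustly compute $\hat\mu_y$, a trimmed mean of $\{y_i^2\}_{i \in [n]}$ (the labels have corruption rate $\leq \eta < 1/2$, so $\hat\mu_y$ estimates $\E[y^2] = \|\beta\|^2 + \sigma^2$ up to a multiplicative constant), set $\hat\sigma^2 := \max(0,\, \hat\mu_y - \|\hat\beta_2\|^2)$, and output $\hat\beta_3 = 0$ if $\|\hat\beta_2\| \leq C\,\eta\sqrt{d}\,\hat\sigma$ and $\hat\beta_2$ otherwise, for a sufficiently large absolute constant $C$. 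A case analysis on whether $\|\beta\|$ lies above or below $\eta\sqrt{d}\sigma$, combined with the coordinate-wise $\cA_2$ guarantee from \Cref{fact:known-algorithms} and standard concentration for $\hat\mu_y$, verifies that this rule selects a candidate of error $O(e^\star)$ in every sub-regime covered by \Cref{table:medium-eta}. Since each subroutine (the base algorithms and the one-dimensional robust estimators) uses $\poly(d/\eta)$ samples and runs in polynomial time, the overall algorithm does too, completing the proof of \Cref{thm:unified}.
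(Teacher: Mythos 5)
Your overall plan (run all applicable base algorithms, then select) is the right idea, and your intermediate-band test comparing $\|\hat\beta_2\|$ against a robustly estimated $\eta\sqrt{d}\,\hat\sigma$ is essentially the same as the paper's selection rule. The pairwise-comparison subroutine in the small-$\eta$ regime, however, has a genuine gap.

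You correctly identify that the corruption fraction among the scalars $\{y_i\,v^\top X_i\}_{i\in[n]}$ is $\eta(d+1)$ (a single corrupted coordinate of $X_i$ suffices to corrupt $v^\top X_i$, so by spreading its per-coordinate budget across disjoint samples the adversary can corrupt an $\eta(d+1)$-fraction), but you then assert a trimmed-mean error of $|\hat s - v^\top\beta| \lesssim \eta\,\|v\|\sqrt{\|\beta\|^2+\sigma^2}$. The error of a one-dimensional robust mean estimator scales with the corruption \emph{rate} times the scale, so the correct bound is $\eta d\,\|v\|\sqrt{\|\beta\|^2+\sigma^2}$ (or $\eta d\log(1/(\eta d))$ times the scale with the generic trimmed-mean guarantee of \Cref{fact:trimmed_mean_sub_exp}); the factor you claim is off by $d$. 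There is no obvious way to salvage the missing factor: forming $v^\top\hat\beta_2$ coordinatewise gives $|v^\top(\hat\beta_2 - \beta)| \le \|v\|\,\|\hat\beta_2 - \beta\| \lesssim \eta\sqrt{d}\,\|v\|\sqrt{\|\beta\|^2+\sigma^2}$, which is still $\sqrt d$ larger than what you use, and the adversary can align the coordinate errors so this is tight.

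With the corrected bound the comparison fails precisely in the regime where it must work, namely $\|\beta\| \gg \sqrt{d}\sigma$ (where $\cA_1$ is best). There $e_1 \approx \eta d\sigma$, $e_2 \approx \eta\sqrt{d}\|\beta\|$ with $e_1 \ll e_2$, and $\|v\| = \|\hat\beta_1 - \hat\beta_2\| \approx e_2$. The gap is $\|\hat\beta_2-\beta\|^2 - \|\hat\beta_1-\beta\|^2 \approx e_2^2 = \eta^2 d\|\beta\|^2$, while the comparison noise is $\eta d\,\|v\|\sqrt{\|\beta\|^2+\sigma^2} \approx \eta d \cdot \eta\sqrt{d}\|\beta\| \cdot \|\beta\| = \eta^2 d^{3/2}\|\beta\|^2$, so the noise exceeds the gap by $\sqrt d$ and the tournament may select $\hat\beta_2$ even when $\hat\beta_1$ is better by a $\sqrt d$ factor. (Even with the $\eta\sqrt d$ version, gap and noise are of the same order, which is not enough once constants and log factors enter.) This is exactly the obstacle the paper's proof avoids by not estimating $v^\top\beta$ at all; instead it uses the residuals $y - \hat\beta_1^\top X$ to robustly estimate $\sigma$, forms proxy error bounds $e_1 \approx \eta d\sigma$ and $e_2 \approx \eta\sqrt{d}\sqrt{\|\beta\|^2 + \sigma^2}$, and compares those proxies rather than the realized errors. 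You should replace the Scheffé comparison with a test on these estimated error bounds (or on $\hat\sigma$ directly, as you already do in the intermediate band).
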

\begin{proof}
The idea for the meta-algorithm is to first use the knowledge of $\eta$ to restrict our attention to the relevant table among \Cref{tab:eta-main,table:medium-eta,table:small-eta} and then estimate the error bound of the three algorithms $\cA_1,\cA_2,\cA_3$ (up to absolute constant factors) in order to return the output of the algorithm corresponding to the smallest error bound. We show how this can be done for the regime $0 < \eta \ll 1/d$; the other regimes can be handled by the same arguments.

Suppose that $\eta \ll 1/d$ and let $\hat{\beta}_1, \hat{\beta}_2, \hat{\beta}_3$ be the outputs of $\cA_1,\cA_2$ and $\cA_3$ respectively. We first show how to estimate (within a multiplicative absolute constant) the upper bound on the error of $\cA_3$ from \Cref{table:small-eta} (i.e., the quantity $\eta d \sigma$). Since $\eta$ and $d$ are known, it suffices to estimate $\sigma$. Note that the expectation of the squared labels $y^2$ under the clean distribution of \Cref{def:linear_model} is $\E[y^2] = \| \beta\|^2 + \sigma^2$. The procedure to estimate $\sigma$ within a constant factor is then the following: We compute the residuals $y' = y - \hat{\beta}_1^\top X$ for all samples in the dataset. This transformation makes the clean samples as if they came from the model of \Cref{def:linear_model} with true regressor $\beta - \hat{\beta}_1$ (instead of $\beta$ that we started with). The expectation of the new squared labels is now $\E[(y')^2] = \|\beta - \hat{\beta}_1\|^2 + \sigma^2 = \sigma^2 ( 1 + O(\eta d)) = O(\sigma^2)$. We can use an outlier robust one-dimensional mean estimator to find a $u$ such that $|u - \sigma^2| \leq 0.01 \sigma^2$. This can be done using  \Cref{fact:trimmed_mean_sub_exp} below (where $\|(y')^2\|_{\psi_1}$ in our case is at most $O(\sigma^2)$):

\begin{fact}[Univariate Trimmed Mean, see, e.g., \cite{DK2023}] \label{fact:trimmed_mean_sub_exp}
  Let $\eps_0$ be a sufficiently small absolute constant. 
  There is an algorithm (trimmed mean) that, for every $\eps \in (0,\eps_0)$ and a univariate distribution $D$ that has $\psi_1$-norm at most $\sigma^2$ (cf.\ \Cref{def:orlich}), given a set of $n  \gg \log(1/\delta)/(\eps^2\log^2(1/\eps))$ samples from $D$ with corruption at rate $\eps$, outputs a $\widehat{\mu}$ such that $|\widehat{\mu} - \E_{X \sim D}[X]| \lesssim\sigma^2\eps\log(1/\eps)$ with probability at least $1-\delta$.
\end{fact}

\begin{definition}[Sub-exponential Random Variables]\label{def:orlich}
    We call $\snorm{\psi_1}{Y} := \sup_{p\geq 1}p^{-1} \E[|Y|^p]$ the sub-exponential norm of the random variable $Y$. 
\end{definition}
By using the above, so far we have shown how to obtain an estimate $e_1$ such that $|e_1 - \eta d \sigma| \leq 0.1 \eta d \sigma$ (i.e., $e_1$ is a multiplicative approximation to the error bound  of $\cA_1$ listed in \Cref{table:small-eta}). By using \Cref{fact:trimmed_mean_sub_exp} again we can similarly obtain an estimate for the error bound of $\cA_2$ from the table, i.e., $e_2$ that satisfies $|e_2 - \eta \sqrt{d} \sqrt{\|\beta\|^2 + \sigma^2}| \leq 0.1 \eta \sqrt{d}\sqrt{\|\beta\|^2 + \sigma^2}$.  Let $C,C',C''$ be sufficiently large absolute constants with $C \ll C' \ll C''$. The meta-algorithm that chooses which of the outputs $\hat{\beta}_1,\hat{\beta}_2,\hat{\beta}_3$ to return is the following. If $C e_1 < e_2 $ the meta-algorithm returns $\hat{\beta}_1$, otherwise it does the following: if $\|\hat{\beta}_2\| > C' e_2$  the meta-algorithm returns $\hat{\beta}_2$, otherwise it returns $\hat{\beta}_3$. The correctness follows easily: 
In the regime $\|\beta \| \gg \sqrt{d} \sigma$ where $\cA_1$ has the best error, the first check yields $e_2 = \Omega(\eta \sqrt{d} \sqrt{\sigma^2 + \|\beta\|^2}) = \Omega( \eta \sqrt{d}\|\beta\|) \gg e_1$ thus the meta-algorithm will indeed return $\hat{\beta}_1$. 
If it is the case that that we are in the regime where $\cA_3$ has the best error, i.e., $\|\beta\| < C e_2$  then we can easily see that the meta-algorithm will return $\hat \beta_3$. This is because the first check will yield $e_2 \leq C e_1$ (since $e_2 = O(\eta \sqrt{d}\sqrt{\sigma^2 + \|\beta\|^2}) =O( \eta \sqrt{d} \sigma) = O(\eta d \sigma) \leq C e_1$) and the second check will yield
$\|\hat{\beta}_2\| \leq \|\beta\| + \|\beta - \hat{\beta}_2\| \leq C e_2 + O(\eta \sqrt{d}\sqrt{\|\beta\|^2 + \sigma^2}) \leq C' e_2$. If we are in the regime where $\cA_2$ has the best error, i.e., $\|\beta\| > C'' e_2$ the first check will yield $e_2 \leq C e_1$ (since $e_2 = O(\eta \sqrt{d}\sqrt{\sigma^2 + \|\beta\|^2}) = O(\eta d \sigma) \leq C e_1$) and the second check will yield $\|\hat{\beta}_2\| \geq \|\beta\| - \|\beta - \hat{\beta}_2\| \geq C'' e_2 - O(e_2) >  C'e_2$ thus the meta-algorithm will indeed yield $\hat{\beta}_2$. Note that we left out the regime $C' e_2 \leq \|\beta\| \leq C'' e_2$. In that case, it does not matter which of $\hat{\beta}_2,\hat{\beta}_3$ we output because their errors are within a constant factor from each other.

\end{proof}

\subsection{Other Implications of Our Results} \label{app:discussion}

In this brief subsection, we discuss some immediate implications of our results.

\paragraph{Implications on Models with a Restricted Set of Corrupted Coordinates}

In this paper, we primarily discuss the case in which every coordinate is subject to an $\eta$ fraction of corruptions. However, another reasonable model would be to consider the case where only a subset $\mathcal{S} \subset \{1, 2, \dots, d\}$ of $|\mathcal{S}| = r \leq d$ coordinates is subject to adversarial erasures as in \Cref{def:model} or more general corruptions as in \Cref{def:model2}. It is straightforward to verify that all our results generalize to this model, with estimation error bounds now depending on $r$ in place of $d$ and on $\|\beta_{\mathcal{S}}\|_2$ in place of $\|\beta\|_2,$ where $\beta_{\mathcal{S}} = \sum_{i\in \mathcal{S}}\beta_i e_i$ and $e_i$ denotes the $i^{\rm th}$ standard basis vector. 

\paragraph{Implications for Generalized Linear Models}
While our primary focus in this work is on lower and upper bounds for the attainable estimation error $\|\beta - \hat{\beta}\|_2$, some of our lower bounds have broader implications on the attainable squared error of generalized linear models (GLMs)---where the ``ground truth'' labels follow $f(\beta^\top X)$ for some possibly more general function $f$ called the activation,  instead of $\beta^\top X$ corresponding to the case of linear regression (where $f(t) =t$ is the identity function). In this case, the focus is usually on bounding the mean squared error for a predictor $\hat{\beta}$, defined by $\mathcal{L}(\hat{\beta}) = \E_{(X, y)\sim \mathcal{D}}[(y - f(\hat{\beta}^\top X))^2]$ for $(X, y)$ jointly distributed according to a distribution $\mathcal{D}$. It is immediate that in the case of linear regression model from \Cref{def:linear_model}, we have 
\begin{align*}
    \mathcal{L}(\hat{\beta}) &= \sigma^2 + \E_{X\sim \mathcal{N}}[(\beta^\top X - \hat{\beta}^\top X)^2]\\
    &= \sigma^2 + \E_{X\sim \mathcal{N}}[(\beta - \hat{\beta})^\top XX^\top (\beta - \hat{\beta})]\\
    &= \sigma^2 + \|\beta - \hat{\beta}\|_2^2.
\end{align*}
Thus, our lower bounds on $\|\beta - \hat{\beta}\|_2$ directly imply lower bounds on the mean squared error for linear regression, using the above derivation.

Concerning GLMs in broader generality, let us consider the realizable case, where all labeled examples $(X, y)$ satisfy $y = f(\beta^\top X)$ for some fixed vector $\beta$ and activation function $f.$  
Notice first that the lower bounds in \Cref{thm:combined}\ref{it:thm-big-eta} and \Cref{thm:combined}\ref{it:thm-interm-eta} apply even when $\sigma = 0$.\footnote{While constructions of other lower bounds (from \Cref{thm:combined}\ref{it:thm-small-beta},\ref{it:thm-small-eta}) may also have implications on the mean squared error of GLMs for $\sigma \neq 0$, here we focus on \Cref{thm:combined}\ref{it:thm-big-eta},\ref{it:thm-interm-eta} as the implications are more direct.} Our lower bounds for linear regression are proved by constructing a coupling between the distribution of $X$ conditioned on the value of $y=t$ under two distinct hypotheses corresponding to sufficiently different prediction/weight vectors $\beta$ and $\beta'$. Since for $\sigma = 0$ in the case of linear regression this is equivalent to conditioning on the value of $\beta^\top X = (\beta')^\top X = t$, the same lower bounds (implying impossibility of distinguishing between $\beta$ and $\beta'$) generalize to the case where $y = f(t)$. This is because considering more general functions $f$ that are not identity (as in the case of linear regression) can only further obscure information if $f$ is not invertible (as is the case, for example, for ReLU activations where $f(t) = \max\{0, t\}$). 

The above discussion implies that the same lower bounds as in \Cref{thm:combined}\ref{it:thm-big-eta} and \Cref{thm:combined}\ref{it:thm-interm-eta} apply for the value of $\|\hat{\beta} - \beta\|_2$, where $\beta$ is the ``ground truth'' predictor and $\hat{\beta}$ the predictor that can be constructed by an algorithm. A consequence is that the mean squared error can be bounded below as a function of $\|\hat{\beta} - \beta\|_2^2$ for a broad class of GLMs. In particular, for the class of $(a, b)$-unbounded activations---namely, monotonically non-decreasing functions $f$ that are $b$-Lipschitz, satisfy $f(0) = 0,$ and are such that $f'(t) \geq a > 0$ for $t > 0$---known results (cf.\ \cite{LKDD2024,WZDD2023}) imply that for $X$ distributed according to the standard normal distribution, we have $\mathcal{L}(\hat{\beta}) = \Omega(\mathrm{poly}(b/a))\|\hat{\beta} - \beta\|_2^2.$ In particular, since ReLU activation is $(a, b)$-unbounded with $a = b = 1,$ we have that for the ReLU activation, $\mathcal{L}(\hat{\beta}) = C\|\beta - \hat{\beta}\|_2^2$ for a universal constant $C > 0.$ As a consequence, the mean squared error is of the order $\Omega(\|\beta\|_2^2)$ for $\eta \in [7/\sqrt{d}, 1]$ and of the order $\Omega(c^2\eta^2 d \|\beta\|_2^2)$ for $\eta \in [(2+c)/d, 7/\sqrt{d}].$ In other words, unless both $\eta$ and $\|\beta\|_2$ are small, no meaningful prediction---as measured by the mean squared error---is possible. 

These observations reinforce the argument that linear regression is the most basic model to study when considering limitations imposed by adversarial coordinate-wise corruptions studied in our work.

\subsection{Related Work} \label{app:related_work}

There is a vast literature on learning with missing data in various forms. To the best of our knowledge, this work is the first to address the fully adversarial missingness model of \Cref{def:model} in the context of high-dimensional regression. The most relevant prior work is \cite{LPRT2021}, which considers fully adversarial missingness, but for the simpler task of mean estimation. \cite{HR2021} also considers mean estimation and uses a milder setting for the missingness model. We provide a discussion of these works below, followed by a broader (though not exhaustive) review of related literature, with an emphasis on settings most closely aligned with ours.

\paragraph{Comparison with \cite{LPRT2021}}
This work studies the optimal error of Gaussian mean estimation in the same coordinate-wise adversary of \Cref{def:model,def:model2} (among others). Concretely, it considers a slightly more general problem than the vanilla Gaussian mean estimation where data has some additional low-dimensional structure: each sample $x_i$ follows $\cN(\mu,\Sigma)$ but can be written as $x_i = A z_i$, where $A \in \R^{d \times r}$ and $z \in \R^{r}$ is some lower dimensional Gaussian. The goal is to estimate $\mu$. Their lower bounds on the estimation error are based on the approach of constructing couplings with small coordinate-wise disagreements (as in the present work). In terms of algorithms, the approach of \cite{LPRT2021} for the missing data of \Cref{def:model} is a two-step algorithm that first estimates the missing values and then runs existing estimators on the imputed dataset. They show that the first step becomes NP-hard when data is replaced (\Cref{def:model2}) instead of missing. However, they are able to get some preliminary results by a randomized algorithm for special cases (like when $A$ is known).
In the most general case for the mean estimation problem of \cite{LPRT2021}, the answer to the question of whether missing data are easier to handle than replaced data remains unclear. In the case of the linear regression problem considered in this paper, we find that the answer is negative.
On the technical side, our work builds on the coupling technique of \cite{LPRT2021}, but significantly extends it to handle new challenges in the linear regression setting. The main difficulty is that the estimation error behaves differently across many distinct regimes, each requiring a separate analysis. As discussed in \Cref{sec:overview}, a direct application of the \cite{LPRT2021} coupling fails when the additive noise $\sigma$ is small, leaving several regimes unaddressed. We resolve this by coupling only the first $d-1$ coordinates and treating the last one as additive noise.
This regime also requires a redesigned hypothesis testing setup. The original test inspired from \cite{LPRT2021} that compares $\beta = (1,\ldots,1)$ and $\beta' = (-1,\ldots,-1)$, is tailored to show an $\Omega(\|\beta\|)$ lower bound, i.e., that no non-trivial estimation is possible. In our case, however, some non-trivial estimation is possible, so we design a finer test: we split the coordinates into two halves with differing values in $\beta$ and $\beta'$ and apply separate couplings to each half.
Finally, one particularly challenging regime remains, which we address by further splitting both the coordinates and the labels into two parts, and using a more delicate coupling between the parts of covariates and labels.

\paragraph{Comparison with \cite{HR2021}}
This work also studies mean estimation under a missing data model. Their setting involves a combination of two types of erasures: up to an $\eta$-fraction of coordinates may be missing, and up to an $\eps$-fraction of entire samples may be deleted. The second type of deletion (entire samples) can be fully adversarial, while the first type (coordinate-wise) is less adversarial than in our model, because in \cite{HR2021} the adversary must commit to the missingness pattern before seeing the data. The algorithms proposed in \cite{HR2021} are based on imputing missing values, and thus do not naturally extend to the contamination setting, where adversaries may replace values rather than erase them.

\paragraph{Literature on Missing Data} The study of missing data has a long history, motivated by concerns like those discussed in \Cref{sec:intro}. In a seminal work, \cite{Rub1976a} distinguishes three types of missingness: when missingness is independent of the data, it is called missing completely at random (MCAR); when it depends only on the observed data, it is missing at random (MAR); otherwise, it is missing not at random (MNAR), where missingness depends on both observed and unobserved values. See also \cite{Tsi2006,TLR2003,LR2019,MP2021} for a more in-depth review.

Much of the literature on regression with missing data has focused on parameter estimation. Early works include \cite{Lit1992, Lit1993}, as well as \cite{RT2010}, which proposes a sparse variant of the Lasso estimator. More recent approaches revisit the problem from the perspectives of imputation and collaborative learning \cite{CAM2020, CCD2023}.

Other recent efforts have shifted focus to the label prediction problem, which is a fundamentally different goal, as the test set is also expected to contain missing entries. Notably, the Bayes optimal predictor in this setting decomposes into a sum over predictors for each missingness pattern, and the number of such patterns can be exponential. This pattern-specific structure has been studied in \cite{MPJ+2020, LJM+2020, ABDS2022}, which characterize minimax-optimal rates under MAR and MCAR (and in \cite{SBC2024} for classification). A body of work also focuses on the idea of using a two-step procedure: first impute the missing entries, then apply algorithms designed for complete data. This method has been shown to be Bayes optimal in various settings \cite{JCP+2024, bertsimas2024simple, LJSV2021}.

In light of the above, our work fits into the MNAR setting, but to the best of our knowledge, is not directly related to any of the aforementioned works. Unlike the prediction-focused literature (which is more challenging and often incurs exponential complexity) we study parameter estimation. Moreover, prior work on parameter estimation for linear regression either assumes MAR \cite{LW2012, CCD2023} or considers much more benign forms of MNAR; for instance, \cite{CAM2020} assumes that the missingness pattern is drawn i.i.d. from an arbitrary distribution.

\paragraph{Truncated Statistics}
Although technically this literature falls into the Missing Not At Random category from the previous paragraph, it is sufficiently well developed to have its own place in the literature. Truncated statistics refer to the situation where samples falling outside of a fixed set are censored. Parameter estimation under this setting traces back to the early work of Galton, Pearson and Lee \cite{Gal1897,Pea1902,PL1908}. Despite early interest in the problem, computationally efficient algorithms for multivariate Gaussians were developed relatively recently, with some notable works being \cite{DGTZ2018,KTZ2019} for mean estimation and  \cite{DGTZ2019,DRZ2020,DSYZ2021} for linear regression. While the previous works operate under a setting where each sample is either entirely censored or entirely visible, \cite{BDGW2025} extends the setting to partially truncated data, which each coordinate individually visible or censored (for the problem of Gaussian mean estimation). That said, the setting considered 
in \cite{BDGW2025} is still milder than ours. It considers either (i) a setting where each coordinate is erased based on membership in a fixed, predetermined set (to which the algorithm has oracle access), or (ii) erasures based on a simple projection rule. Under this model, consistent estimation is possible, as shown in \cite{BDGW2025}, whereas in our fully adversarial erasure setting, it is provably not.

\paragraph{Robust Statistics}
Robust Statistics was developed in the 1960s \cite{Tuk60, Hub64} to handle scenarios where a small fraction of the data points are arbitrarily corrupted. As in the setting of this work, such  corruptions do not allow for consistent estimation. Early work focused on characterizing the information-theoretic error for univariate Gaussian mean estimation. Since then, a large body of work developed estimators for various tasks \cite{HR2009}. However, these estimators were computationally inefficient for high-dimensional tasks. The field saw a resurgence with the work of \cite{LRV2016,DKK+19} that tackled high-dimensional robust mean estimation in polynomial time. 
In the last decade, a plethora of robust estimators have been introduced for different tasks, including linear regression \cite{KKM2018,DKS2019,PJL2024,CAT+2020}  considered in this work. For a comprehensive treatment, see the book \cite{DK2023}. The key difference is that the de facto contamination models in robust statistics treat each sample as either fully clean or fully corrupted, making non-trivial estimation impossible as the corruption rate approaches $1/2$. This fails to capture cases where each coordinate is only mildly corrupted, i.e., few samples are corrupted per coordinate, even if most samples are affected overall. Such settings may still allow for meaningful estimation, but algorithms from the robust statistics literature are inapplicable.

\subsection{Open Problems}\label{sec:open-problems}

While our work addresses the estimation error for the standard model of linear regression with coordinate-wise corruptions by providing matching upper and lower bounds for essentially all possible parameter regimes, there are several avenues for future research that merit further investigation.   

First, it would be interesting to consider $X \sim \cN(0,\Sigma)$ in the linear regression problem, with $\Sigma \neq I$ and unknown to the algorithm in the linear regression model. Does the conclusion that dealing with missing data is no easier than handling adversarially replaced data still hold in this setting?

Second, although the algorithms are efficient in terms of runtime (i.e., polynomial time) and sample complexity (using $\tilde O(d/\eta^2)$ samples), the optimal sample complexity for the problem remains unclear. In the clean setting (where all samples follow the distribution of \Cref{def:linear_model}), the optimal sample complexity is $d/u^2$, where $u$ is the target estimation error. In some regimes of \Cref{tab:eta-main,table:medium-eta,table:small-eta}, the estimation error scales with $\|\beta\|$, which can be very large. This implies that (at least in the noiseless setting) fewer samples are needed to achieve that level of error.

Third, as discussed in \Cref{app:discussion}, some of our lower bounds directly imply lower bounds for the mean squared error of a broad class of generalized linear models that includes ReLU as a special case, even with perfect, noise-free labels (i.e., in the realizable case). It is however unclear if those lower bounds are tight or if they can be strengthened further, since non-invertible activations like ReLU can further obscure information available to the algorithm (because the label is zero whenever the argument $\beta^\top X$ is negative). Relatedly, it would be interesting to develop polynomial-time algorithms that can match the information-theoretic lower bounds (or prove no such algorithms exist, even when restricted to classes such as Statistical Query algorithms).

Finally, our results demonstrate that adversarial coordinate-wise deletions make both estimation and prediction challenging, even in the most basic model of Gaussian linear regression and with infinite samples. On the other hand, a fully random model of deletions (where deletions are independent of the observed data) allows for the diminishing estimation error as the sample size is increased. These two extremes beg the question of what lies in between, when the deletions are neither fully adversarial nor fully random.

\section{Preliminaries}\label{app:prelims}

We provide the full version of the preliminaries here.

\subsection{Notation}

\paragraph{Basic Probability Notation} We write $X \sim D$ to denote a random variable $X$ that is distributed according to the distribution $D$. We use $P_X(x)$ to denote the pdf of $X$. For multiple variables, we use $(X,Y) \sim D$ to denote that $X$ and $Y$ are jointly distributed according to $D$, and write $P_{X,Y}(x,y)$ for the pdf of that joint distribution. If $(X,Y) \sim D$ we will also use the notation $X|Y=y$ to denote the random variable distributed according to the \emph{conditional} distribution of $X$ given the occurrence of the value $y$ for $Y$, i.e., the distribution whose pdf is the conditional density function $P_{X|Y}(x|y) = P_{X,Y}(x,y)/P_Y(y)$, and we denote by $X$ the random variable distributed according to the \emph{marginal} distribution of $D$, i.e., the distribution with pdf $P_X(x) = \int P_{X,Y}(x,y) \d y$.
We use $\E_{X \sim D}[X]$ for the expectation of $X$.
If $X \sim P_X$ and $Y \sim P_Y$ are continuous random variables over some domain $\mathcal{X}$, with pdfs $P_X(x)$ and $P_Y(y)$, we denote by $\dtv(X,Y) = \frac{1}{2} \int_{x \in \mathcal{X}} |P_X(x) - P_Y(x)| \d x$ the total-variation distance between the distributions $P_X$ and $P_Y$ (by slightly abusing notation, $\dtv(X,Y)$ may some times be also denoted as $\dtv(P_X,P_Y)$). 
We denote the Kullback–Leibler (KL) divergence between distributions $X \sim P_X$ and $Y \sim P_Y$ by $\DKL(X  \,\|\, Y) = \int_{x \in \mathcal{X}} P_X(x) \log\left(\frac{P_X(x)}{P_Y(x)}\right) \d x$, assuming $P_X$ is absolutely continuous with respect to $P_Y$.

\paragraph{Basic Notation}  We use $\mathbb{Z}_+$ for the set of positive integers. We denote $[n]=\{1,\ldots,n\}$. For a vector $x$ we denote by $\| x \|$ its  Euclidean norm. Let $I_d$  denote the $d\times d$ identity matrix (omitting the subscript when it is clear from the context). We use $\1_d$ for the all-ones vector in $\R^d$.
 We use  $\top$ for the transpose of matrices and vectors.
 We use $|A|$ to denote the determinant of matrix $A$.
We use $a\lesssim b$ to denote that there exists an absolute universal constant $C>0$ (independent of the variables or parameters on which $a$ and $b$ depend) such that $a\le Cb$. In our notation $a = O(b)$ has the same meaning as $a \lesssim b$ (similarly for $\Omega(\cdot)$ notation) We use $\tilde O$ and $\tilde \Omega$ to hide polylogarithmic factors.

\paragraph{Couplings} If $P$ and $Q$ are probability distributions over $\cX$, then a coupling $\Pi$ of $P$ and $Q$ is any distribution over $\cX \times \cX$ such that the marginals of $\Pi$ coincide with $P$ and $Q$. We write $(X,Y) \sim \Pi$ to denote random variables that are distributed according to the coupling; $X$ is marginally distributed according to $P$ and $Y$ according to $Q$.

\subsection{Useful Probability and Linear Algebraic Facts}

We start with some facts about Gaussian distributions, namely the inner product of a Gaussian and a fixed vector is another Gaussian, and the distribution conditioned on one of the variables is also Gaussian.

\begin{restatable}[see, e.g., \cite{PP2008}]{fact}{FACTLINEAR}\label{fact:linear}
    If $X \sim \cN(\mu, \Sigma)$ is a multivariate Gaussian vector in $\R^d$, and $u \in \R^d$ is another fixed (deterministic) vector, then $u^\top X \sim \cN(u^\top \mu, u^\top\Sigma u)$.
\end{restatable}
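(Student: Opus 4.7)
The plan is to use the moment generating function (MGF) characterization of Gaussian distributions, since the MGF uniquely determines a distribution on $\R$ and the MGF of a linear projection of $X$ transforms in a particularly clean way.

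First, I would identify the mean and variance of $Y := u^\top X$ directly from the definition of $X$. By linearity of expectation, $\E[Y] = u^\top \E[X] = u^\top \mu$. For the variance, I would expand
\[
\Var(Y) = \E\bigl[(u^\top (X - \mu))^2\bigr] = u^\top \E\bigl[(X - \mu)(X - \mu)^\top\bigr] u = u^\top \Sigma u,
\]
using that the scalar $u^\top(X - \mu)$ equals its own transpose and that the covariance matrix of $X$ is $\Sigma$. These two computations pin down the only candidate distribution, namely $\cN(u^\top \mu, u^\top \Sigma u)$.

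Next, I would verify Gaussianity of $Y$ by computing its MGF and matching it to that of $\cN(u^\top \mu, u^\top \Sigma u)$. Recall that a multivariate Gaussian $X \sim \cN(\mu, \Sigma)$ has MGF $M_X(v) = \exp\bigl(v^\top \mu + \tfrac{1}{2} v^\top \Sigma v\bigr)$ for all $v \in \R^d$, which may be taken as the definition of a (possibly degenerate) Gaussian. Then for any $t \in \R$,
\[
M_Y(t) = \E\bigl[e^{t\, u^\top X}\bigr] = M_X(t u) = \exp\!\left(t\, u^\top \mu + \tfrac{1}{2} t^2\, u^\top \Sigma u\right),
\]
which is precisely the MGF of a univariate $\cN(u^\top \mu, u^\top \Sigma u)$. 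Since MGFs uniquely determine distributions when finite in a neighborhood of zero (and Gaussian MGFs are finite everywhere), the claim follows.

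There is essentially no obstacle here beyond a minor convention: in the degenerate case $u^\top \Sigma u = 0$ (e.g., when $u$ lies in the kernel of $\Sigma$), $Y$ equals the constant $u^\top \mu$ almost surely, which is interpreted as the degenerate Gaussian $\cN(u^\top \mu, 0)$; the MGF computation above handles this case uniformly, so no separate argument is needed.
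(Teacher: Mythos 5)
Your proof is correct. The paper states this as a known fact and cites a reference (The Matrix Cookbook) rather than proving it, so there is no in-paper argument to compare against; your MGF-based derivation is the standard textbook proof, and your handling of the degenerate case $u^\top \Sigma u = 0$ is a sensible precaution that the statement itself implicitly relies on.
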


\FACTCONDITIONAL*

The following fact provides a closed-form formula for the KL-divergence between two multivariate Gaussians. The formula can be derived by direct computation and using properties from Section 8.2 of \cite{PP2008}.

\begin{restatable}[KL-divergence between multivariate Gaussians]{fact}{FACTKLGAUSSIANS}\label{fact:KL-Gaussians}
    The KL-divergence between $\cN(\mu_1, \Sigma_1)$ and $\cN(\mu_2, \Sigma_2)$ is
    \begin{align*}
        \DKL\left(\cN(\mu_1, \Sigma_1)  \,\|\,  \cN(\mu_2, \Sigma_2)  \right)
         = \frac{1}{2} \left( 
\log \frac{|\Sigma_2|}{|\Sigma_1|} 
- d 
+ \operatorname{tr}(\Sigma_2^{-1} \Sigma_1) 
+ (\mu_2 - \mu_1)^\top \Sigma_2^{-1} (\mu_2 - \mu_1) 
\right) .
    \end{align*}
\end{restatable}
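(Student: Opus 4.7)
The plan is a direct computation from the definition of KL-divergence, unpacking the Gaussian densities and evaluating the resulting expectation term-by-term. I would start by writing
\[
\DKL(\cN(\mu_1, \Sigma_1) \;||\; \cN(\mu_2, \Sigma_2)) = \E_{X \sim \cN(\mu_1, \Sigma_1)}\!\left[\log \frac{p_1(X)}{p_2(X)}\right],
\]
where $p_i(x) = (2\pi)^{-d/2} |\Sigma_i|^{-1/2} \exp\bigl(-\tfrac12 (x-\mu_i)^\top \Sigma_i^{-1}(x-\mu_i)\bigr)$. After substituting these formulas, the $(2\pi)^{-d/2}$ normalizing factors cancel, and what remains is a log of the ratio of determinants plus a difference of two quadratic forms.

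Next, I would compute the expectation of each quadratic form under $X \sim \cN(\mu_1, \Sigma_1)$. The key identity here is the standard trace-expectation formula
\[
\E_{X \sim \cN(\mu_1, \Sigma_1)}[(X-a)^\top M (X-a)] = \tr(M \Sigma_1) + (\mu_1 - a)^\top M (\mu_1 - a),
\]
valid for any fixed vector $a$ and matrix $M$, which can be verified by writing $X - a = (X-\mu_1) + (\mu_1 - a)$, expanding, and noting that the cross term has zero expectation while $\E[(X-\mu_1)^\top M (X-\mu_1)] = \tr(M \Sigma_1)$ by the cyclic property of trace. Applying this with $a=\mu_1, M=\Sigma_1^{-1}$ gives $\tr(I_d) = d$, and applying it with $a=\mu_2, M=\Sigma_2^{-1}$ gives $\tr(\Sigma_2^{-1}\Sigma_1) + (\mu_1-\mu_2)^\top \Sigma_2^{-1}(\mu_1-\mu_2)$.

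Combining these pieces, the log-determinant contributions yield $\tfrac{1}{2}\log(|\Sigma_2|/|\Sigma_1|)$, while the two quadratic-form expectations combine to $\tfrac{1}{2}\bigl(-d + \tr(\Sigma_2^{-1}\Sigma_1) + (\mu_2-\mu_1)^\top \Sigma_2^{-1}(\mu_2-\mu_1)\bigr)$, matching the claimed formula. There is no genuine obstacle here: the only subtlety is being careful with signs and bookkeeping in the expansion of the two quadratic forms, and invoking the cyclic trace identity correctly when taking the expectation of $(X-\mu_1)^\top \Sigma_2^{-1}(X-\mu_1)$-type terms. Since the statement is cited as a known fact from \cite{PP2008}, a short derivation along these lines suffices.
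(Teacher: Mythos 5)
Your derivation is correct and matches the approach the paper endorses: the paper states this as a known fact, explicitly remarking that it "can be derived by direct computation and using properties from Section 8.2 of \cite{PP2008}," which is precisely the unpack-the-densities-and-take-expectations calculation you carry out. Your bookkeeping of the two quadratic-form expectations (the $\tr(I_d)=d$ term from matching mean/covariance, and the $\tr(\Sigma_2^{-1}\Sigma_1) + (\mu_1-\mu_2)^\top\Sigma_2^{-1}(\mu_1-\mu_2)$ term from the mismatched ones) is right, and the final assembly reproduces the claimed formula.
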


We will bound total variation distances using the previous fact combined with Pinsker's inequality, stated below.

\begin{restatable}[Pinsker's Inequality (see, e.g., \cite{Tsy2008})]{fact}{FACTPINSKER}\label{fact:pinsker}
Let $P$ and $Q$ be two probability distributions over the same measurable space. Then,
\[
\DTV(P, Q) \leq \sqrt{\frac{1}{2} \DKL(P  \,\|\, Q)}.
\]
\end{restatable}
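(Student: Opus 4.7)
The plan is to prove Pinsker's inequality via the standard two-step argument: reduce the general case to a two-point (Bernoulli) version using the data-processing inequality for KL divergence, and then verify the two-point case by elementary calculus.

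First, I would introduce the Scheffé decomposition: letting $p, q$ denote densities of $P, Q$ with respect to some dominating measure, define the set $A = \{x : p(x) > q(x)\}$. A direct calculation (or the standard variational characterization of total variation as a supremum over events) yields $\DTV(P, Q) = P(A) - Q(A)$.

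Second, I would apply the data-processing inequality to the binary coarsening $X \mapsto Y = \mathbf{1}[X \in A]$. Under this map, $Y$ is Bernoulli with parameter $P(A)$ when $X \sim P$ and parameter $Q(A)$ when $X \sim Q$. The data-processing inequality for KL divergence (which itself follows from Jensen's inequality applied to the convex function $t \mapsto t\log t$, i.e., the log-sum inequality) gives
\[
\DKL(P \,\|\, Q) \;\geq\; \DKL\!\bigl(\mathrm{Ber}(P(A)) \,\|\, \mathrm{Ber}(Q(A))\bigr).
\]

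Third, it remains to prove the Bernoulli form: for all $a, b \in [0,1]$,
\[
a\log\frac{a}{b} + (1-a)\log\frac{1-a}{1-b} \;\geq\; 2(a-b)^2,
\]
with the usual conventions at the boundary. Fixing $a$ and setting $h(b) := a\log(a/b) + (1-a)\log((1-a)/(1-b)) - 2(a-b)^2$, one checks $h(a) = 0$ and $h'(a) = 0$, so it suffices to show $h''(b) \geq 0$ on $(0,1)$. A direct computation gives $h''(b) = a/b^2 + (1-a)/(1-b)^2 - 4$, and the function $b \mapsto a/b^2 + (1-a)/(1-b)^2$ can be shown to be at least $4$ on $(0,1)$ (for instance, by minimizing it in $b$ and using $1/a + 1/(1-a) \geq 4$, which follows from AM-GM). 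Chaining the three steps gives $\DTV(P,Q)^2 = (P(A) - Q(A))^2 \leq \tfrac{1}{2}\DKL(P\,\|\,Q)$, and taking square roots concludes the proof.

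The main technical obstacle is the scalar inequality in step three; all calculations are elementary but require some care at the boundary (where a term of the KL divergence may be $+\infty$, making the inequality trivially true). The data-processing step is completely standard, so the conceptual difficulty of the proof is entirely concentrated in the two-line Bernoulli estimate.
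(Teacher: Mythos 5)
The paper does not prove this statement---it is cited as a fact from Tsybakov's book \cite{Tsy2008}---so there is no internal proof to compare against. Your overall strategy (Scheff\'e decomposition, then data-processing to reduce to the Bernoulli case, then an elementary scalar inequality) is the standard and correct route, but step three as you have written it contains a genuine error.

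The claim that $h''(b) = a/b^2 + (1-a)/(1-b)^2 - 4 \geq 0$ on $(0,1)$ is false. For instance, with $a = 0.1$ and $b = 0.3$ one gets $10/9 + 90/49 - 4 \approx -1.05 < 0$. Indeed, minimizing $a/b^2 + (1-a)/(1-b)^2$ over $b \in (0,1)$ yields the value $a\bigl(1 + ((1-a)/a)^{1/3}\bigr)^3$, which tends to $1$ (not $4$) as $a \to 0$. The inequality $1/a + 1/(1-a) \geq 4$ is true but does not control this expression, so the convexity-of-$h$ argument collapses: you cannot conclude $h \geq 0$ from $h(a) = h'(a) = 0$ alone without global convexity.

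The repair is to work with the first derivative. A direct computation gives
\[
h'(b) = -\frac{a}{b} + \frac{1-a}{1-b} + 4(a - b) = (b - a)\left[\frac{1}{b(1-b)} - 4\right].
\]
Since $b(1-b) \leq 1/4$ for $b \in (0,1)$, the bracket is nonnegative, so $h'(b)$ has the same sign as $b - a$. Thus $h$ is decreasing on $(0, a]$ and increasing on $[a, 1)$, hence attains its minimum at $b = a$, where $h(a) = 0$. This yields $h \geq 0$ on $(0,1)$, i.e., the Bernoulli bound $a\log(a/b) + (1-a)\log((1-a)/(1-b)) \geq 2(a-b)^2$, and the rest of your argument (Scheff\'e plus data-processing) then goes through unchanged.
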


Combining the above two facts we obtain the following bound on the TV distance between univariate Gaussians.
\FACTTVGAUSSIANS*

The following fact states that there exists a coupling between two distributions such that the probability of disagreement is at most their total variation distance.

\FACTMAXIMAL*

We also require the multiplicative version of the Chernoff-Hoeffding bound for binary random variables:

\begin{fact}[Chernoff-Hoeffding Bound (see, e.g., \cite{DP2009})]\label{fact:Chernoff}
    Let $g_1,\ldots,g_n$ be random variables in $\{0,1\}$ such that $\E[g_i] = p$ for all $i \in [n]$. Then, for all $\eps \in (0,1)$ the following holds:
    \begin{align*}
        \Pr\left[ \frac{1}{n} \sum_{i=1}^n g_i > p(1+\eps) \right] \leq \exp\left( -\frac{\eps^2}{3}pn \right) .
    \end{align*}
\end{fact}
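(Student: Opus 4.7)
The plan is to prove the bound via the standard Chernoff method, treating the $g_i$ as independent (which is implicit in the statement since the conclusion fails otherwise). First I would write $S_n := \sum_{i=1}^n g_i$ and apply the exponential Markov inequality: for any $t > 0$,
\[
\Pr\left[S_n > (1+\eps) p n\right] \;=\; \Pr\left[e^{tS_n} > e^{t(1+\eps)pn}\right] \;\leq\; e^{-t(1+\eps)pn}\,\E\left[e^{tS_n}\right].
\]
By independence, $\E[e^{tS_n}] = \prod_{i=1}^n \E[e^{tg_i}]$, and for each Bernoulli variable $g_i$ with $\E[g_i] = p$, I would use the elementary identity $\E[e^{tg_i}] = 1 - p + p e^t$ together with the inequality $1 + x \leq e^x$ to conclude $\E[e^{tg_i}] \leq \exp\bigl(p(e^t - 1)\bigr)$. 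Multiplying across $i$ gives $\E[e^{tS_n}] \leq \exp\bigl(np(e^t - 1)\bigr)$.

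Next I would optimize over $t$. Substituting the bound on the MGF yields
\[
\Pr[S_n > (1+\eps)pn] \;\leq\; \exp\bigl(np(e^t - 1) - t(1+\eps)pn\bigr),
\]
which is minimized at $t = \ln(1+\eps) > 0$. Plugging this in gives the classical bound
\[
\Pr[S_n > (1+\eps)pn] \;\leq\; \left(\frac{e^{\eps}}{(1+\eps)^{1+\eps}}\right)^{np}.
\]
The last step is to verify the analytic inequality $(1+\eps)\ln(1+\eps) - \eps \geq \eps^2/3$ for $\eps \in (0,1)$, which would upgrade the previous display to the clean form $\exp(-\eps^2 p n/3)$ stated in the fact.

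The only non-routine part is that final scalar inequality. I would establish it by defining $f(\eps) := (1+\eps)\ln(1+\eps) - \eps - \eps^2/3$, computing $f(0) = 0$, $f'(\eps) = \ln(1+\eps) - 2\eps/3$, and $f''(\eps) = \tfrac{1}{1+\eps} - \tfrac{2}{3}$, and noting that $f''(\eps) \geq 0$ on $(0, 1/2)$ and the sign analysis together with $f'(0) = 0$ shows $f' \geq 0$ on $(0,1)$ with a direct check at $\eps = 1$. This inequality is the only ``delicate'' step; everything else is a textbook application of the Chernoff method, so I do not anticipate obstacles beyond this calculus lemma.
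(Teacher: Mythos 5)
The paper does not prove this fact; it cites it as a standard result from \cite{DP2009} and uses it as a black box in the proof of \Cref{fact:coupling}. So there is no in-paper proof to compare against. Your argument is the standard Chernoff--Hoeffding proof and is exactly what one finds in the cited reference (and essentially every textbook treatment): exponential Markov, factorization of the MGF by independence, the bound $1-p+pe^t\le\exp(p(e^t-1))$, optimization at $t=\ln(1+\eps)$, and finally the scalar inequality $(1+\eps)\ln(1+\eps)-\eps\ge\eps^2/3$ on $(0,1)$. Your calculus verification of the last step is correct: $f(\eps)=(1+\eps)\ln(1+\eps)-\eps-\eps^2/3$ has $f(0)=f'(0)=0$, $f''$ changes sign once at $\eps=1/2$, and since $f'$ is increasing on $(0,1/2)$ and decreasing on $(1/2,1)$ with $f'(1)=\ln 2-2/3>0$, we get $f'\ge 0$ and hence $f\ge 0$ on $(0,1)$. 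You also rightly flagged that independence of the $g_i$ must be assumed even though the statement omits it (and indeed the application inside \Cref{fact:coupling} does have independent draws, so the omission is harmless there). No gaps.
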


Finally, the following linear algebraic fact provides a useful formula for inverting matrices of the form identity minus a rank-one matrix.

\begin{restatable}[Sherman-Morrison formula (see, e.g., Fact 3.21.3 in \cite{Ber2018})]{fact}{FACTSM}\label{cl:sherman-morrison}
    The matrix $\Sigma = I + u v^\top$ is invertible if and only if $1 + v^\top u \neq 0$. In this case, $\Sigma^{-1} = I - \frac{uv^\top}{1 + v^\top u}$
\end{restatable}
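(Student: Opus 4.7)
The plan is to prove the two assertions (the invertibility criterion and the explicit inverse formula) in a unified way by simply verifying the claimed inverse algebraically, which will simultaneously establish both directions of the invertibility equivalence.

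First, assume $1 + v^\top u \neq 0$ and define $M := I - \frac{uv^\top}{1 + v^\top u}$. I would compute the product $\Sigma M = (I + uv^\top) M$ directly, using associativity of matrix multiplication and the scalar identity $(uv^\top)(uv^\top) = u (v^\top u) v^\top = (v^\top u)\, uv^\top$. Expanding,
\begin{align*}
\Sigma M &= I + uv^\top - \frac{uv^\top}{1 + v^\top u} - \frac{(v^\top u)\, uv^\top}{1 + v^\top u} \\
&= I + uv^\top - \frac{(1 + v^\top u)\, uv^\top}{1 + v^\top u} = I.
\end{align*}
An analogous calculation gives $M\Sigma = I$, so $M$ is a two-sided inverse of $\Sigma$. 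This proves the formula and also the ``if'' direction of the invertibility claim.

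For the converse, I would show that if $1 + v^\top u = 0$ then $\Sigma$ is singular. If $u = 0$ then $1 + v^\top u = 1 \neq 0$, so we may assume $u \neq 0$; but then $\Sigma u = u + u(v^\top u) = (1 + v^\top u) u = 0$, exhibiting a nonzero vector in the kernel of $\Sigma$. This closes the equivalence.

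I do not expect any real obstacle: the argument is a three-line verification plus a one-line observation for the converse, relying only on the elementary identity $(v^\top u)$ being a scalar that commutes through the expression. No structural assumption on $u, v$ (beyond being vectors in $\mathbb{R}^d$) is needed, and the proof does not require appealing to the matrix determinant lemma or to spectral theory, though the latter would give an alternative route via $\det(I + uv^\top) = 1 + v^\top u$. \qed
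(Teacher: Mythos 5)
The paper does not prove this statement; it is cited as a known fact from Bernstein's \emph{Matrix Mathematics} handbook, so there is no ``paper proof'' to compare against. Your proof is correct and complete: the direct verification that $M := I - \frac{uv^\top}{1+v^\top u}$ is a two-sided inverse of $\Sigma$ when $1 + v^\top u \neq 0$ is exactly the standard argument, and the converse (exhibiting $u$ as a nonzero kernel vector when $1 + v^\top u = 0$, after ruling out $u = 0$) is the cleanest way to close the equivalence. The remark that one could alternatively invoke the matrix determinant lemma $\det(I + uv^\top) = 1 + v^\top u$ is accurate but, as you say, unnecessary here.
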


\section{Omitted Details from \Cref{sec:warm-up}}\label{app:warm-up}

\subsection{Couplings with Small Coordinate-Wise Disagreements}\label{app:warm-up-couplings}
We restate and prove the following statements.

\FACTCOUPLING*
\begin{proof}
    Given a coupling that satisfies \Cref{eq:coupling-assumption}, we describe below a procedure that generates samples for each hypothesis and uses a simple adversary to edit the samples so that the resulting data set is (with high probability) the same regardless of the hypothesis in effect. The procedure consists of simply drawing paired samples from the coupling and the adversary erases all coordinates where the samples differ. 
    
    \vspace{10pt}
    \begin{mdframed}
    \begin{enumerate}
        \item Let $\Pi$ denote a coupling that satisfies \Cref{eq:coupling-assumption}.
        \item Initialize empty sets $S \gets \emptyset$, $S' \gets \emptyset$.
        \item Initialize corruption budgets $r_1 \gets \eta \, n, \ldots,  r_{d+1} \gets \eta \, n$ for each of the $d$ coordinates as well as the labels.
        \item For $i=1,2,\ldots,n$ do:
        \begin{enumerate}
            \item Draw $((X,y),(X',y')) \sim \Pi$.
            \item For every $j =1,2,\ldots, d+1$
            \begin{enumerate}
                \item If $j \leq d$ and $X_j \neq X_j'$ and $r_j > 0$:\label{line:disagree1}
                \begin{enumerate}
                    \item $X_j \gets \perp$, and $X_j' \gets \perp$. \label{line:delete1}
                    \item Update $r_j \gets r_j - 1$.
                \end{enumerate}
                \item If $j = d$ and $y \neq y'$ and $r_j > 0$:\label{line:disagree2}
                \begin{enumerate}
                    \item $y \gets \perp$, and $y' \gets \perp$. \label{line:delete2}
                    \item Update $r_j \gets r_j - 1$.
                \end{enumerate}
                \item $S \gets S \cup \{(X,y)\}$
                \item $S' \gets S '\cup \{(X',y')\}$
            \end{enumerate}
        \end{enumerate}
    \end{enumerate}
\end{mdframed}
 \vspace{10pt}

    Each time the adversary above deletes the $j$-th coordinate (lines \ref{line:delete1}, \ref{line:delete2}) it reduces the budget $r_j$ by 1. If the budget reaches zero, the adversary can no longer keep deleting that coordinate.
    Importantly, if $\cE$ denotes the event that none of the $r_j$'s for $j=1,2\ldots,d+1$ reach zero, the dataset $S$ at the end is the same regardless of the hypothesis that is under effect. This means that under that event, no algorithm can distinguish between the two hypotheses. 
    It remains to show that the probability of the event $\cE$ is at least $1- (d+1) e^{- \Omega(c^2 n \eta)}$. For simplicity let us first focus on the corruptions of covariates only (and will discuss label corruptions at the end). That is, let $\cE_1$ denote the event that none of the $r_j$'s for $j=1,2\ldots,d$ reach zero, and we will show that this happens with probability at least $1- d e^{- \Omega(c^2 n \eta)}$. A similar argument will work for the labels thus we only focus on showing that the dataset restricted to the covariates becomes indistinguishable.

    Let $g_{i,j} \in \{0,1\}$ be $1$ if and only if lines \ref{line:delete1}, \ref{line:delete2} caused a deletion of the $j$-th coordinate of the $i$-th sample during the process described above. By assumption we have that for every $i\in [n]$, 
    \begin{align*}
        \E\left[\sum_{j=1}^{d } g_{i,j} \right]  \leq \frac{\eta}{2} d (1-c). 
    \end{align*}
    By the assumption in the lemma statement, we can assume that the first half of the coordinates undergo a random permutation and that the second half of the coordinates also undergo another random permutation.
    If we denote by $g_{i,j}' \in \{0,1\}$ the random variable that is $1$ if and only if the $i$-th sample has its $j$-th coordinate corrupted after the aforementioned two random permutations, then we have the following for every $i\in [n]$ and $j \in [d/2]$ (i.e., we are only analyzing the first half of coordinates for now as the analysis is the same for the second half):
    \begin{align*}
        \E[g_{i,j}'] &= \Pr[g_{i,j}' = 1] 
        = \sum_{k=0}^{d} \Pr\left[g_{i,j}' = 1 \middle|\sum_{j=1}^{d} g_{i,j} = k \right] \Pr\left[\sum_{j=1}^{d} g_{i,j} = k \right] \\
        &\leq \sum_{k=0}^{d} \frac{k}{d/2} \Pr\left[\sum_{j=1}^{d} g_{i,j} = k \right] 
        = \frac{2}{d} \E\left[\sum_{j=1}^{d} g_{i,j} \right]
        \leq \eta (1-c) .
    \end{align*}
    where the second line used that in the worst case where the $k$ disagreements all happen within the first half of the coordinates, after randomly permuting these coordinates, the probability of our fixed coordinate $j$ to experience a disagreement is $\tfrac{k}{d/2}$.
    This means that fixing a coordinate $j \in [d/2]$, the number of corruptions in that coordinate across all $n$ samples, $\sum_{i = 1}^n g_{i,j}'$, is a sum of independent binary variables with expectation $\eta(1-c)$ each. By Chernoff-Hoeffding bounds (\Cref{fact:Chernoff}), we obtain
    \begin{align*}
        \Pr\left[ \frac{1}{n}\sum_{i = 1}^n g_{i,j}' > \eta \right] \leq e^{-(1-c)\eta n \left(\frac{1}{1-c} - 1 \right)^2/3} = e^{- \Omega(c^2 n \eta)}.
    \end{align*}
    The same analysis applies for coordinates $j$ in the second half of the coordinates. By a union bound, the probability that there exists a coordinate $j \in [d]$ with $\frac{1}{n}\sum_{i = 1}^n g_{i,j}' > \eta$ is at most $d e^{- \Omega(c^2 n \eta)}$. This means that the event $\cE_1$ defined earlier has probability at least $1-d e^{- \Omega(c^2 n \eta)}$. Finally, we can define a similar  event $\cE_2$ for the label corruptions, i.e., $\cE_2$ being the even that $r_{d+1}$ does not reach zero. With another application of the Chernoff bound we can also conclude that $\cE'$ happens with probability at least  $1-e^{- \Omega(c^2 n \eta)}$. Combining with a union bound, the probability of both $\cE$ and $\cE_2$ happening is at most $1-(d+1)e^{- \Omega(c^2 n \eta)}$. This means that the distributions of the sets $S,S'$ output by the pseudocode have total variation distance at most $1-(d+1)e^{- \Omega(c^2 n \eta)}$. Consequently, by Le Cam's inequality \cite{lecam1973convergence}, any test that distinguishes between the two distributions has probability of failure at least $\frac{1}{2}(1-(d+1)e^{- \Omega(c^2 n \eta)})$.
\end{proof}

\LEMMAHYBRID*

\noindent Before presenting the proof, we note that although the statement is written in terms of Gaussian distributions (which is the setting we will apply it to later), the argument applies to any distributions $D$ and $D'$, where $D'$ is a shifted version of $D$ (in the sense that a random variable from $D'$ can be written as a random variable from $D$ plus a deterministic vector).

\begin{proof}

    This lemma essentially follows from the observation that two Gaussians with the same covariance and means differing in only one coordinate can be coupled so that the disagreement occurs only on that coordinate (and with probability at most equal to their total variation distance) while all other coordinates always agree. This is formalized in the claim below:
    
    \CLAIMONESTEP*

    \begin{proof}(Proof of \Cref{cl:one_step})
        Without loss of generality we use $i=1$ in this proof.
        Denote by $(X_1,\ldots,X_d)$ a random vector distributed as $Q=\cN((\mu_1,\mu_2,\ldots,\mu_d), \Sigma)$ and by $(X_1',\ldots,X_d')$ a random vector distributed as $Q'=\cN((\mu_1',\mu_2\ldots,\mu_{d}), \Sigma)$.
        Also, for any $Z \in \R^{d-1}$ denote by $P_Z$ the distribution of $X_1$ conditioned on $(X_2,\ldots,X_d) = Z$ and by $P'_Z$ the distribution of $X'_1$ conditioned on $(X'_2,\ldots,X'_d) = Z$.
        The coupling $\Pi$ that satisfies the guarantee in the claim statement is the distribution between the pair of vectors $(\tilde X_1,\ldots, \tilde X_d), (\tilde Y_1,\ldots, \tilde Y_d)$ created as follows:
        \begin{enumerate}
            \item Draw $Z \in \R^{d-1}$, according to the marginal distribution of $Q$ in the coordinates $2,3,\ldots,d$. (Note that this marginal is the same under $Q$ and $Q'$).
            \item Set $(\tilde X_2,\ldots, \tilde X_d) = Z$ and  $(\tilde Y_2,\ldots, \tilde Y_d) = Z$.
            \item Draw $\tilde X_1, \tilde Y_{1}$ from the maximal coupling $\Pi^*_Z$ (given in \Cref{fact:maximal}) between the distributions $P_Z$ and $P_{Z}'$.
        \end{enumerate}
        By construction, the marginal of $(\tilde X_1,\ldots, \tilde X_d)$ is $Q$ and that of $(\tilde Y_1,\ldots, \tilde Y_d)$ is $Q'$ thus $\Pi$ is a valid coupling. We also trivially have that $\Pr_{(X,X') \sim \Pi}[\tilde X_j \neq \tilde Y_j]=0$ for all $j \neq 1$. For the first coordinate, we have
        \begin{align*}
            \Pr_{(\tilde X, \tilde Y) \sim \Pi}[\tilde X_1 \neq \tilde Y_1 ] 
            &= \E_{Z} \left[ \Pr_{ (\tilde X_1, \tilde Y_{1})  \sim \Pi^*_Z}[X_1 \neq \tilde Y_1  \; | \; Z]  \right] \\
            &=  \E_{Z} \left[ \DTV\left( P_Z, P'_{Z}  \right) \right] \tag{using \Cref{fact:maximal}}\\
            &= \DTV\left( Q, Q'  \right) . \tag{law of total expectation}
        \end{align*}
    \end{proof}

    We now show how \Cref{lem:hybrid} follows given \Cref{cl:one_step}. We will show a procedure to generate random variables $X^{(i)} \in \R^d$ for $i = 0,\ldots,d$ and the coupling $\Pi$ that realizes \Cref{lem:hybrid} will be the joint distribution of $X^{(0)}$ and $X^{(d)}$. The generating procedure is the following:
    \begin{enumerate}
        \item Let $\Pi_1$ be the coupling that \Cref{cl:one_step} gives for the Gaussians $Q_0 =\cN((\mu_1,\mu_2,\ldots,\mu_d), \Sigma)$ and $Q_1 = \cN((\mu_1',\mu_2,\ldots,\mu_d), \Sigma)$. Draw $X^{(0)}$ and $X^{(1)}$ from $\Pi_1$. 
        \item Let $\Pi_2$ be the coupling that \Cref{cl:one_step} gives for the Gaussians $Q_1 = \cN((\mu_1',\mu_2,\mu_3,\ldots,\mu_d), \Sigma)$ and $Q_2 = \cN((\mu_1',\mu_2',\mu_3,\ldots,\mu_d), \Sigma)$. Draw $X^{(2)}$ from $\Pi_2$ conditioned on the value of $X^{(1)}$ from the previous step.
        \item In general, in the $i$-th step, let $\Pi_i$ be the coupling that \Cref{cl:one_step} gives for the Gaussians $Q_{i-1} = \cN((\mu'_1,\ldots,\mu'_{i-1},\mu_i,\ldots,\mu_d), \Sigma)$ and $Q_{i} = \cN((\mu'_1,\ldots,\mu'_{i-1},\mu'_i,\mu_{i+1},\ldots,\mu_d), \Sigma)$. Draw $X^{(i)}$ from $\Pi_i$ conditioned on the value of $X^{(i-1)}$ from the previous step.
    \end{enumerate}

\noindent The expected number of coordinates that disagree is bounded as follows:
    \begin{align*}
    \E\left[ \sum_{i=1}^d \1(X^{(0)}_i \neq X^{(d)}_i)  \right]
       &= \sum_{i=1}^d \E_{X^{(i-1)}} \left[ \E_{X^{(i)}}[ \1(X^{(i)}_i \neq X^{(i-1)}_i) \, | \, X^{(i-1)}] \right]\\
       &= \sum_{i=1}^d \E_{X^{(i-1)},X^{(i)}}[ \1(X^{(i)}_i \neq X^{(i-1)}_i)] \\
       &= \sum_{i=1}^d \DTV(Q_{i-1}, Q_i),
    \end{align*}
    where the first line follows from the fact that a disagreement between $X^{(0)}$ and $X^{(d)}$ in the $i$-th coordinate can occur only during the $i$-th step of the generation process (by the design of the couplings in \Cref{cl:one_step}). The transition from the second to the third line uses the law of total probability to rearrange the expectations; and the final line follows from the guarantee provided by the couplings in \Cref{cl:one_step}.

\end{proof}

\subsection{Proof of \Cref{thm:combined}\ref{it:thm-small-beta}}\label{app:proof-of-small-beta} 

We restate and prove the following lower bound.

\begin{restatable}[Lower Bound for regime $\|\beta\| \leq \sigma$]{theorem}{SMALLBETA}\label{thm:small-beta}
    Let $c$ be a sufficiently small positive absolute constant. For any $d \in \Z_+$, $\sigma,\eta,b \in \R_+$ with $\eta \in [0,1]$, $\sigma > 0$ and  $b \leq \sigma$ the following statement holds.
    For every algorithm $\cA$ that takes as input  $\eta,\sigma,b$ as well as $n$ labeled examples $\{ (x^{(i)},y^{(i)}) \}_{i=1}^n$ with $x^{(i)}$ and $y^{(i)} \in \R$ and outputs a vector $\hat{\beta}\in \R^d$, there exists a $\beta \in \R^d$ with $\|\beta\|=b$ such that running $\cA$ on input $\eta,\sigma,b$ and $n$ labeled examples from the model of \Cref{def:linear_model} with regressor $\beta$,  standard deviation $\sigma$ for the additive noise, and $\eta$-fraction of missing data per coordinate according to the contamination model of \Cref{def:model}, the output $\hat{\beta}$ satisfies
    \begin{align}
        \left\| \hat{\beta} - \beta \right\| \geq c \, \min(\|\beta\|,\eta \sqrt{d} \sigma). 
    \end{align}
    with probability at least $\frac{1}{2}(1-(d+1) e^{- \Omega(\eta n)})$.
\end{restatable}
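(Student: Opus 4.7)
The plan is to reduce estimation to hypothesis testing and then apply \Cref{fact:coupling}. Specifically, I would split into two regimes: (a) $\|\beta\| \leq \eta \sqrt{d}\,\sigma$, and (b) $\eta \sqrt{d}\,\sigma \leq \|\beta\| \leq \sigma$. In regime (a), I would set up a test between $\beta^{(0)} = (b/\sqrt{d},\ldots,b/\sqrt{d})$ and $\beta^{(1)} = -\beta^{(0)}$; any estimator with error smaller than $b/\sqrt{2}$ would solve this test, so it suffices to build a coupling between the two corresponding distributions of labeled examples with $\Pr[y \neq y'] = 0$ and $\mathbb{E}[\sum_i \1(X_i \neq X_i')] \leq \eta d(1-c)$.

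To construct the coupling, I would first exploit the symmetry between the two hypotheses: the marginal distribution of $y$ is $\mathcal{N}(0, b^2 + \sigma^2)$ in both cases, so I can set $y = y' = t$ drawn from this marginal. Conditional on $y = t$, \Cref{fact:conditional} yields that $X \mid y = t$ and $X' \mid y = t$ are two Gaussians with a \emph{common} covariance $\Sigma = I - \beta^{(0)}(\beta^{(0)})^\top / (\sigma^2 + b^2)$ and means $\mu^{(0)} = t\beta^{(0)}/(\sigma^2+b^2)$ and $\mu^{(1)} = -\mu^{(0)}$. I would then apply the hybrid argument of \Cref{lem:hybrid} to couple these two Gaussians, bounding each consecutive TV distance $\DTV(Q_i, Q_{i+1})$ via Pinsker's inequality and the KL formula of \Cref{fact:KL-Gaussians}. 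Using Sherman--Morrison (\Cref{cl:sherman-morrison}) to compute $\Sigma^{-1} = I + \beta^{(0)}(\beta^{(0)})^\top/\sigma^2$ and plugging in the one-coordinate shift $\mu^{(i+1)} - \mu^{(i)}$, each term contributes at most $\tfrac{\sqrt{2}|t|}{\sqrt{d}} \cdot \tfrac{b}{\sigma^2+b^2}$, summing to a total of $\sqrt{2d}\,|t|\,b/(\sigma^2+b^2)$ coordinate-wise disagreements in expectation given $y = t$.

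Taking expectation over $t \sim \mathcal{N}(0, b^2 + \sigma^2)$ and using $\mathbb{E}[|t|] \leq \sqrt{\sigma^2+b^2}$, the expected total number of disagreements is at most $\sqrt{2d}\,b/\sqrt{\sigma^2+b^2} \leq \sqrt{2d}\,b/\sigma$, which is bounded by $2\eta d$ exactly when $b \leq \eta \sqrt{d}\,\sigma$. Applying \Cref{fact:coupling} then yields the desired testing lower bound, giving estimation error $\Omega(\|\beta\|)$ in regime (a). For regime (b), I would reduce to regime (a) by padding with a ``free'' first coordinate: test between $\beta = (r, b/\sqrt{d},\ldots,b/\sqrt{d})$ and $\beta' = (r, -b/\sqrt{d},\ldots,-b/\sqrt{d})$ with $b = \eta \sqrt{d}\,\sigma$ and $r$ chosen so that $\|\beta\|$ attains its target value in $[\eta \sqrt{d}\,\sigma, \sigma]$. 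Writing $y = rX_1 + y_0$ and $y' = rX_1' + y_0'$, the coupling from (a) applies directly to $(X_{2:d}, y_0)$ and $(X_{2:d}', y_0')$, and is extended by setting $X_1 = X_1'$ and adding the same shift $rX_1$ to both labels; the disagreement bound and the probability that $y \neq y'$ are unaffected, so \Cref{fact:coupling} again applies and yields the $\Omega(\eta\sqrt{d}\,\sigma)$ bound in regime (b).

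The main obstacle I anticipate is controlling the Pinsker bound in \Cref{eq:boundDTV-via-pinsker1}--\Cref{eq:boundDTV-via-pinsker2}; the potential pitfall is that $\Sigma$ becomes near-singular as $\sigma \to 0$, which is precisely why the result is restricted to $b \leq \sigma$ rather than extending further. The Sherman--Morrison identity keeps this tractable as long as $\sigma > 0$, since the quadratic form $(m^{(i+1)} - m^{(i)})^\top \Sigma^{-1}(m^{(i+1)} - m^{(i)})$ only picks up a bounded contribution from the rank-one perturbation along $\beta^{(0)}$. Beyond the $\sigma = 0$ boundary, a different hard instance (via \Cref{lem:impr-coupling}) is needed, which is the subject of the later theorems rather than this one.
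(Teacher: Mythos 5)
Your proposal is correct and follows essentially the same route as the paper's own proof in Appendix~\ref{app:proof-of-small-beta}: the same hypothesis-testing reduction with $\beta^{(0)} = (b/\sqrt{d},\ldots,b/\sqrt{d})$ vs.\ $\beta^{(1)} = -\beta^{(0)}$, the same use of \Cref{fact:conditional} to obtain common-covariance conditionals, the same hybrid coupling bounded by Pinsker $+$ Sherman--Morrison giving $\sqrt{2d}\,|t|\,b/(\sigma^2+b^2)$ expected disagreements, the same expectation over $t$, and the same padding trick with a tunable shared first coordinate $r$ for the regime $\eta\sqrt{d}\sigma \le \|\beta\| \le \sigma$. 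Your spelled-out description of the padding step (setting $X_1 = X_1'$ and adding the shared shift $rX_1$ to both labels) is just the explicit form of what the paper phrases as ``adding shared Gaussian noise $\cN(0,r^2)$ to the labels.''
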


\begin{proof}
We first focus on proving the result for $\|\beta\| \leq \eta \sqrt{d} \sigma$,  in which case $\min(\|\beta\|,\eta \sqrt{d} \sigma) = \|\beta\|$. We describe at the end how to extend it for $\|\beta\| > \eta \sqrt{d} \sigma$.    

For any $b \in  [0, \eta \sqrt{d} \sigma]$ we define the hypothesis testing problem of distinguishing between the regression vectors $\beta^{(0)},\beta^{(1)}$ defined below from $n$ samples from the models of \Cref{def:linear_model,def:model}:
\begin{enumerate}
    \item (Null Hypothesis) $\beta^{(0)} = (b/\sqrt{d},\ldots,b/\sqrt{d})$.
    \item (Alternative Hypothesis) $\beta^{(1)} = - \beta^{(0)}$.
\end{enumerate}
Note that $\|\beta^{(0)}\| = \| \beta^{(1)} \| =  b$ and $\|\beta^{(0)} - \beta^{(1)}\| = \sqrt{2}  b$.
By the standard reduction from estimation to hypothesis testing, showing that no algorithm can solve the above testing problem will imply that no estimator has Euclidean error smaller than $b/\sqrt{2}$.
By \Cref{fact:conditional}, for each of the two hypotheses $i \in \{0,1\}$ above, the conditional distribution $X|(y=t)$ of the covariates given that the label is $t$ is a Gaussian $\cN(\mu^{(i)},\Sigma^{(i)})$ where 
    \begin{align}
        \mu^{(i)} := \frac{t}{\sigma^2+ b^2}\beta^{(i)}, 
        \quad \text{and} \quad \Sigma^{(1)} = \Sigma^{(2)} = \Sigma := I - \frac{\beta^{(1)}(\beta^{(1)})^\top}{\sigma^2 + b^2}. \label{eq:mean-cov}
    \end{align}
\noindent 
In order to prove that the testing problem is not solvable, it suffices to find a coupling between $\cN(\mu^{(0)},\Sigma),\cN(\mu^{(1)},\Sigma)$ such that for each $i \in [d]$, only up to $\eta$-fraction of samples have their $i$-th coordinate corrupted. 
By \Cref{lem:hybrid}, for each $i \in [d]$ we need to upper bound the TV-distance between the two Gaussians $\cN(m^{(i)}, \Sigma), \cN(m^{(i+1)} , \Sigma)$, where
\begin{align}
m^{(i)} = (\mu^{(1)}_1, \ldots, \mu^{(1)}_{i-1}, \mu^{(0)}_{i}, \mu^{(0)}_{i+1}, \ldots, \mu^{(0)}_{d}), \label{eq:interm1-first}\\
m^{(i+1)} = (\mu^{(1)}_1, \ldots, \mu^{(1)}_{i-1}, \mu^{(1)}_{i}, \mu^{(0)}_{i+1}, \ldots, \mu^{(0)}_{d}) . \label{eq:interm2-first} 
\end{align}

We will do this by bounding the KL-divergence and relating it to the total variation distance via Pinsker's inequality (\Cref{fact:pinsker}).
By \Cref{fact:KL-Gaussians}, we have
\begin{align}
    \DKL&(\cN(m^{(i)} , \Sigma)  \,\|\, \cN(m^{(i+1)} , \Sigma)) \label{eq:firsteq}\\
    &\leq \frac{1}{2}(m^{(i+1)} - m^{(i)})^\top \Sigma^{-1} (m^{(i+1)} - m^{(i)}) \\
    &\leq \frac{1}{2}(m^{(i+1)} - m^{(i)})^\top  \left( I + \frac{{\beta^{(1)}} {\beta^{(1)}}^\top}{\sigma^2} \right) (m^{(i+1)} - m^{(i)}) \tag{using  \Cref{cl:sherman-morrison}}\\
    &\leq \frac{1}{2} \| m^{(i+1)} - m^{(i)} \|^2 + \frac{1}{2}\frac{((m^{(i+1)} - m^{(i)})^\top {\beta^{(1)}})^2}{\sigma^2} \\
    &\leq  \frac{2t^2}{d} \frac{b^2}{(\sigma^2 + b^2)^2} + \frac{1}{2\sigma^2}\left( \frac{2t}{\sqrt{d}} \frac{b}{\sigma^2 + b^2} \frac{b}{\sqrt{d}}\right)^2 \tag{see explanation below}\\
    &= \frac{2t^2}{d} \frac{b^2}{(\sigma^2 + b^2)^2} + \frac{1}{\sigma^2}\frac{2t^2}{d} \frac{b^2}{(\sigma^2 + b^2)^2} \frac{b^2}{d}\\
    &= \frac{2t^2}{d} \frac{b^2}{(\sigma^2 + b^2)^2}\left( 1 + \frac{b^2}{d \sigma^2} \right)\\
    &\leq\frac{2t^2}{d} \frac{b^2}{(\sigma^2 + b^2)^2}(1 + \eta^2) \tag{using assumption $b \leq \eta \sqrt{d} \sigma$}\\
    &\leq \frac{4t^2}{d} \frac{b^2}{(\sigma^2 + b^2)^2},  \tag{using $\eta \leq 1$}
\end{align}
where the fourth inequality above follows from equations (\ref{eq:mean-cov}) to (\ref{eq:interm2-first}) as follows: the vector $m^{(i+1)} {-} m^{(i)}$ has zero in all coordinates except the $i$-th, where it equals $\frac{t}{\sigma^2 + b^2} \cdot \frac{b}{\sqrt{d}}$.

Combining the above with Pinsker's inequality (\Cref{fact:pinsker}), we obtain 
\begin{align}\label{eq:lasteq}
    \DTV(\cN(m^{(i)} , \Sigma), \cN(m^{(i+1)} , \Sigma)) 
    \leq \frac{\sqrt{2} |t|}{\sqrt{d}}  \frac{b}{\sigma^2 + b^2} .
\end{align}

By \Cref{lem:hybrid}, there exists a coupling $\Pi$ between $\cN(m^{(0)},\Sigma),\cN(m^{(d)},\Sigma)$ such that
\begin{align}\label{eq:coupling_dissagreements}
    \E_{(X,X') \sim \Pi}\left[ \sum_{i=1}^d \1( X_i \neq X_i') \right] \leq \frac{\sqrt{2} \sqrt{d} |t|  b}{\sigma^2 + b^2}.
\end{align}

As explained earlier, the Gaussians involved in the above coupling are the conditional distributions of the covariates in our linear regression model (\Cref{def:linear_model}), given that the label $y$ equals $t$. We can easily extend this coupling to a new coupling $\Pi'$, which couples the entire labeled examples from our linear regression model under the two hypotheses, as follows (recall that the distribution of labels under both hypotheses is $\cN(0,\sigma^2 + \|\beta^{(i)}\|^2) = \cN(0,\sigma^2 + b^2)$:

\begin{enumerate}
    \item Draw $t \sim \cN(0,\sigma^2 + b^2)$ and set $y = y' = t$.
    \item Draw $(X,X')$ from the coupling $\Pi$ that satisfies \Cref{eq:coupling_dissagreements}.
    \item Return $(X,y),(X',y)$.
\end{enumerate}
By construction, $(X,y)$ is marginally distributed according to the linear regression model with regressor $\beta^{(0)}$ and $(X',y')$ is marginally distributed according to the linear regression model with regressor $\beta^{(1)}$, thus $\Pi'$ is a valid coupling. Moreover, the expected number of disagreements is
\begin{align*}
     \E_{((X,y),(X',y') \sim \Pi'}\left[ \sum_{i=1}^d \1(X_i \neq X_i') \right] 
     &\leq  \frac{\sqrt{2}\sqrt{d}\,  b\E_{t \sim \cN(\sigma^2 + b^2)}[|t|]}{\sigma^2 + b^2}
     \leq  \frac{\sqrt{2} \sqrt{d}\,  b}{\sqrt{\sigma^2 + b^2}} \tag{using \Cref{eq:coupling_dissagreements}}\\
     &\leq  \frac{\sqrt{2} \sqrt{d}\, b}{\sigma} \tag{using $b \geq 0$}\\
     &=    \frac{\sqrt{2} \sqrt{d}\eta \sigma \sqrt{d} }{\sigma} \tag{using $b \leq \sqrt{d}\eta \sigma $}\\
     &\leq \sqrt{2}  \eta  d \leq \sqrt{2}  \eta d.
\end{align*}
The labels always agree in this coupling. The constant $\sqrt{2}$ in front of $\eta$ above is not particularly important; we could obtain the same bound with a smaller constant, such as $1/3$, in front of $d$ by simply redefining $\eta$ as $\eta / (2\sqrt{2})$ at the beginning of the proof. This would allow us to apply \Cref{fact:coupling} to conclude that no algorithm can solve the hypothesis testing problem (except with probability $\frac{1}{2}(1-(d+1) e^{-\Omega(\eta n)})$), thus completing the proof of \Cref{thm:small-beta} for the regime $\|\beta\| \leq \eta \sqrt{d} \sigma$.

For the remaining regime $\eta \sqrt{d} \sigma \leq \|\beta\|$, consider instead the hypothesis testing problem over $\R^{d}$: distinguish $\beta = (r, b/\sqrt{d}, \ldots, b/\sqrt{d})$ from $\beta' = (r, -b/\sqrt{d}, \ldots, -b/\sqrt{d})$, where $b \leq \eta \sqrt{d}\sigma$ is as before and $r$ is a tunable parameter allowing $\|\beta\|$ to get any desired value larger than $\eta \sqrt{d} \sigma$. The labels can be written as $y = r X_1 + y_0$ and $y' = r X_1' + y_0'$, where $y_0 = (b/\sqrt{d}, \ldots, b/\sqrt{d})^\top X_{2:d} + \xi$ and $y_0' = (-b/\sqrt{d}, \ldots, -b/\sqrt{d})^\top X_{2:d}' + \xi'$, with $X_{2:d}$ denoting the last $d$ coordinates of $X$. As shown earlier, there exists a coupling between $(X_{2:d}, y_0)$ and $(X_{2:d}', y_0')$ with expected disagreements at most $\eta d/2$ per sample. This extends trivially to a coupling between $(X, y)$ and $(X', y')$ (by adding shared Gaussian noise $\mathcal{N}(0, r^2)$ to the labels), preserving the disagreement bound. Applying \Cref{fact:coupling} as before completes the proof.
\end{proof}

\section{Omitted Details from \Cref{sec:thm-proofs}}\label{app:thm-proofs}

In this section we restate and prove the lower bounds corresponding to parts \ref{it:thm-big-eta},\ref{it:thm-interm-eta} and \ref{it:thm-small-eta} of \Cref{thm:combined}.
Since these bounds depend on \Cref{lem:impr-coupling}, we restate that lemma below for convenience:

\IMPRCOUPLING*

We start with \Cref{thm:combined}\ref{it:thm-big-eta}

\begin{restatable}[Lower bound for regime $\eta \geq 7/\sqrt{d}$]{theorem}{IMPROVEDLBOUNDTWO}\label{thm:big-eta}
    The following holds for every $d \in \Z_+$ and $\sigma,\eta, b \in \R_+$ with  $7/\sqrt{d} \leq \eta \leq 1$.
    For every algorithm $\cA$ that takes as input  $\eta,\sigma,b$ as well as $n$ labeled examples $\{ (x^{(i)},y^{(i)}) \}_{i=1}^n$ with $x^{(i)} \in \R^d$ and $y^{(i)} \in \R$ and outputs a vector $\hat{\beta} \in \R^d$, there exists a $\beta \in \R^d$ with $\|\beta\|=b$ such that running $\cA$ on input $\eta,\sigma,b$ and $n$ labeled examples from the model of \Cref{def:linear_model} with regressor $\beta$,  standard deviation $\sigma$ for the additive noise, and $\eta$-fraction of missing data per coordinate according to the contamination model of \Cref{def:model}, the output $\hat{\beta}$ satisfies
    \begin{align}
        \left\| \hat{\beta} - \beta \right\| \geq \frac{1}{\sqrt{2}}   \|\beta\| , 
    \end{align}
    with probability at least $\frac{1}{2}(1-(d+1) e^{-\Omega(\eta n)})$.
\end{restatable}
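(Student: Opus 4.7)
The plan is to follow the same overall strategy as in the proof of \Cref{thm:small-beta}: reduce estimation to a binary hypothesis test, and construct a coupling with small coordinate-wise disagreements so that \Cref{fact:coupling} rules out any test. The two hypotheses will be the linear regression models (cf.~\Cref{def:linear_model}) with regressors $\beta^{(0)} := (b/\sqrt{d})\1_d$ and $\beta^{(1)} := -\beta^{(0)}$, respectively, so that $\|\beta^{(0)}\| = \|\beta^{(1)}\| = b$ and $\|\beta^{(0)} - \beta^{(1)}\| = 2b$. Ruling out testing then implies any estimator incurs error at least $b = \|\beta\| \geq \|\beta\|/\sqrt{2}$.

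I first handle the noiseless case $\sigma = 0$. Since the label distribution under both hypotheses is the same ($\cN(0, b^2)$), I couple the labels by setting $y = y' = t$ with $t \sim \cN(0, b^2)$. Conditioned on $y = t$, the hybrid-argument coupling of \Cref{lem:hybrid} is inapplicable because the conditional covariance becomes singular (this is the main obstacle compared to the small-$\beta$ warm-up). Instead, I invoke the improved coupling of \Cref{lem:impr-coupling}: conditioning on $\beta^{(0)\top} X = t$ is equivalent to conditioning on $\sum_i X_i = t\sqrt{d}/b$, while conditioning on $\beta^{(1)\top} X' = t$ is equivalent to conditioning on $\sum_i X_i' = -t\sqrt{d}/b$. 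Applying \Cref{lem:impr-coupling} (after rescaling coordinates back to standard Gaussian) yields a coupling $\Pi_t$ of the conditional distributions with
\begin{align*}
\E_{(X,X') \sim \Pi_t}\left[\sum_{i=1}^d \1(X_i \neq X_i')\right] \leq 1 + \frac{2|t|\sqrt{d}}{b}.
\end{align*}

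Taking expectation over $t \sim \cN(0, b^2)$ and using $\E[|t|] \leq b$, the resulting coupling $\Pi$ of the full labeled examples satisfies $\Pr_\Pi[y \neq y'] = 0$ and
\begin{align*}
\E_{((X,y),(X',y')) \sim \Pi}\left[\sum_{i=1}^d \1(X_i \neq X_i')\right] \leq 1 + 2\sqrt{d}.
\end{align*}
The regime hypothesis $\eta \geq 4/\sqrt{d}$ (combined with $\eta \leq 1$, which forces $d \geq 16$) ensures that $1 + 2\sqrt{d} \leq 3\sqrt{d} \leq (3/4)\eta d$, so the bound in \Cref{eq:coupling-assumption} holds with constant $c = 1/4$. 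To handle arbitrary $\sigma > 0$, I simply draw a shared additive noise term $\xi = \xi' \sim \cN(0, \sigma^2)$ and add it to both labels; this does not alter the coordinate-wise disagreements on the covariates nor the equality of labels. Finally, applying \Cref{fact:coupling} shows that no test distinguishes the two hypotheses except with probability at most $\tfrac{1}{2}(1 - (d+1)e^{-\Omega(\eta n)})$, and the reduction from estimation to testing completes the proof. \qed
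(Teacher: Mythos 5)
Your proposal is correct and follows essentially the same path as the paper's own proof: test $\beta^{(0)} = (b/\sqrt d)\mathbf 1_d$ against $\beta^{(1)} = -\beta^{(0)}$, reduce to the noiseless case, condition on the label, invoke \Cref{lem:impr-coupling} with $t$ and $-t$ to get expected disagreements at most $1 + 2|t|\sqrt d/b$, average over $t \sim \cN(0,b^2)$ to bound this by $1 + 2\sqrt d \leq 3\sqrt d \leq \tfrac34\eta d$ under the hypothesis $\eta \geq 4/\sqrt d$, apply \Cref{fact:coupling}, and lift to $\sigma > 0$ by adding shared noise to the labels. The only cosmetic differences are that you fix the scale to $\|\beta^{(0)}\| = b$ directly instead of the paper's explicit scaling parameter $s$, and you correctly report the slack as $c = 1/4$ (the paper's parenthetical ``$c = 3/4$'' is a slip, as the disagreement bound $\tfrac34\eta d = \eta d(1 - \tfrac14)$ shows).
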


\begin{proof}
    Let $\beta = s(1,1\ldots,1)$ be the all-ones vector and $\beta' = -\beta$ the vector with $-1$ in every coordinate where $s$ is a tunable parameter so that $\|\beta\|$ can have any desired value (we want to prove that the lower bound holds for any $\|\beta\|$). We consider the following hypothesis testing problem:
    \begin{itemize}
        \item (Null Hypothesis) The regression vector is  $\beta$.
        \item (Alternative Hypothesis) The regression vector is  $\beta'$.
    \end{itemize}
    It suffices to show that no algorithm distinguishes between the two hypotheses with probability better than $\frac{1}{2}(1+(d+1) e^{- \Omega(\eta n)})$. Since $\|\beta - \beta'\| = \sqrt{2d} \geq \sqrt{2} \max(\|\beta\|,\|\beta'\|)$, by the standard reduction between estimation and hypothesis testing, this would imply that no algorithm can estimate $\beta$ with error smaller than $\|\beta\|/\sqrt{2}$.

    First, it is easy to see that the coupling construction from \Cref{lem:impr-coupling} allows us to couple the distributions of labeled examples for the two hypotheses while ensuring a small number of coordinate-wise disagreements. This is shown in the lemma below:

\begin{lemma}\label{lem:coupling_whole_thing_simple}
Let $\sigma \geq 0$ and a scaling parameter $s \geq 0$.
Define $\beta = s (1,1,\ldots,1)$ as a scaled version of the all-ones vector in $\R^d$ and $\beta' = -\beta$.
    If $D$ denotes the distribution of a labeled example $(X,y)$ drawn from the linear regression model of \Cref{def:linear_model} with regressor $\beta$ and standard deviation of additive noise $\sigma$, and $D'$ denotes the distribution of a labeled example according to a linear regression model with regressor $\beta'$ and standard deviation of additive noise $\sigma$, then there exists a coupling $\Pi$ between $D$ and $D'$ such that $\Pr_{((X,y),(X',y')) \sim \Pi}[y = y'] = 1$ and
    \begin{align}\label{eq:coupling_guarrantee2}
        \E_{((X,y),(X',y')) \sim \Pi}\left[ \sum_{i=1}^d \1\left(  X_i \neq X_i'\right) \right] \leq 3\sqrt{d} .
    \end{align}
\end{lemma}
\begin{proof}(Proof of \Cref{lem:coupling_whole_thing_simple})
    It suffices to prove the lemma for $s=1$ and $\sigma = 0$. This is because if $((X,y),(X',y'))$ is distributed according to a coupling with the desired properties ($\Pr[y = y'] = 1$ and $\E\left[ \sum_{i=1}^d \1\left(  X_i \neq X_i'\right) \right] \leq 3\sqrt{d}$) for the case of $s=1$ and $\sigma = 0$, then we can let $\xi \sim \cN(0,\sigma^2)$ and the pair $((sX,sy + \xi),(sX',sy' + \xi))$ will be the final coupling that corresponds to the case with positive $\sigma$ and $s \neq 1$ and continues to satisfy $\Pr[y = y'] = 1$ and $\E\left[ \sum_{i=1}^d \1\left(  X_i \neq X_i'\right) \right] \leq 3\sqrt{d}$. The coupling for the case $s=1,\sigma = 0$ is the following:
    \begin{itemize}
        \item Draw $z \sim \cN(0,d)$ and set $y = y' = z$.
        \item Draw $((X_1,\ldots,X_d),(X_1',\ldots,X_d'))$ from the coupling of \Cref{lem:impr-coupling} applied with $t=z$ and $t' = -z$.
    \end{itemize}

    It is easy to verify that the labeled example $(X_1,\ldots,X_d,y)$ is marginally distributed according to the linear model that uses $\beta = (1,\ldots,1)$ as the regressor, and the example $(X_1',\ldots,X_d',y)$ is distributed according to the linear model using $\beta' = (-1,\ldots,-1)$. Moreover, by \Cref{lem:impr-coupling}, the expected number of disagreements is 
    \begin{align*}
        \E_{(X_1,\ldots,X_d),(X_1',\ldots,X_d')}\left[ \1(X_i \neq X_i') \right] \leq 1 + \E_{t,t'}[|t-t'|]  = 1 + \E_{z \sim \cN(0,d)}[|2z|]\leq 3 \sqrt{d}.
    \end{align*}
    
    The proof of \Cref{lem:coupling_whole_thing_simple} is completed.
\end{proof}

    When $\eta \geq 7/\sqrt{d}$ the right hand side of \Cref{eq:coupling_guarrantee2} is at most $\eta d 3/7$. Thus, by \Cref{fact:coupling} , no algorithm can solve the hypothesis testing defined in the beginning with probability higher than $\frac{1}{2}(1+(d+1)e^{-\Omega(\eta n)})$.

\end{proof}

We now move to \Cref{thm:combined}\ref{it:thm-interm-eta} which is restated and proved below.

\begin{restatable}[Lower bound for regime $\frac{2+c}{d} \leq \eta \leq \frac{7}{\sqrt{d}}$]{theorem}{IMPROVEDLBOUND}\label{thm:interm-eta}
There exists a sufficiently large absolute constant $C$ such that the following holds for every $d \in \Z_+$, every $c \in (0,1)$, every $\sigma \geq 0$, every $\eta \in [\frac{2+c}{d}, \frac{7}{\sqrt{d}}]$, and every $b \in \R_+$. For every algorithm $\cA$ that takes as input  $\eta,\sigma,b$ as well as $n$ labeled examples $\{ (x^{(i)},y^{(i)}) \}_{i=1}^n$ with $x^{(i)} \in \R^d$ and $y^{(i)} \in \R$ and outputs a vector $\hat{\beta} \in \R^d$, there exists a $\beta \in \R^d$ with $\|\beta\|=b$ such that running $\cA$ on input $\eta,\sigma,b$ and $n$ labeled examples from the model of \Cref{def:linear_model} with regressor $\beta$,  standard deviation $\sigma$ for the additive noise, and $\eta$-fraction of missing data per coordinate according to the contamination model of \Cref{def:model}, the output $\hat{\beta}$ satisfies
    \begin{align}
        \left\| \hat{\beta} - \beta \right\| \geq   \eta \frac{c\sqrt{d}}{C} \|\beta\|  ,  
    \end{align}
    with probability at least $\frac{1}{2}(1-(d + 1) e^{- \Omega(c^2\eta n)})$.
\end{restatable}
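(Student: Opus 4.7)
The plan is to reduce parameter estimation to hypothesis testing between two carefully designed regressors and then exhibit a coupling satisfying the hypotheses of \Cref{fact:coupling}. Specifically, I would fix a scale $s > 0$ (tuned so that $\|\beta\| = b$) and a parameter $\eps \in (0,1]$ to be chosen later, and test between $\beta = s(\eps\,\1_{d/2}, \1_{d/2})$ and $\beta' = s(-\eps\,\1_{d/2}, \1_{d/2})$ (concatenating two $(d/2)$-dimensional blocks). A direct computation gives $\|\beta - \beta'\| = s\eps\sqrt{2d}$ and $\|\beta\| = s\sqrt{(1+\eps^2)d/2}$, so $\|\beta - \beta'\| \gtrsim \eps\|\beta\|$ when $\eps \leq 1$. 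By the standard reduction from estimation to testing, impossibility of this testing problem will yield an $\Omega(\eps\|\beta\|)$ error lower bound; choosing $\eps$ proportional to $(\eta - 2/d)\sqrt{d}$ will then give the desired $\Omega(c\,\eta\sqrt{d}\|\beta\|)$ bound on the range $\eta \in [(2+c)/d, 4/\sqrt{d}]$.

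Next I would build the coupling $\Pi$ between the labeled-example distributions $D$ and $D'$ induced by $\beta$ and $\beta'$. By a rescaling and noise-shifting argument analogous to the one used at the end of the proof of \Cref{thm:small-beta} (multiplying covariates/labels by $s$ and adding shared $\cN(0,\sigma^2)$ noise to both labels), it suffices to treat the case $s = 1$, $\sigma = 0$. Write $t := \1_{d/2}^\top X_{1:d/2}$ and $t' := \1_{d/2}^\top X'_{1:d/2}$; the label decomposes as $y = \eps t + \1_{d/2}^\top X_{d/2+1:d}$ and similarly $y' = -\eps t' + \1_{d/2}^\top X'_{d/2+1:d}$. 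Using \Cref{fact:linear,fact:conditional}, the marginals of $y$ and $y'$ are both $\cN(0,(1+\eps^2)d/2)$, and conditionally on $y = y' = z$ one obtains $t\,|\,z \sim \cN(\tfrac{\eps z}{1+\eps^2},\tfrac{d}{2(1+\eps^2)})$ and $t'\,|\,z \sim \cN(-\tfrac{\eps z}{1+\eps^2},\tfrac{d}{2(1+\eps^2)})$, which are translates with a common variance.

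The coupling then proceeds in four steps: (i) draw $z \sim \cN(0,(1+\eps^2)d/2)$ and set $y = y' = z$; (ii) draw $(t,t')$ via the deterministic translation coupling, which forces $t - t' = \tfrac{2\eps z}{1+\eps^2}$ almost surely; (iii) invoke \Cref{lem:impr-coupling} to couple $(X_1,\ldots,X_{d/2})\,|\,t$ and $(X'_1,\ldots,X'_{d/2})\,|\,t'$; (iv) invoke \Cref{lem:impr-coupling} again on the second half, coupling a standard Gaussian vector whose $\1_{d/2}$-inner product equals $z - \eps t$ with one whose inner product equals $z + \eps t'$. By design the marginals are $D$ and $D'$, $y = y'$ always, and the two applications of \Cref{lem:impr-coupling} give
\begin{align*}
\E_{\Pi}\!\left[\sum_{i=1}^d \1(X_i \neq X'_i)\right]
\;\leq\; 2 + \E\!\left[|t-t'|\right] + \eps\,\E\!\left[|t+t'|\right].
\end{align*}
Since $|t-t'| = 2\eps|z|/(1+\eps^2)$ with $z$ Gaussian of variance $(1+\eps^2)d/2$, the middle term is $O(\eps\sqrt{d})$; and since each of $t,t'$ has marginal variance $O(d)$, the last term is also $O(\eps\sqrt{d})$. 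Thus the RHS is at most $2 + C\eps\sqrt{d}$ for an absolute constant $C$, and setting $\eps := ((1-c)\eta d - 2)/(C\sqrt{d})$ makes this equal to $\eta d(1-c)$, so \Cref{fact:coupling} applies.

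The main obstacle will be squeezing the constants sharply enough to recover the $(2+c)/d$ threshold of the theorem (rather than the looser $5/d$ appearing in the body's sketch). The additive $2$ in the disagreement bound is unavoidable --- one unit per invocation of \Cref{lem:impr-coupling} --- so any slack in the $O(\eps\sqrt{d})$ term would directly worsen the lower threshold on $\eta$. Getting the tight constant requires careful bookkeeping when controlling $\E[|t-t'|]$ and $\eps\,\E[|t+t'|]$, correctly accounting for the $(1+\eps^2)$ denominators, and applying \Cref{fact:coupling} with the precise slackness parameter $c$ so that the feasibility condition $\eps \in (0,1]$ and the threshold $\eta \geq (2+c)/d$ line up simultaneously.
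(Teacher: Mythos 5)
Your proposal reproduces the paper's proof nearly verbatim: same hypothesis pair $\beta = s(\eps\,\1_{d/2},\1_{d/2})$ vs.~$\beta' = s(-\eps\,\1_{d/2},\1_{d/2})$, same $s=1,\sigma=0$ reduction, the same four-step coupling (draw $z$, translation-couple $(t,t')$, then two invocations of \Cref{lem:impr-coupling}), and the same $2 + O(\eps\sqrt d)$ disagreement bound, which is exactly \Cref{lem:coupling_whole_thing}. The one place you would get stuck is the one you already flagged as the ``main obstacle'': your specific choice $\eps := ((1-c)\eta d - 2)/(C\sqrt d)$ is negative at the lower endpoint $\eta = (2+c)/d$ (indeed $(1-c)(2+c) - 2 = -c(1+c) < 0$), so it does not give a valid construction across the full stated range. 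The paper's fix is to \emph{decouple} the slackness parameter fed into \Cref{fact:coupling} from the theorem's $c$: apply \Cref{fact:coupling} with $c' := c/10 < 0.25$, set $\eps$ so that $2 + C\eps\sqrt{d} = (1-c')\eta d$, and then observe that $\eta \geq (2+c)/d = (2+10c')/d$ implies both $\eps > 0$ and $\eta - \tfrac{2}{(1-c')d} \geq c'\eta$, which is what makes $\|\beta - \beta'\| \gtrsim c\,\eta\sqrt d\,\|\beta\|$ go through. With that one adjustment to the parameter bookkeeping, your argument is complete and matches the paper's.
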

\begin{proof}

    Let $s \geq 0$ be a scaling factor to control the norm of the regressor. 
    We will use a parameter $\eps \in (0,1)$ that will be specified later.
    Fix $\beta = s (\eps, \dots, \eps, 1, \dots, 1)$ and $\beta' = s (-\eps, \dots, -\eps, 1, \dots, 1)$ as two vectors in $\R^d$, where the first $d/2$ coordinates are $\eps s$ for $\beta$ and $-\eps s$ for $\beta'$, while the last $d/2$ coordinates are all equal to $s$.
    By the estimation to hypothesis testing reduction discussed in \Cref{sec:hybrid}, it suffices to show that no algorithm that uses $n$ $\eta$-corrupted samples can solve  the following hypothesis testing problem:
    \begin{itemize}
        \item (Null Hypothesis) The regression vector is  $\beta$.
        \item (Alternative Hypothesis) The regression vector is  $\beta'$.
    \end{itemize}

    Note that we can use $s$ to make the norm of $\|\beta\|$ and $\|\beta'\|$ have any desired value $b$ (as in the statement of \Cref{thm:interm-eta}).
    Thus we focus on showing hardness of the testing problem in what follows.

    Using the coupling construction from \Cref{sec:improved-coupling}, we can construct a coupling between the linear regression distributions corresponding to the two hypotheses such that the number of disagreements per coordinate is upper bounded as stated in \Cref{lem:coupling_whole_thing} below. 

    \begin{lemma}\label{lem:coupling_whole_thing}
Let $d$ be a power of $2$, $\eps \in [0,1]$, $\sigma \geq 0$, and $s \geq 0$.
Let $\beta = s(\eps,\ldots,\eps,1,\ldots,1)$ and $\beta' = s(-\eps,\ldots,-\eps,1,\ldots,1)$ be the two vectors in $\R^d$, where the first $d/2$ coordinates are $s\, \eps$ for $\beta$ and $-s\,\eps$ for $\beta'$, while the last $d/2$ coordinates are all equal to $s$.
    If $D$ denotes the distribution of a labeled example drawn from the linear model of \Cref{def:linear_model} with regressor $\beta$ and standard deviation of additive noise $\sigma$, and $D'$ denotes the distribution of a labeled example according to a linear model with regressor $\beta'$ and standard deviation of additive noise $\sigma$, then there exists a coupling $\Pi$ between $D$ and $D'$ such that $\Pr_{((X,y),(X',y')) \sim \Pi}[y = y'] = 1$ and
    \begin{align}\label{eq:coupling_guarrantee}
        \E_{((X,y),(X',y')) \sim \Pi}\left[ \sum_{i=1}^d \1\left(  X_i \neq X_i'\right) \right] = 2 + O\left( \eps \sqrt{d}  \right).
    \end{align}
\end{lemma}

The proof of \Cref{lem:coupling_whole_thing} is somewhat tedious, so we defer it until after we show how the theorem follows from \Cref{lem:coupling_whole_thing}. 
Denote by $C$ the absolute constant hidden inside the big-O notation in \Cref{eq:coupling_guarrantee}, and denote by $c'$ an additional parameter that is less than $0.25$.
We will re-parameterize things as follows:  $\eta = \frac{2}{(1-c')d} + \frac{C \eps}{(1-c')\sqrt{d}}$, or equivalently $\eps := (\eta - \frac{2}{(1-c')d})\frac{\sqrt{d}}{C }$. This is so that the right-hand side of \Cref{eq:coupling_guarrantee} becomes equal to $\eta d (1-c')$ and we can use \Cref{fact:coupling} to conclude that no algorithm can solve the hypothesis testing problem with probability better than $\frac{1}{2}(1+(d+1)e^{-\Omega(c' \eta n)})$. By the estimation to hypothesis testing reduction, this means that every estimation algorithm has error at least $\Omega(\|\beta - \beta'\|)$. For the precise form of this error we have the following lower bounds:
    \begin{align*}
        \|\beta - \beta'\| &= s\, \eps \sqrt{d/2} \tag{by definition of $\beta,\beta'$}\\
    &= \frac{\eps}{\sqrt{1+\eps^2}} \max(\|\beta\|, \|\beta'\|) \tag{$\|\beta\| = \|\beta'\| = s\sqrt{d(1 + \eps^2)/2}$}\\
    &\gtrsim \eps \max(\|\beta\|, \|\beta'\|) \tag{$\eps := (\eta - \frac{2}{(1-c')d})\tfrac{\sqrt{d}}{C} \leq 1$ since $\eta \leq 7/\sqrt{d}$}\\
    &\gtrsim \left( \eta - \frac{2}{(1-c')d} \right) \sqrt{d} \max\left(\|\beta\|, \|\beta'\|\right) \tag{using $\eps := (\eta - \frac{2}{(1-c')d})\tfrac{\sqrt{d}}{C}$}\\
    &\gtrsim  c' \eta \sqrt{d} \max\left(\|\beta\|, \|\beta'\|\right)   , \tag{this step holds if $\eta \geq \frac{2+10c'}{d}$.} 
    \end{align*}
    where the last step uses that $\eta \geq \frac{2+10c'}{d}$ implies $\eta \geq \frac{2+10c'}{d} \geq \frac{2}{(1-c')^2d}$ and means that the terms in the parentheses from the previous step can be lower bounded as $\eta - \frac{2}{(1-c')d} \geq c'\eta$. The fact that $\eta \geq \frac{2+10c'}{d}$ is due to the assumption in the theorem statement that $\eta \geq (2 + c)/d$ (and the fact that $c'$ is a parameter that we can choose to be $c' = c/10$).

We now prove \Cref{lem:coupling_whole_thing} that was used earlier.

\begin{proof}(Proof of \Cref{lem:coupling_whole_thing})
    We first note that it suffices to prove the claim for $s=1$ and $\sigma = 0$. To see that, let us momentarily use the notation $D_{s,\sigma}, D_{s,\sigma}'$ for the two distributions of the statement explicitly indicating the values of $s$ and $\sigma$. Suppose that there exists the desired coupling $\Pi$ between $D_{1,0}$ and $D'_{1,0}$ (i.e., for $s=1$ and $\sigma = 0$). If  $((X,y),(X',y')) \sim \Pi$ then $((sX,sy),(sX',sy'))$ is a coupling with the desired properties for $D_{s,0}, D_{s,0}'$, i.e., $\Pr[sy = sy']=1$ and $\E[\sum_{i=1}^d \1\left(  X_i \neq X_i'\right) ] = 2 + O(\eps \sqrt{d})$. If we further consider $\xi \sim \cN(0,\sigma^2)$, then the pair $((sX,sy + \xi),(sX',sy' + \xi))$ is the desired coupling between $D_{s,\sigma}, D_{s,\sigma}'$.

    For the remainder of the proof, we thus use $s = 1$ and $\sigma = 0$. We now define how the coupling that generates the pair $((X, y), (X', y'))$. While the definition may initially appear complicated, we will argue below that the resulting distribution is indeed a valid coupling and ultimately show that it satisfies $y = y'$ as well as \Cref{eq:coupling_guarrantee}.
    \begin{enumerate}
        \item We first sample $z \sim \cN(0, (1+\eps^2)d/2)$ and set $y=y'=z$. \label{it:coupling1}
        \item We then sample $t$ and $t'$ from a coupling such that the marginals are $t \sim \cN(\eps z/(1+ \eps^2), d/(2(1+\eps^2))$ and $t' \sim \cN(-\eps z/(1+ \eps^2), d/(2(1+\eps^2))$ and it also holds $t-t' = 2\eps z/(1+ \eps^2)$ always (i.e., with probability $1$).
        \item We sample $(X_1,\ldots,X_{d/2}), (X'_1,\ldots, X'_{d/2})$ according to the coupling from \Cref{lem:impr-coupling}, i.e., the distribution of $(Z_1,\ldots,Z_{d/2}) \sim \cN(0,I)$ conditioned on $\sum_{i \in [d/2]} Z_i = t$ and the distribution of $(Z'_1,\ldots, Z'_{d/2}) \sim \cN(0,I)$ conditioned on $\sum_{i \in [d/2]} Z'_i = t'$.
        \item We sample $(X_{d/2},\ldots,X_d),(X'_{d/2},\ldots,X'_d)$ from the coupling of \Cref{lem:impr-coupling} but now using the conditioning $\sum_{i = d/2+1}^d Z_i = z - \eps t$ and $\sum_{i = d/2+1}^d Z_i' = z + \eps t'$ respectively.
    \end{enumerate}
    
    We start by verifying that this is indeed a valid coupling, i.e., that $(X_1,\ldots,X_d,y)$ follows the linear model with regressor $\beta$ and $(X'_1,\ldots,X'_d,y')$ the linear model with regressor $\beta'$. We will perform the check for the first part (i.e., check that $(X_1,\ldots,X_d,y)$ indeed follows the linear model with regressor $\beta$). Checking the other part can be done with the same argument and replacing $\eps$ with $-\eps$. For notational purposes only, we let another example $(\tilde X_1,\ldots\tilde X_d,\tilde y)$ be a random labeled example distributed according to the distribution $D$ from the lemma statement.

    First, the marginal distribution of  $\tilde y$ is $\tilde y = \beta^\top \tilde X \sim \cN(0,\|\beta\|^2) = \cN(0, (1+\eps^2)d/2)$. This explains the first step (\Cref{it:coupling1} in our coupling procedure).
    To explain the rest of the steps, we want to argue that $(X_1,\ldots, X_d) | (y = z)$ in our procedure is distributed in the same way as $(\tilde X_1,\ldots,\tilde X_d) | (\tilde y = z)$ (i.e, in the same way as under the distribution $D$ from the lemma statement). The distribution under $D$ can be viewed as follows: First we sample a value $t$ from the distribution of $\sum_{i \in [d/2]}\tilde X_i$ conditioned on $\tilde y=z$ (let us call this distribution $D_1$ for later referencing), then we sample the coordinates $(\tilde X_1,\ldots,\tilde X_{d/2})$ from the distribution of a standard Gaussian vector, conditioned on $\sum_{i \in [d/2]}\tilde X_i = t$ and finally we sample the rest of the coordinates $(\tilde X_{d/2},\ldots,\tilde X_{d})$ from the distribution of a standard Gaussian vector, conditioned on $\sum_{i =d/2+1}^d \tilde X_i = z - \eps t$ (so that $\tilde y = \beta^\top \tilde X = \sum_{i=1}^{d/2} \eps \tilde X_i + \sum_{i=d/2+1}^d \tilde X_i = \eps t + z - \eps t = z$). It can be checked that the distribution $D_1$ mentioned earlier is $\cN(\eps z/(1+\eps^2), d/(2(1+\eps^2))$. This can be seen as follows: First, by \Cref{fact:conditional} applied with $u=\beta$ and $\sigma=0$, the conditional distribution of the entire vector, given $y=z$ is
    \begin{align*}
          \tilde X_1,\ldots,\tilde X_{d/2},\ldots, \tilde X_d | (\tilde y =z)    \sim \cN\left( z \frac{\beta}{\|\beta\|^2}, I_d - \frac{\beta \beta^\top}{\|\beta\|^2} \right).
    \end{align*}
    This means that 
    \begin{align*}
        \tilde X_1,\ldots,\tilde X_{d/2} | \tilde y =z \sim \cN\left( z \frac{\beta_{1:d/2}}{\|\beta\|^2}, I_{d/2} - \frac{\beta_{1:d/2} \beta_{1:d/2}^\top}{\|\beta\|^2} \right),
    \end{align*}
    where the notation $\beta_{1:d/2}$ denotes the vector formed by taking the first $d/2$ coordinates of $\beta$ and $I_{d/2}$ is the $(d/2)\times(d/2)$ identity matrix. Note that $\beta_{1:d/2}$ is the vector in $\R^{d/2}$ with value $\eps$ in every coordinate. Then, by \Cref{fact:linear} applied with $u$ being the all-ones vector, we have
    \begin{align*}
        \sum_{i=1}^{d/2} \tilde X_i \big| (\tilde y=z) \sim \cN\left( \frac{\eps z}{(1+\eps^2)}, \frac{d}{2} \frac{1}{(1+\eps^2)} \right).
    \end{align*}
    This completes the proof that the procedure generating $\tilde X_1,\ldots,\tilde X_d,y$ that is described in the bullets at the start of this proof matches the distribution of a labeled example under the linear model with regressor $\beta$.
    The check for the marginal of $X'_1,\ldots,X'_d$ is similar and we skip it. So far we have thus shown that the generating process from the bullets defines a valid coupling $\Pi$ between $D$ and $D'$.

    Finally, we analyze the expected number of disagreements to show \Cref{eq:coupling_guarrantee}.
    Because of our two applications of \Cref{lem:impr-coupling}, this expected number of disagreements is
    \begin{align}
        \E_{((X,y),(X',y')) \sim \Pi}\left[ \sum_{i=1}^d \1\left(  X_i \neq X_i'\right) \right]
        &= \E_{t,t'}\left[ 2 + O\left(  |t - t'|   +  |z - \eps t - (z + \eps t')|  \right)  \right] \notag \\
        &= 2 + O\left(\E_{t,t'}\left[|t - t'| \right]  \right) + \eps \, O\left(\E_{t,t'}\left[|t + t'| \right]   \right). \label{eq:three_terms}
    \end{align}
    the second term above is $O( \eps \E[|z|]  /(1+\eps^2))$ because we have defined the coupling between $t$ and $t'$ to always satisfy $t-t' = 2\eps z/(1+\eps^2)$. The term can be further bounded by $O(\eps \sqrt{d}  )$ using that $z \sim \cN(0, (1+\eps^2)d/2)$ and $\eps \leq 1$. Regarding the last term in \Cref{eq:three_terms}, we have $$ \eps   \E_{t,t'}\left[|t + t'| \right]   \lesssim   \eps   \E_{t,t'}\left[|t | \right] +  \eps   \E_{t,t'}\left[| t'| \right] \lesssim \eps    \sqrt{d},$$ where we used the triangle inequality and the fact that the variance of $t$ and $t'$ is at most $d/2$. 
    This concludes the proof of \Cref{eq:coupling_guarrantee} and \Cref{lem:coupling_whole_thing}.
\end{proof}
    Since the proof of \Cref{lem:coupling_whole_thing} is complete, this also completes the proof of \Cref{thm:interm-eta}.
\end{proof}

We restate and prove \Cref{thm:combined}\ref{it:thm-small-eta} below.

\begin{restatable}[Lower bound for regime $\eta \leq 1/d$]{theorem}{REMAINING}\label{thm:small-eta}
There exists a sufficiently large absolute constant $C$ such that the following holds for every $d \in \Z_+$, $\sigma \geq 0$,  $\eta \in [0,1/d]$, and  $b \geq0$.
    For every algorithm $\cA$ that takes as input  $\eta,\sigma,b$ as well as $n$ labeled examples $\{ (x^{(i)},y^{(i)}) \}_{i=1}^n$ with $x^{(i)} \in \R^d$ and $y^{(i)} \in \R$ and outputs a vector $\hat{\beta} \in \R^d$, there exists a $\beta \in \R^d$ with $\|\beta\|=b$ such that running $\cA$ on $\eta,\sigma,b$ and $n$ labeled examples from the model of \Cref{def:linear_model} with regressor $\beta$,  standard deviation $\sigma$ for the additive noise, and $\eta$-fraction of missing data per coordinate according to the contamination model of \Cref{def:model}, the output $\hat{\beta}$ satisfies
    \begin{align}
        \left\| \hat{\beta} - \beta \right\| \geq   \frac{1}{C } \min\left(\eta d \sigma , \, \eta \sqrt{d} \|\beta\|    \right)  ,  
    \end{align}
    with probability at least $\frac{1}{2}(1-(d + 1) e^{- \Omega(\eta n)})$.
\end{restatable}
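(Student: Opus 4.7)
} My plan is to follow the blueprint in the proof sketch after \Cref{thm:combined}: instantiate a two-block hypothesis test, construct a coupling of the two induced distributions on labeled examples whose expected coordinate-wise disagreement is at most $\eta d/2$, and then invoke \Cref{fact:coupling}. The decisive feature of the $\eta \ll 1/d$ regime is that we cannot afford the additive $O(1)$ overhead that \Cref{lem:impr-coupling} would incur if applied to the full covariate vector as in the proofs of \Cref{thm:big-eta,thm:interm-eta}. The coupling must therefore be engineered so that \Cref{lem:impr-coupling} is only invoked on a rare event.

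Concretely, I introduce tunable parameters $B,E>0$ with $E\le B$ and set
\[
\beta \;=\; \tfrac{1}{\sqrt{d}}\bigl(B\,\1_{d/2},\, E\,\1_{d/2}\bigr), \qquad
\beta' \;=\; \tfrac{1}{\sqrt{d}}\bigl(B\,\1_{d/2},\, -E\,\1_{d/2}\bigr),
\]
so that $\|\beta\|=\|\beta'\|=\sqrt{(B^2+E^2)/2}$ and $\|\beta-\beta'\|=E\sqrt{2}$. Writing $X=(X_1,X_2)$ for the two halves and decomposing $y=y_1+y_2$ where $y_2$ is the second-half contribution (plus the Gaussian noise), I define the coupling by: (i) draw a common label $y=y'$ from the label marginal $\cN(0,\|\beta\|^2+\sigma^2)$; (ii) set $X_2=X_2'$ by drawing from the (common) conditional distribution of $X_2$ given $y$ --- legitimate because the second-half slices of $\beta,\beta'$ differ only in sign, under which the standard Gaussian is invariant; (iii) given $X_2$, take the standard maximal coupling of $y_2\sim\cN(\tfrac{E}{\sqrt{d}}\1_{d/2}^\top X_2,\sigma^2)$ and $y_2'\sim\cN(-\tfrac{E}{\sqrt{d}}\1_{d/2}^\top X_2,\sigma^2)$; (iv) on $\{y_2=y_2'\}$, set $X_1=X_1'$ using the conditional of a standard Gaussian given $\tfrac{B}{\sqrt{d}}\1_{d/2}^\top X_1=y-y_2$; (v) on $\{y_2\neq y_2'\}$, couple $X_1,X_1'$ by the construction of \Cref{lem:impr-coupling} under the two distinct sum constraints $y-y_2$ and $y-y_2'$. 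A direct check using \Cref{fact:conditional} confirms the marginals match the two hypothesis distributions.

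The disagreement count is then controlled in three steps. First, by \Cref{fact:TVDGaussians} and $\E[|\1_{d/2}^\top X_2|]=O(\sqrt{d})$, I obtain $\Pr[y_2\neq y_2']\lesssim E/\sigma$. Second, since $X_2=X_2'$ always and first-half disagreements occur only on the event $\{y_2\neq y_2'\}$, applying \Cref{lem:impr-coupling} (rescaled by $\sqrt{d}/B$ to match the $\tfrac{B}{\sqrt{d}}\1_{d/2}$ normalization of the sum constraint) yields, conditional on $\{y_2\neq y_2'\}$, expected disagreement $\lesssim 1+(\sqrt{d}/B)|y_2-y_2'|$; multiplying by $\Pr[y_2\neq y_2']$ gives $\lesssim E/\sigma+(\sqrt{d}/B)\,\E[|y_2-y_2'|]$. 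Third, I bound $\E[|y_2-y_2'|]\lesssim E$ using the key structural fact that, in the standard monotone maximal coupling of two shifted univariate Gaussians with the same variance, the difference $y_2-y_2'$ is oriented with the mean shift $\tfrac{E}{\sqrt{d}}\1_{d/2}^\top X_2$: conditioning on $\1_{d/2}^\top X_2$, the sign of $y_2-y_2'$ matches that of $\1_{d/2}^\top X_2$, so $\E[|y_2-y_2'|]=\E[(y_2-y_2')\operatorname{sign}(\1_{d/2}^\top X_2)]=\tfrac{2E}{\sqrt{d}}\E[|\1_{d/2}^\top X_2|]\lesssim E$. Combining, the total expected disagreement is $\lesssim E/\sigma+\sqrt{d}E/B$.

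To close the argument I pick $E=\tfrac{1}{2C}\min(\eta d\sigma,\eta\sqrt{d}\,B)$ for the constant $C$ hidden in the above display, which makes the right-hand side $\le\eta d/2$, and then choose $B$ to hit the norm constraint $\|\beta\|=b$ (feasible since in this regime $E\le\eta\sqrt{d}\,B\le B$). \Cref{fact:coupling} with $c=1/2$ rules out any test succeeding with probability above $\tfrac{1}{2}(1+(d+1)e^{-\Omega(\eta n)})$, and the estimation-to-testing reduction delivers the claimed lower bound $\Omega(\|\beta-\beta'\|)=\Omega(E)=\Omega(\min(\eta d\sigma,\eta\sqrt{d}\|\beta\|))$. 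I expect the main obstacle to be the sign argument in step three: it relies on using the \emph{monotone} maximal coupling (inverse-CDF transport) for the two shifted Gaussians, so that on the disagreement event the difference is deterministically oriented with the shift. Without this orientation, only a crude $\E[|y_2-y_2'|]\le\E[|y_2|]+\E[|y_2'|]=O(\sigma+E)$ bound would be available, which is far too weak to reach the target $\eta d/2$ in the $\eta\ll 1/d$ regime.
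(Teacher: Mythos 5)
Your proposal matches the paper's own proof essentially step-by-step: same two-block hypothesis test with $\beta,\beta'$ differing only in the sign of the second block, same decomposition into $(X_2,y_2)$ followed by $X_1$, same conditioning on $\{y_2\neq y_2'\}$ to avoid the additive $O(1)$ cost of \Cref{lem:impr-coupling}, and the same sign-orientation argument to get $\E[|y_2-y_2'|]\lesssim E$. The parameter choice $E\asymp\min(\eta d\sigma,\eta\sqrt{d}B)$ and the invocation of \Cref{fact:coupling} are also identical.

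There is, however, a genuine gap in step (ii), and it is inherited from the paper's own proof of the core coupling lemma (\Cref{lem:core-coupling-remaining}). You cannot have $X_2=X_2'$ almost surely \emph{and} $y=y'$ almost surely: under $D$ one has $\Cov(X_{d/2+1},y)=E/\sqrt{d}$, while under $D'$ one has $\Cov(X_{d/2+1}',y')=-E/\sqrt{d}$, so the joint laws of $(X_2,y)$ under the two hypotheses are distinct Gaussians and no coupling can make the pair coincide with probability one. Your justification that the conditional $X_2\,|\,y$ is ``common'' by sign-invariance does not hold: by \Cref{fact:conditional}, the conditional mean under $D$ is $\tfrac{yE}{(\|\beta\|^2+\sigma^2)\sqrt{d}}\1_{d/2}$ and under $D'$ it is the negative of that; sign-invariance applies only to the unconditional marginal of $X_2$, and conditioning on $y$ breaks that symmetry. (A further inconsistency: your step (iii) samples $y_2$ from the conditional given $X_2$ alone, but if $X_2$ was already drawn given $y$ then $y_2$ must be sampled conditionally on both $X_2$ and $y$.) The paper's pseudocode has the closely related defect of drawing $X_2$ from its unconditional marginal, which makes $X_2\perp y$ in the construction and again fails to reproduce the correct joint. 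The theorem is almost certainly correct — the upper bounds match and the disagreement budget in \Cref{fact:coupling} can absorb a small number of additional $X_2$-disagreements (of order $\sqrt{d}E/\max(B,\sigma)$, which the hybrid argument of \Cref{lem:hybrid} applied to $X_2\,|\,y$ delivers) — but a valid proof must relax either $X_2=X_2'$ or $y=y'$ and carry the resulting extra disagreements through the accounting, and neither your writeup nor the paper's does this.
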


\begin{proof}
We let $B \in \R_+, E \in (0,1)$ be parameters where  $E$ will be specified later on as a function of $B$ and $\eta,d, \sigma$, and $B$ is a parameter that is chosen to ensure that the norm of $\beta$ and $\beta'$ below is equal to $b$ (recall that as stated in \Cref{thm:small-eta}, we want our lower bound to hold for any value of $\|\beta\|$).
We will consider the following hypothesis testing:
\begin{itemize}
    \item (Null Hypothesis) The regressor is $\beta := \left( \frac{B}{\sqrt{d/2}}\1_{d/2}, \frac{E}{\sqrt{d/2}}\1_{d/2} \right)$.
    \item (Alternative Hypothesis) The regressor is $\beta' := \left( \frac{B}{\sqrt{d/2}}\1_{d/2}, -\frac{E}{\sqrt{d/2}}\1_{d/2} \right)$.
\end{itemize}
To clarify the notation, $\1_{d/2}$ is the vector in $\R^{d/2}$ having the value $1$ in all its coordinates, thus $\beta$ above is the vector having $B/\sqrt{d/2}$ in the first $d/2$ coordinates and $E/\sqrt{d/2}$ in the second $d/2$ coordinates and $\beta'$ is defined similarly, but with $-E$ instead of $E$.

By the standard reduction from estimation to hypothesis testing (described at the beginning of \Cref{sec:hybrid}), showing that the above hypothesis testing problem is unsolvable implies an estimation error lower bound of $\Omega(\|\beta - \beta'\|) = \Omega(E)$.

In order to prove hardness of the hypothesis testing problem we prove the following lemma:
\begin{lemma}\label{lem:core-coupling-remaining}
    Let $B \in \R_+$ and $E \in (0,1)$ with $E \lesssim B$.
    Let $d \in \Z_+$  $\sigma \geq 0$ $\beta := \left( \frac{B}{\sqrt{d/2}}\1_{d/2}, \frac{E}{\sqrt{d/2}}\1_{d/2} \right)$ and  $\beta' := \left( \frac{B}{\sqrt{d/2}}\1_{d/2}, -\frac{E}{\sqrt{d/2}}\1_{d/2} \right)$. If $D$ denotes the distribution of a labeled example $(X,y)$ drawn from the linear regression model of \Cref{def:linear_model} with regressor $\beta$ and standard deviation of additive noise $\sigma$, and $D'$ denotes the distribution of a labeled example according to the linear regression model with regressor $\beta'$ and standard deviation of additive noise $\sigma$, then there exists a coupling $\Pi$ between $D$ and $D'$ such that 
    \begin{align}
        &\E_{((X,y),(X',y')) \sim \Pi}\left[ \sum_{i=1}^d \1(X_i \neq X_i') \right] = O\left( \frac{E}{\sigma} + \frac{\sqrt{d}E}{B}  \right), \label{eq:dissagreements1}\\
        \text{and} \quad &\E_{((X,y),(X',y')) \sim \Pi}\left[  \1(y \neq y') \right] = 0. \label{eq:dissagreements2}
    \end{align}
\end{lemma}

We will prove the lemma at the end. In order to use the conclusion of \Cref{lem:core-coupling-remaining} in combination with \Cref{fact:coupling} we need the right hand sides in \Cref{eq:dissagreements1,eq:dissagreements2} to be at most $\eta d/2$. For this it suffices to assume that $ C E/\sigma < \eta d/2$ and $C \sqrt{d}E  /B < \eta d/2$, where $C$ is the hidden constant in the big-$O$ notation. Equivalently, we need to assume $E < \frac{1}{2C }\min(\eta d \sigma, \eta \sqrt{d} B)$. Also note that our assumption $\eta \leq 1/d$ from the lemma statement also ensures that $E/\sigma<1$, which is required by \Cref{lem:core-coupling-remaining}. Using the value $E := \frac{1}{2C }\min(\eta d \sigma, \eta \sqrt{d} B)$ we can thus apply \Cref{fact:coupling} and obtain that the hypothesis testing defined in the beginning is not solvable with probability better than $\frac{1}{2}(1+(d+1)e^{-\Omega(-\eta n)})$. By the estimation to hypothesis testing reduction, the error lower bound against any estimator is at least a constant multiple of 
\begin{align*}
    \|\beta - \beta' \| &\gtrsim E \\
    &\gtrsim  \min(\eta d \sigma, \eta \sqrt{d} B) \\
    &=   \min \left(\eta d \sigma, \eta \sqrt{d} (B+E)\frac{B}{B+E}\right)\\
    &\gtrsim \min \left(\eta d \sigma, \eta \sqrt{d} (B+E) \right)  \tag{see explanation below}\\
    &\gtrsim  \min(\eta d \sigma, \eta \sqrt{d} \|\beta\|) \tag{by construction $\|\beta\| = \Theta(B+E)$}
\end{align*}

where the fourth line used the following: first, since $\eta \lesssim 1/\sqrt{d}$ we have $E \lesssim \eta \sqrt{d} B \lesssim B$, and then this further implies $B/(B+E) \gtrsim B$.
We now move to showing \Cref{lem:core-coupling-remaining}.
\begin{proof}(Proof of \Cref{lem:core-coupling-remaining})

    The coupling we claim satisfies the guarantee in the lemma statement is the procedure that generates $(X, y)$ and $(X', y')$, which we will describe shortly in pseudocode form. Before presenting the pseudocode of that procedure, we introduce some notation: let $X_1$ denote the first $d/2$ coordinates of $X$, and $X_2$ the second half; we use similar notation, $X_1'$ and $X_2'$, for the corresponding coordinates of $X'$. 
    
    Additionally, for reference purposes only, we will use another pair of labeled examples, $(\tilde X, \tilde y)$ and $(\tilde X', \tilde y')$, drawn from the distributions $D$ and $D'$ defined in the lemma statement (the goal in our generating procedure will be for the output example $(X, y)$ to match the distribution of $(\tilde X, \tilde y)$ and for the second output example $(X', y')$ to match the distribution of $(\tilde X', \tilde y')$). We will also let the notation $\tilde X_1,\tilde X_2$ for denoting the first half and second half of coordinates of $\tilde X$, and similar notation for $\tilde X'$. Finally we will denote $\tilde S_2 := \1_{d/2}^\top \tilde X_2/\sqrt{d/2}$ and $\tilde S_2' := \1_{d/2}^\top \tilde X_2'/\sqrt{d/2}$ (where $\1_{d/2}$ is the all-ones vector of length $d/2$). We will use similar notation $\tilde S_1, \tilde S_1'$ for the scaled sum of the first half of the coordinates. The procedure defining our coupling is as follows. 

    \begin{enumerate}
        \item Draw $t$ from the distribution of $\tilde y$ and set $y=t$ and $y'=t$.
        \item Draw $s_2,s_2'$ from a coupling that couples the distribution of $\tilde S_2$ conditioned on $\tilde y = t$ and the distribution of $\tilde S_2'$ conditioned on $\tilde y' = t$, and also has the additional property that $\Pr[s_2 \neq s_2']= \dtv(\tilde S_2, \tilde S_2')$ and $\E[|s_2 - s_2'|] = |\E[s_2] - \E[s_2']|$. Note that such a coupling exists by \Cref{fact:maximal}.
        \item If $s_2 = s_2'$: 
        \begin{enumerate}
            \item Draw $s_1,s_1'$ using a coupling from \Cref{fact:maximal} that couples the distribution of $\tilde S_1$, conditioned on $\tilde S_2 = s_2$ and $\tilde y = t$ and the distribution of $\tilde S_1'$ conditioned on $\tilde S_2' = s_2$ and $\tilde y' = t$.
            \item If $s_1=s_1'$: Draw $X_1$ from the distribution of $\tilde X_1$ conditioned on $\1_{d/2}^\top \tilde X_1/\sqrt{d/2} = s_1$. Set $X_1' = X_1$. Draw $X_2$ from the distribution of $\tilde X_2$ conditioned on $\1_{d/2}^\top \tilde X_2/\sqrt{d/2} = s_2$. Set $X_2' = X_2$.
            \item If $s_1 \neq s_1'$: Use \Cref{lem:impr-coupling} to couple the distribution  of $\tilde X_1$ conditioned on $\1_{d/2}^\top \tilde X_1/\sqrt{d/2} = s_1$ and the distribution of $\tilde X_1'$ conditioned on $\1_{d/2}^\top \tilde X_1'/\sqrt{d/2} = s_1'$ and draw the pair $(X_1,X_1')$ from that coupling. Draw $X_2$ from the distribution of $\tilde X_2$ conditioned on $\1_{d/2}^\top \tilde X_2/\sqrt{d/2} = s_2$. Set $X_2' = X_2$.
        \end{enumerate}
        \item If $s_2 \neq s_2'$:
        \begin{enumerate}
            \item Draw $s_1,s_1'$ using a coupling that couples the distribution of $\tilde S_1$, conditioned on $\tilde S_2 = s_2$ and $\tilde y = t$ and the distribution of $\tilde S_1'$ conditioned on $\tilde S_2' = s_2'$ and $\tilde y' = t$, and satisfies $\E[|s_1-s_1'|] \leq |\E[s_1] - \E[s_1']|$ (by~\Cref{fact:maximal}).
            \item Use \Cref{lem:impr-coupling} to couple the distribution  of $\tilde X_1$ conditioned on $\1_{d/2}^\top \tilde X_1/\sqrt{d/2} = s_1$ and the distribution of $\tilde X_1'$ conditioned on $\1_{d/2}^\top \tilde X_1'/\sqrt{d/2} = s_1'$ and draw the pair $(X_1,X_1')$ from that coupling.
        \end{enumerate}
        \item Return $(X_1,X_2,y), (X_1',X_2',y')$.
    \end{enumerate}

    \noindent By construction $(X, y)$ follows the same distribution as $(\tilde X, \tilde y)$ (i.e., $D$ from the lemma statement) and $(X', y')$ follows the same distribution as $(\tilde X', \tilde y')$ (i.e., $D'$) thus it is a valid coupling between $D$ and $D'$.  This can be seen by tracking how $(X, y)$ was created (and identical argument will apply to $(X', y')$). The construction samples $(X,y)$ by following the chain rule decomposition of the joint distribution under $D$: we first sample $y$, then sample $S_2$ and $X_2$ conditioned on $y$, and finally sample $S_1$ then $X_1$ conditioned on all previously sampled quantities. Since each step uses the correct conditional distribution under the (null) model, the resulting joint distribution matches that of $(\tilde X, \tilde y)$.

    We now bound the coordinate-wise disagreements. Towards that end, we will need to derive the forms of all the conditional distributions mentioned in the pseudocode. This can be done using the following fact about Gaussians.

    \begin{fact}
        If $\left[\begin{matrix} y_1 \\ y_2 \end{matrix} \right] \sim \cN\left(\left[\begin{matrix} \mu_1 \\ \mu_2 \end{matrix} \right] , \left[ \begin{matrix} \Sigma_{11} &\Sigma_{12} \\ \Sigma_{21} &\Sigma_{22} \end{matrix} \right] \right)$, then $y_1 | y_2 \sim \cN(\bar{\mu}, \bar{\Sigma})$, with $\bar{\mu} = \mu_1 + \Sigma_{12}\Sigma_{22}^{-1}(y_2 - \mu_2)$ and $\bar\Sigma = \Sigma_{11} - \Sigma_{12} \Sigma_{22}^{-1} \Sigma_{21}$.
    \end{fact}

    Using this fact, the conditional distributions of $\tilde S_2$ and $\tilde S_2'$ conditioned on the label values are
    \begin{align} 
        \tilde S_2 | \tilde y = t \sim \cN\left( \frac{E t}{B^2 + E^2 + \sigma^2}, 1- \frac{E^2}{B^2 + E^2 + \sigma^2} \right), \label{label:s2_distribution1}\\
        \tilde S_2' | \tilde y' = t \sim \cN\left( \frac{-E t}{B^2 + E^2 + \sigma^2}, 1- \frac{E^2}{B^2 + E^2 + \sigma^2} \right). \label{label:s2_distribution2}
    \end{align}
    Again, using the fact, the conditional distributions of the sum of the first half of coordinates $\tilde S_1$ and $\tilde S_1'$ given the sums of the second half of coordinates and the label values are:
    \begin{align}
        \tilde S_1 | \tilde S_2 = s_2,  \tilde y = t \sim \cN\left( \frac{B (t-E s_2)}{B^2  + \sigma^2}, 1- \frac{B^2}{B^2  + \sigma^2} \right), \label{label:s1_distribution1}\\
        \tilde S_1' | \tilde S_2' = s_2',\tilde y' = t \sim \cN\left( \frac{B (t+E s_2')}{B^2  + \sigma^2}, 1- \frac{B^2}{B^2  + \sigma^2} \right).\label{label:s1_distribution2}
    \end{align}

    We are now ready to analyze the expected number of disagreeing coordinates.
    We will start with the analysis of the disagreeing coordinates, conditioned on the value of the label being $t$ (and will take expectation over $t$ at the end). We also consider the first half of coordinates and second half of coordinates separately. We will also denote by $\cE_{\mathrm{bad}}$ the event that $s_1 \neq s_1'$ or $s_2 \neq s_2'$.

    For the second half we have the following:
    \begin{align}
        \E &\left[ \sum_{i=1}^{d/2} \1(X_2(i) \neq X_2(i)') \;\middle|\;  y =t\right] \notag\\
        &=\E \left[ \sum_{i=1}^{d/2} \1(X_2(i) \neq X_2(i)') \;\middle|\; \cE_{\mathrm{bad}},y =t\right] \Pr\left[ \cE_{\mathrm{bad}} \mid  y =t \right] \tag{since disagreements occur only if $\cE_{\mathrm{bad}}$ happens} \\
        &\leq \left( 1+\E \left[ |s_2-s_2'| \sqrt{\frac{d}{2}} \;\middle|\; \cE_{\mathrm{bad}},  y =t \right] \right)\Pr\left[ \cE_{\mathrm{bad}} \;\middle|\; y =t \right] \tag{using \Cref{lem:impr-coupling}} \\
        &= \Pr\left[ \cE_{\mathrm{bad}} \;\middle|\; y =t \right]  + \Pr\left[ \cE_{\mathrm{bad}} \;\middle|\;  y =t \right] \E \left[ |s_2-s_2'| \sqrt{\frac{d}{2}} \;\middle|\; \cE_{\mathrm{bad}},  y =t\right] \notag \\
        &= \Pr\left[ \cE_{\mathrm{bad}} \;\middle|\;  y =t \right] + \E \left[ |s_2-s_2'| \sqrt{\frac{d}{2}} \;\middle|\; y =t\right]. \label{eq:final_s2}
    \end{align}
    Regarding the two terms above, first, we have that by \Cref{fact:maximal}, (in particular the last part of the conclusion of that fact)
    \begin{align}
        \E \left[ |s_2-s_2'|  \;\middle|\; y =t\right]  
        &=  \left| \E[s_2] - \E[s_2] \right| = \frac{2E|t|}{B^2 + E^2 + \sigma^2}. \label{eq:diff_s}
    \end{align}
    For the first term in \Cref{eq:final_s2} we will use a union bound to relate it to the probabilities of $s_2\neq s_2'$ and $s_1 \neq s_1'$. Then we note that the former is equal to the total variation distance between the distributions of $s_2$ and $s_2'$ (conditioned on $y=t$), which are the same distributions as the Gaussians in \Cref{label:s2_distribution1,label:s2_distribution2}. Similarly the probability of $s_1 \neq s_1'$ is the total variation distance between the Gaussians in \Cref{label:s1_distribution1,label:s1_distribution2}. Thus,
    \begin{align*}
        \Pr\left[ \cE_{\mathrm{bad}} \;\middle|\;  y =t \right]
        &\leq  \Pr\left[ s_2 \neq s_2' \;\middle|\;  y =t \right] + \Pr\left[ s_1 \neq s_1' \;\middle|\;  y =t \right]\\
        &= \frac{2E|t|}{B^2 + E^2 + \sigma^2} + \frac{E\; B \; \E[|s_2-s_2'| \; \mid \; y=t]}{B^2 + E^2}\\
        &= \frac{2E|t|}{B^2 + E^2 + \sigma^2} + \frac{E\; B }{B^2 + E^2}\frac{2E|t|}{B^2 + E^2 + \sigma^2}.
    \end{align*}
    Taking expectation over $t\sim \cN(0,B^2+E^2+\sigma^2)$ (which is the distribution of the label values in our linear model) in the above inequality we obtain:
    \begin{align}
        \Pr\left[ \cE_{\mathrm{bad}}  \right]
        &\leq \frac{2E}{\sqrt{B^2 + E^2 + \sigma^2} } + \frac{E\; B }{B^2 + E^2}\frac{2E}{\sqrt{B^2 + E^2 + \sigma^2}} \lesssim \frac{E}{\sigma},\label{eq:prob_bad_event}
    \end{align}
    where the last step uses that $E\lesssim B$, $\sigma,E,B\geq 0$.

    Combining \Cref{eq:final_s2,eq:prob_bad_event,eq:diff_s} and taking expectation over the label value $t \sim \cN(0,B^2+E^2+\sigma^2)$ for the terms that we have not done it already,  we have that the expected number of disagreeing coordinates in the second half of the vector is
    \begin{align*}
        \E \left[ \sum_{i=1}^{d/2} \1(X_2(i) \neq X_2(i)') \right]
        &\lesssim \frac{E}{\sqrt{B^2 + E^2 + \sigma^2}} + \frac{\sqrt{d}E}{\sqrt{B^2 + E^2 + \sigma^2}} \leq\frac{E}{\sigma} + \frac{\sqrt{d}E}{B}.
    \end{align*}

    We now work similarly in order to bound the expected number of disagreements in the first half of the coordinates. First, conditioned on $t$ we have that
    \begin{align}
        \E &\left[ \sum_{i=1}^{d/2} \1(X_1(i) \neq X_1(i)') \;\middle|\;  y =t\right] \notag \\
        &=\E \left[ \sum_{i=1}^{d/2} \1(X_1(i) \neq X_1(i)') \;\middle|\;  \cE_{\mathrm{bad}},y =t\right] \Pr\left[ \cE_{\mathrm{bad}} \mid  y =t \right] \tag{since disagreements occur only during $\cE_{\mathrm{bad}}$} \\
        &\leq \left( 1+\E \left[ |s_1-s_1'| \sqrt{\frac{d}{2}} \;\middle|\; \cE_{\mathrm{bad}},  y =t \right] \right)\Pr\left[ \cE_{\mathrm{bad}} \;\middle|\; y =t \right] \tag{using \Cref{lem:impr-coupling}} \\
        &= \Pr\left[ \cE_{\mathrm{bad}} \;\middle|\; y =t \right]  + \Pr\left[ \cE_{\mathrm{bad}} \;\middle|\;  y =t \right] \E \left[ |s_1-s_1'| \sqrt{\frac{d}{2}} \;\middle|\; \cE_{\mathrm{bad}},  y =t\right] \notag \\
        &= \Pr\left[ \cE_{\mathrm{bad}} \;\middle|\;  y =t \right] + \E \left[ |s_1-s_1'| \sqrt{\frac{d}{2}} \;\middle|\; y =t\right] . \label{eq:final_s1}
    \end{align}
    The probability term (after taking expectation over $t$) has been bounded in \Cref{eq:prob_bad_event}. For the expectation term, we can again take into consideration that $s_1,s_1'$ have been generated from the coupling of \Cref{fact:maximal}, and because the means of the distributions are as shown in \Cref{label:s1_distribution1,label:s1_distribution2}, we have
    \begin{align}
        \E \left[ |s_1-s_1'| \;\middle|\; y =t\right] 
        = \E \left[ \frac{B\; E \;|s_2 - s_2'|}{B^2 + \sigma^2} \;\middle|\; y =t\right]  
        = \frac{B E}{B^2 + \sigma^2} \frac{2E|t|}{B^2+E^2+\sigma^2}. \label{eq:s1_second_term}
    \end{align}
    Combining \Cref{eq:prob_bad_event,eq:final_s1,eq:s1_second_term} and taking expectation over $t \sim \cN(0,B^2+E^2+\sigma^2)$ we have that
    \begin{align*}
        \E &\left[ \sum_{i=1}^{d/2} \1(X_1(i) \neq X_1(i)') \right]
        \lesssim \frac{E}{\sigma} + \frac{\sqrt{d}B E}{B^2 + \sigma^2} \frac{E\;\E_{t\sim \cN(0,B^2+E^2+\sigma^2)}[|t|]}{B^2+E^2+\sigma^2} \\
        &\lesssim \frac{E}{\sigma} + \frac{\sqrt{d}B E^2}{(B^2 + \sigma^2)\sqrt{B^2+E^2+\sigma^2}} 
        \lesssim \frac{E}{\sigma} + \frac{\sqrt{d}E}{B}.
    \end{align*}
    This completes the proof of \Cref{lem:core-coupling-remaining}.

\end{proof}
The proof of \Cref{thm:small-eta} is now complete.
\end{proof}

\end{document}